\newcommand\smallO{
  \mathchoice
    {{\scriptstyle\mathcal{O}}}
    {{\scriptstyle\mathcal{O}}}
    {{\scriptscriptstyle\mathcal{O}}}
    {\scalebox{.7}{$\scriptscriptstyle\mathcal{O}$}}
  }
\def\?[#1]{\textbf{[#1]}\marginpar{\Large{\textbf{??}}}}
\def\smallsection#1{\smallskip\noindent\textbf{#1}.}
\let\epsilon=\varepsilon 
\newcommand{\RR}{{\mathbb R}}
\newcommand{\ZZ}{{\mathbb Z}}
\newcommand{\CC}{{\mathbb C}}
\newtheorem{theo}{Theorem}
\newtheorem{prop}{Proposition}[section]
\newtheorem{lemm}[prop]{Lemma}
\newtheorem{rem}{Remark}
\numberwithin{equation}{section}
\DeclareMathOperator{\Spec}{Spec}
\let\Im=\Imag
\let\Re=\Real
\title[Spectral gap for networks of oscillators]{The optimal spectral gap for regular and disordered harmonic networks of oscillators}
\author{Simon Becker}
\email{simon.becker@damtp.cam.ac.uk}
\address{DAMTP, University of Cambridge, Wilberforce Rd, Cambridge CB3 0WA, UK}
\author{Angeliki Menegaki}
\email{angeliki.menegaki@dpmms.cam.ac.uk}
\address{DPMMS, University of Cambridge, Wilberforce Rd, Cambridge CB3 0WA, UK}
\begin{document}
\maketitle
\begin{abstract}
We consider one-dimensional chains and multi-dimensional networks of harmonic oscillators coupled to two Langevin heat reservoirs at different temperatures. Each particle interacts with its nearest neighbors by harmonic potentials and all individual particles are confined by harmonic potentials, too. In this article, we provide, for the first time, the sharp $N$ dependence of the spectral gap of the associated generator under various physical assumptions and for different spatial dimensions. Our method of proof relies on a new approach to analyze a non self-adjoint eigenvalue problem involving low-rank non-hermitian perturbations of auxiliary discrete Schr\"odinger operators.
\end{abstract}
\tableofcontents

\section{Introduction}
We analyze the dependence of the $L^2$-spectral gap of the full Fokker-Planck operator for a classical heat conduction model from non-equilibrium statistical mechanics by using novel ideas from scattering \cite{SZ} and random matrix theory \cite{FYODOROV199746} to reduce it to a non self-adjoint eigenvalue problem involving discrete Schr\"odinger operators. Even though non self-adjoint eigenvalue problems are often difficult to treat using perturbative methods, we show that the low-rank nature of the non self-adjoint perturbation allows precise estimates on the behaviour of the spectral gap. 

 \subsection{Description of the model} 
In this article we study the so-called \emph{chain of oscillators},  which is a multi-dimensional model\footnote{although in higher dimensions the model is no longer a \emph{chain} of oscillators, but rather a \emph{network}, we shall still use the expression \emph{chain of oscillators} to refer to the model as it was first considered in one dimension and the name \emph{chain of oscillators} has been used pars pro toto.} describing heat transport through a configuration of $N^d$ interacting particles, where $d$ is the spatial dimension. 

\medskip

We assume particles situated on a $d$-dimensional square lattice $[N]^d$, where $[N]:=\{1,..,N\}$, with \emph{quadratic} nearest neighbor interaction and pinning potentials confining the particles of mass $m_i$ to a lattice structure. Let $\textbf{m}_{[N]^d}:=\operatorname{diag}(m_1,...,m_{N^d})$ be the mass matrix, containing the masses $m_i$ of particles $i \in [N^d]$ on the diagonal, and let $q_i$ be the displacement of each particle with respect to their equilibrium position and $p_i$ its momentum.
The energy of the oscillator chain is described by a Hamilton function $H:T^*\mathbb R^{dN^d} \rightarrow \RR$
\begin{equation}
\begin{split}
H(\textbf{q},\textbf{p}) &= \frac{\langle \textbf{p},\textbf{m}_{[N]^d}^{-1}\textbf{p}\rangle}{2}+ V_{\eta,\zeta}(\textbf{q}) \text{ where } \\
V_{{\bf \eta,\zeta}}(\textbf{q}) &= \sum_{i=1}^{N^d} \eta_i \vert q_i \vert^2 + \sum_{i \sim j} \xi_{i,j} \vert q_i-q_{j} \vert^2
\end{split}
\end{equation} 
where $\sim$ indicates nearest neighbors on the $[N]^d \subset \mathbb{Z}^d$ lattice and $\eta_i, \xi_{i,j}>0$. The above form of the potential describes particles that are fixed  by a quadratic \textit{pinning} potential $U_{\operatorname{pin},i}(q) = \eta_i \vert q_i \vert^2 $ and interact through a quadratic \textit{interaction} potential $U_{\operatorname{int},i  \sim j}(q) =  \xi_{i,j} \vert q_i-q_j \vert^2$ for $i,j$ such that $\Vert i-j \Vert_{\infty}=1$. \\
The dynamics of this model is such that (some) particles at the boundary on $ \{1,..,N\}^d$ are coupled to heat baths at (possibly) different temperatures $\beta^{-1}$. Moreover, some particles $i \in I \subset \{1,..,N\}^d$ are subject to friction and we denote by $\gamma_i> 0$ the friction strength at the $i$-th particle. 
\medskip

The time evolution is then for particles $i \in \{1,..,N\}^d $ described by a coupled system of SDEs:

\begin{equation}
\begin{split}
\label{eq:SDE}
dq_i(t)&= \partial_{p_i} H \ dt \text{ and }\\
dp_i(t)&= \left(-\partial_{q_i} H- \gamma_i p_i \delta_{i \in I} \right) \ dt+ \sum_{i \in I}  \sqrt{2m_i \gamma_i \beta_i^{-1}}\ dW_i 
\end{split}
\end{equation}
where $\beta_i$ is the inverse temperature at the boundary of the network of oscillators, $W_i$ with $i \in I$ are iid Wiener processes, $\gamma_i> 0$ a friction parameter, and $I \subset  \{1,..,N\}^d$ the set of the particles subject to friction.

For the analysis of one-dimensional chains, we mainly consider friction at both terminal ends, \textit{i.e.}  $I=\{1,N\}$, in which case $\beta_1$ and $\beta_N$ correspond to actual physical inverse temperatures. Our analysis also allows us to study a chain with zero friction at a single end of the chain, this is a scenario that has been considered by Hairer \cite{Hair09}. In this case, the frictionless end is interpreted to be in contact with an environment at infinite temperature. In this case the inverse temperature at the frictionless end no longer corresponds to a physical temperature.

The solution to the above system of SDEs \eqref{eq:SDE} forms a Markov process, and can thus be equivalently described by a strongly continuous semigroup $P_t f(z) :=   \mathbb{E}_z \big( f( p_t,q_t) \big)$ where $ (p_t,q_t) \in \mathbb{R}^{2N^d}$ solve the system of SDEs \eqref{eq:SDE}. Its generator is given by
\begin{equation}
\label{eq:L}
 \mathcal{L} f(z)  = - z M_{[N]^d} \cdot  \nabla_z f(z) + \nabla_p \cdot  \Gamma \textbf{m}_{[N]^d}  \Theta \nabla_p  f(z)
 \end{equation}
where $M_{[N]^d}  \in \mathbb C^{2N^d \times 2N^d}$ and $\Gamma \in \mathbb R^{N^d \times N^d}$ are matrices of the form
\begin{equation*}
\begin{split}
M_{[N]^d}  &:= \left(\begin{matrix} \Gamma & -\textbf{m}_{[N]^d}^{-1} \\ B_{[N]^d}  & 0 \end{matrix}\right) \text{ and }
\Gamma = \text{diag}(\gamma_1 \delta_{1 \in I}, \dots, \gamma_{N^d} \delta_{N^d \in I}).
\end{split}
\end{equation*}
Here, $\Gamma$ is the friction matrix with $\gamma_i>0$ and $\Theta = \text{diag}(\beta_1^{-1}\delta_{1 \in I},\dots,\beta_{N^d}^{-1}\delta_{N^d \in I}) $ contains the temperatures of the bath.

Defining for $i,j \in [N^d]$ self-adjoint operators $\langle u, L_{i,j}u \rangle_{\ell^2(\mathbb C^{N^d})} := (u(i)-u(j))^2$ that decompose the negative weighted Neumann Laplacian on $\mathbb C^{N^d}$ as $$-\Delta_{N^d} = \sum_{i\sim j} \xi_{i, j} L_{i, j},$$
we can write the operator $B_{[N]^d}$ appearing in $M_{[N]^d}$ as a Schr\"odinger operator 
\begin{equation}
\label{eq:Schroe}
B_{[N]^d} =-\Delta_{[N]^d} + \sum_{i=1}^{N^d} \eta_i \delta_i
\end{equation}
where $(\delta_i(u))(j)=u(i)\delta_{ij}.$
The operator $B_N$ reduces in one dimension to the Jacobi (tridiagonal) matrix 
\[(B_{N} f)_n = -\xi_n f_{n+1}- \xi_{n-1}f_{n-1} +(\eta_n+(2-\delta_{n \in \{1,N\}})\xi_n)f_n\] 
with the convention that $f_0=f_{N+1}=0$.

\subsection{State of the art and motivation}  The (multi-dimensional) chain of oscillators is a non-equilibrium statistical mechanics model initially introduced to study heat transport in media. It was first introduced for the rigorous derivation of Fourier's law, or a rigorous proof of its breakdown: this is well described in several overview articles on the subject: \cite{BLR00}, \cite{Lep16, Dhar08} and \cite{BF19}. The linear (harmonic) case was the first to be studied in \cite{RLL67}, where the non equilibrium steady state (NESS) was explicitly constructed and the behavior of the heat flux analyzed as well, leading (as expected) to the breakdown of Fourier's law.  For results regarding on chains with anharmonic potentials, we refer the reader to \cite{EPR99a, EPR99b, EH00} where existence and uniqueness of stationary states was studied and to \cite{RBT02, Car07} where exponential convergence towards the NESS has been proved. 
Regarding the existence, uniqueness of a NESS and exponential convergence towards it in more complicated anharmonic $d$-dimensional networks of oscillators (not only for square lattices) see \cite{CEHRB18}. In \cite{RAQ18, Me20} bounded perturbations of the harmonic chain are discussed.  
Note also that short chains of rotors with Langevin thermostats have been studied in \cite{CP17, CEP15}. 
 In the articles \cite{HM09, Hair09} some negative results are presented, \textit{i.e.} lack of spectral gap, in cases where the pinning potential is stronger than the coupling one.

 The main motivation of this article is to find the exact scaling of the spectral gap of the associated generator of the dynamics as defined above, in terms of the number of the particles. Quantitative results in this sense are missing from the literature and even in the simplest cases for the chain of oscillators, \textit{i.e.} the linear (harmonic) chains, the dependence on the dimension of the spectral gap. Attempts have been made through hypocoercive techniques to get $N$-dependent estimates under certain assumptions on the potentials: see the discussion in \cite[Section 9.2]{Villani09} where this question was first raised. The techniques discussed in Villani's monograph however only yield rather far from optimal estimates on the spectral gap in terms of the system size. To the authors' knowledge, the only relevant result so far that gives a polynomial lower bound on the spectral gap for the same model (homogeneous with a weak $N$-dependent anharmonicity) is \cite{Me20}. Hypocoercive techniques used in that article provide a polynomial lower bound on the spectral gap and upper bounds on the prefactors in front of the exponential that determine the exponential rate of the convergence. 

Here we give the sharp upper and lower bounds on the scaling of the spectral gap. In this article we not only cover homogeneous networks of oscillators, but also randomly perturbed pinning potentials or pinning potentials perturbed by single impurities. In addition, our techniques also apply to other scenarios apart from the classical one-dimensional model, in particular it gives scalings for $d$-dimensional square network cases. These results seem to be the first of this kind.

\medskip 

\smallsection{Microscopic properties and heat transport} Before stating our main results, we want to mention results on the macroscopic heat transport of the chain of oscillators, \textit{e.g.} heat conductivity, and how such properties are determined from microscopic properties of the system. In particular, we would like to highlight which microscopic properties affect the heat transport and which determine the asymptotic behaviour of the spectral gap. 

\medskip

It has been suggested by \cite{CL71} that, for an infinite chain the absolutely continuous part of the spectrum of the Schr\"odinger operator \eqref{eq:Schroe}, \textit{i.e.} the \emph{metallic part} of the spectrum, leads to infinite conductivity. In the specific example of the homogeneous chain, where there is only absolutely continuous spectrum in the limit, it is well-known that the conductivity is infinite (Fourier's law doesn't hold) \cite{RLL67}. 
Note also that the behavior of the flux does not depend on the dimensionality of the system, see \cite{Hellemann} for $2$ dimensions.
 However, in disordered harmonic chains (DHC) with random masses, where all eigenstates of the discrete Schr\"odinger operator are localized, the heat flux vanishes as $N \rightarrow \infty$ almost surely, see \cite{CL71, RG71, CL74}. In terms of the conductivity that is $\frac{\kappa(N)}{N} \to 0$ as $N$ goes to infinity.  

\medskip 


 First studies of the behaviour of the heat currents in a one-dimensional DHC were done in \cite{CL71, RG71}. In particular, in \cite{RG71} the heat baths are semi infinite harmonic chains distributed with respect to Gibbs measures at temperatures $T_L, T_R$ (free boundaries). In this case, $\mathbb{E}(J_N) \gtrsim N^{-1/2}$, where $\mathbb{E}(\cdot)$ denotes the expectation over the masses. That $\mathbb{E}(J_N) \sim N^{-1/2}$ was proved a bit later in \cite{Ver79}, showing that Fourier's law does not hold in this model of DHC.  Results regarding heat baths coupled at both ends with Ornstein-Uhlenbeck terms with  fixed boundaries, \textit{i.e.} $q_0=q_{N+1}=0$, was first done in \cite{CL71}. A rigorous proof of $\mathbb{E}(J_N) \sim (\Delta T)  N^{-3/2}$ was given in \cite{AH10}. The limiting behaviour of the heat flux in both of these models is also discussed in \cite{Dhar01}. Localization effects of the discrete Schr\"odinger operator enter also in the study of mean-field limits for the harmonic chain \cite{BHO19}.

Our new approach shows that the spectral gap of the generator to \eqref{eq:SDE} is determined by the decay rate of eigenstates of the discrete Schr\"odinger operator \eqref{eq:Schroe}
\[ (B_{[N]^d} f)_i=(-\Delta_{[N]^d} f)_i + \eta_i f_i, \text{ where } f = (f_i)_{i \in [N^d] },\] 
defined in terms of masses, the potential coupling strengths, under a constraint on the level-spacing between its eigenvalues. 
In particular, our results indicate that the presence of exponentially localized eigenstates in the discrete Schr\"odiger operator, \textit{i.e.} the \emph{insulating part} of the spectrum, causes an exponentially fast closing of the spectral gap. In contrast to this, if the discrete Schr\"odinger operator possesses only extended states, the spectral gap again decays to $0$ as $N$ tends to infinity but this time only at a polynomial rate. Both results only hold under a pressure condition on the eigenvalues. 

The above results show that single impurities which correspond to rank one perturbations in the discrete Schr\"odinger operator should not affect the heat conductivity but do affect the spectral gap. Put differently, heat transport is an effect that is governed by all the modes of the system whereas the spectral gap is -in general- only determined by a single extremizing mode of the Schr\"odinger operator.\\

\subsection{Main results}

We study the spectral gap for three scenarios describing fundamentally different physical settings:
 \begin{itemize}
 \item For a homogeneous model with the same physical parameters for every particle (the associated Schr\"odinger operator possesses only extended states in the limit $N \rightarrow \infty$), Fig. \ref{fig:classic}, 
 \item for a model with a sufficiently strong impurity in the pinning potential of a single particle (the Schr\"odinger operator possesses both extended and exponentially localized states in the limit $N \rightarrow \infty$), Fig. \ref{fig:flying}, and 
 \item for a model with disordered pinning potential (the Schr\"odinger operator has only exponentially localized eigenstates in the limit $N \rightarrow \infty$ for $d=1$, this is also conjectured to be true for $d=2$, and is conjectured to have both exponentially localized and extended states in dimensions $d \ge 3$), Fig. \ref{fig:disorder}. 
 \end{itemize}

 Our main results on the $N$-dependence of the spectral gap of the $d$-dimensional harmonic chain are summarized in the following theorem:

\begin{figure}
\centering
\begin{minipage}{0.80\textwidth}
\includegraphics[width=10cm]{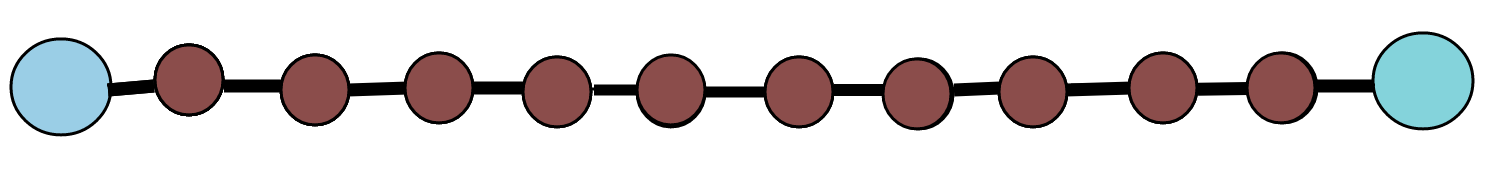}
\caption{Homogeneous chain: Spectral gap $\sim N^{-3}$.}
\label{fig:classic}
\end{minipage}
\begin{minipage}{0.80 \textwidth}
\includegraphics[width=10cm]{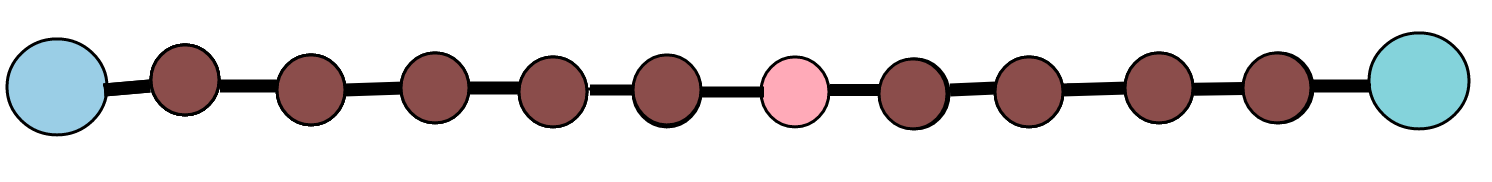} 
\caption{Chain with impurity: Spectral gap $\sim e^{-cN}$.}
\label{fig:flying}
\end{minipage}
\begin{minipage}{0.80 \textwidth}
\includegraphics[width=10cm]{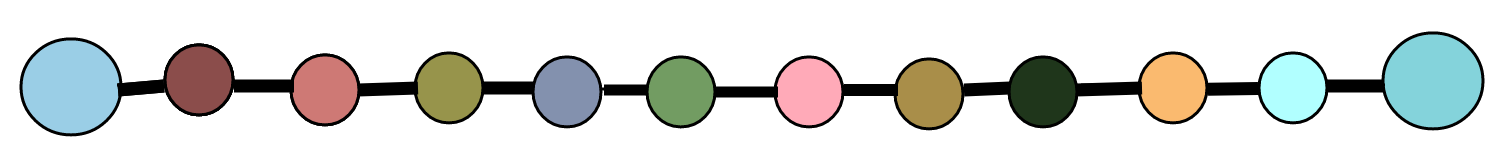} 
\caption{Disordered chain: Spectral gap $\sim e^{-cN}$.}
\label{fig:disorder}
\end{minipage}
 \label{pic: line}
 \caption{The one-dimensional chain of oscillators connected to heat baths (big discs) and with various pinning potentials (differently coloured discs indicate different pinning strengths).}
\end{figure}
\begin{theo}
\label{theo:main}
Let the positive masses and interaction strengths of all oscillators coincide, $N$ be the number of oscillators and $d$ the dimension of the network. Then if the sum of all friction parameters for all oscillators is uniformly bounded, the spectral gap of the chain of oscillators closes always as a function of $N$. 
In particular, we have the following cases
\begin{itemize}
\item \emph{(Homogeneous chain):} If the pinning strength is the same for all oscillators, 
\begin{enumerate} 
\item    when the two particles located at the corners $(1,\dots,1),(N,\dots,N)$ are exposed to non-zero friction and diffusion,  the spectral gap of the generator satisfies $$N^{-3d}\lesssim \lambda_S \lesssim N^{-3d}.$$ In particular for the one-dimensional chain we have $N^{-3} \lesssim \lambda_S \lesssim N^{-3}$. 
\item    when the friction and diffusion act on the two particles located at the center of the two edges of the network $(1,  \lceil N/2 \rceil, \dots,  \lceil N/2 \rceil), (N,  \lceil N/2 \rceil, \dots,  \lceil N/2 \rceil)$,  the spectral gap of the generator satisfies $$N^{-3-(d-1)}\lesssim \lambda_S \lesssim N^{-3-(d-1)}.$$
\item    when $d=2$ and the particles exposed to friction are located on the whole two opposite edges, the spectral gap then satisfies $\lambda_S \lesssim N^{-5/2}$.
\end{enumerate} 
\item \emph{(Chain with impurity):} Let $N$ be even. We assume that all masses and interaction parameters are positive and coincide and the friction parameters $\gamma_i$ of the particles at the boundary  
\[ \partial [N]^d:=\{i \in [ N]^d; \exists i_n: i_n=1 \text{ or } i_n=N\} \text{ of }[N]^d:=\{1,..,N\}^d\] 
satisfy $\sup_{i \in \partial [ N]^d} \gamma_i \in (0,c)$where $c$ is independent of $N$. Then, if the pinning strength  $$\eta_{c_d(N)}$$ at the center point $c_d(N)=(N/2,..,N/2)$ of the network is sufficiently small compared to the pinning strength of all other oscillators, then the spectral gap $\lambda_S$ of the generator closes exponentially fast in the number of oscillators, for all $d \ge1$. 
\item \emph{(Disordered chain):} We assume that all masses and interaction parameters are positive and coincide and the friction parameters $\gamma_i$ of the particles at the boundary  
\[\partial [\pm N]^d:=\{i \in [\pm N]^d; \Vert N \Vert_{\infty}=N\} \text{ of the network }[\pm N]^d:=\{-N,...,N\}^d\] 
satisfy $\sup_{i \in \partial [\pm N]^d} \gamma_i \in (0,c)$ where $c$ is independent of $N$. Then, if the pinning strengths are iid random variables according to some compactly supported density $\rho \in C_c(0,\infty)$, the spectral gap $\lambda_S$ of the generator closes exponentially fast in the number of oscillators, for all $d \ge 1$ for all but finitely many $N$. 
\end{itemize}
\end{theo}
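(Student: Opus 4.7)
The plan is to translate the $L^2$-spectral gap of $\mathcal L$ into a non self-adjoint eigenvalue problem for the matrix $M_{[N]^d}$ appearing in \eqref{eq:L} and then exploit the low-rank nature of the friction matrix $\Gamma$. Because $\mathcal L$ is a quadratic, Ornstein--Uhlenbeck-type operator with a Gaussian invariant measure, a standard computation identifies
\[
\lambda_S \;=\; \min\bigl\{|\Re \lambda|: \lambda \in \sigma(M_{[N]^d})\setminus\{0\}\bigr\}.
\]
Writing an eigenvector of $M_{[N]^d}$ as $(u,w)$ and eliminating $w$ turns this into the quadratic pencil
\[
\bigl(\lambda^2 \mathbf{m}_{[N]^d} - \lambda\, \Gamma \mathbf{m}_{[N]^d} + B_{[N]^d}\bigr)u = 0,
\]
so that at $\Gamma = 0$ the spectrum lies at $\pm i\sqrt{\mu_k}$, where $\mu_k$ are the eigenvalues of $\mathbf{m}_{[N]^d}^{-1/2} B_{[N]^d} \mathbf{m}_{[N]^d}^{-1/2}$. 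The spectral gap is then exactly the real part of the eigenvalue closest to the imaginary axis once friction is switched on.

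Since $\Gamma = \sum_{i \in I}\gamma_i |\delta_i\rangle\langle\delta_i|$ has rank $|I|$, I would write the dispersion relation as a Schur complement restricted to the friction sites, of the form $\det(\operatorname{Id}_{|I|} - \lambda\, \Gamma R(\lambda)) = 0$ with $R(\lambda) = (\lambda^2 + \mathbf{m}^{-1/2} B_{[N]^d} \mathbf{m}^{-1/2})^{-1}$ sandwiched between the $\delta_i$, $i \in I$. This is the Krein/Fyodorov-type reduction alluded to in the introduction. Expanding $R(\lambda)$ in the eigenbasis $\{\psi_k\}$ of the discrete Schr\"odinger operator and localising near $\lambda \approx i\sqrt{\mu_k}$, first-order perturbation theory yields
\[
\Re \lambda_k \;\asymp\; \tfrac{1}{2}\,\sum_{i\in I} \gamma_i\, |\psi_k(i)|^2,
\]
provided $\mu_k$ is sufficiently separated from its neighbours -- the pressure/level-spacing condition. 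The whole theorem then reduces to estimating $\min_k \sum_{i\in I} |\psi_k(i)|^2$ subject to this constraint, with matching two-sided bounds coming from resolvent norm estimates that also control the higher-order corrections.

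The three cases now differ only in the behaviour of $\psi_k$. For the homogeneous chain the eigenfunctions of $B_{[N]^d}$ are explicit tensor products of discrete sines; the extremal slow mode has $|\psi_1(\text{corner})|^2 \asymp N^{-3d}$ due to the vanishing of each factor $\sin(k\pi/(N+1))$ at the boundary, yielding $\lambda_S \asymp N^{-3d}$ in case (1). When friction sits on a lower-dimensional slice, only the modes non-vanishing on that slice contribute, and the explicit sine computation recovers the $N^{-3-(d-1)}$ and $N^{-5/2}$ scalings of cases (2) and (3). For the impurity chain, an Agmon/Combes--Thomas estimate for $B_{[N]^d}$ produces an eigenfunction exponentially localized at the centre $c_d(N)$ with decay rate controlled by the strength of the defect, so that $|\psi_k(i)|^2 = O(e^{-cN})$ at any friction site on $\partial[N]^d$. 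For the disordered chain, Anderson/fractional-moment localization furnishes, for all but finitely many $N$, at least one eigenfunction whose localization centre sits in the bulk at distance $\asymp N$ from every boundary friction site, delivering the same exponential bound.

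The main obstacle is the pressure/level-spacing hypothesis: the leading-order formula for $\Re\lambda_k$ is valid only when the rank-$|I|$ shift is small compared with the distance from $\mu_k$ to the rest of the spectrum of $\mathbf{m}_{[N]^d}^{-1} B_{[N]^d}$. In the homogeneous case the spectrum is explicit and this is routine, but in the impurity and random settings one must show -- with overwhelming probability in the random case -- that the extremizing localized eigenvalue is isolated on a scale large compared with the exponentially small shift itself, which calls for a Wegner/Minami-type estimate combined with the simplicity of isolated localized levels. Obtaining this control uniformly in $N$ rather than merely pointwise is, I expect, the technical heart of the argument and is what forces the ``all but finitely many $N$'' qualifier in the disordered case.
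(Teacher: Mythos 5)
Your proposal follows essentially the same route as the paper: the same Ornstein--Uhlenbeck identification of $\lambda_S$ with the real part of the closest eigenvalue of the drift matrix, the same low-rank Krein/Fyodorov reduction to an $|I|\times|I|$ determinant condition over the friction sites (the paper's Wigner $\widetilde R_I$-matrix), the same extremal-eigenmode analysis (explicit cosines in the homogeneous case, Combes--Thomas decay for the impurity, Anderson localization plus a Minami/Borel--Cantelli level-spacing bound for the disordered chain), and the same leading-order answer $\Re\lambda \asymp \tfrac12\sum_{i\in I}\gamma_i|\psi_k(i)|^2$. The one place you are more informal than the paper is the appeal to ``first-order perturbation theory'' for the non-self-adjoint pencil with ``matching two-sided bounds from resolvent norm estimates'': the paper makes this rigorous by splitting the Wigner determinant into $f+g$ and running a Rouch\'e argument (scalar for $|I|=2$, a matrix-valued Rouch\'e for larger $|I|$) on a ball of radius comparable to $|\psi_k(i)|^2$, and it also isolates into a separate lemma the step -- which you gloss over -- of transferring the Combes--Thomas/Anderson decay from the infinite operator $B_{[\infty]^d}$ to the finite truncation $B_{[N]^d}$ whose eigenvectors actually enter the Wigner matrix.
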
 
Our findings in Theorem \ref{theo:main} are illustrated in one spatial dimension in Figure \ref{fig:clement}.
\begin{figure}
\includegraphics[width=10cm]{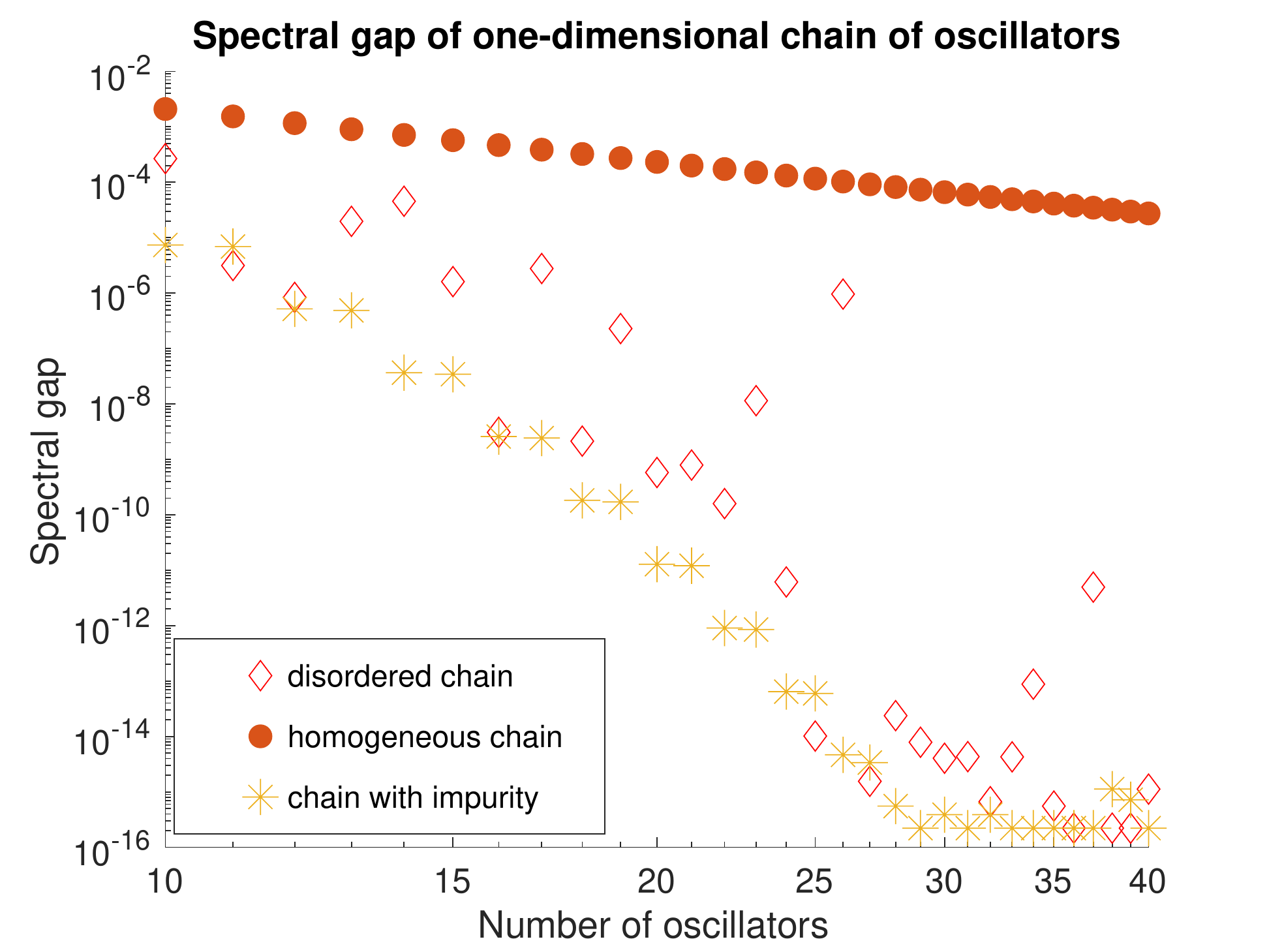}
\caption{Log-log plot of the spectral gap for the one-dimensional chain of oscillators for all three cases considered in Theorem \ref{theo:main}. The impurity is modeled by choosing a pinning strength $\eta_i=10$ for all oscillators $i$ apart from the one in the center for which we choose $\eta_{N/2}=5.$ The disorder potential is of the form $V_{\omega}(n)=1+X_n$ where $X_n \sim U(0,1)$ are i.i.d. uniformly distributed.} 
\label{fig:clement}
\end{figure}

\smallsection{Open questions}
\begin{itemize}
\item While we fully settle the scaling of the spectral for one-dimensional oscillator chains, the scaling of the spectral gap for many physically relevant configurations in higher dimensions remains open. Although our method of proof still applies to such configurations as well, the necessary estimates seem to become rather intricate, cf. the discussion below Proposition \ref{prop:hom2}.
\item It would be interesting to study the behavior of the spectral gap in terms of the dimension of the system in the oscillator chains for more general classes of pinning and interaction potentials, \textit{i.e.} for nonlinear chains. While this analysis cannot be reduced to a Schr\"odinger operator in that case, we still believe the connection between decay properties of (generalized) eigenstates of the symmetric part of the operator and the scaling of the spectral gap to persist. 
\item Moreover, apart from considering different kinds of potentials, one can study different kind of noises as well, \cite{RAQ18, NR19}, where quantitative rates of convergence are not available, so far.
\item It would also be interesting to extend our analysis to more complicated geometries such as different lattice structures.
\end{itemize}


%

\smallsection{Notation}
We write $f(z) = \mathcal O(g(z)) $ to indicate that there is $C>0$ such that $\left\lvert f(z) \right\rvert \le C \left\lvert g(z) \right\rvert$ and $f(z)= \smallO(g(z))$ for $z \rightarrow z_0$ if there is for any $\varepsilon>0$ a neighborhood $U_{\varepsilon}$ of $z_0$ such that $\left\lvert f(z)\right\rvert \le \varepsilon \left\lvert g(z) \right\rvert.$ 
Finally, we introduce the notation $[N]:=\left\{1,...,N \right\}$ and 
\[\partial [N]^d:=\{ i=(i_1,..,i_d) \in [N]^d; \Vert i \Vert_{\infty}=N \text{ or } \min_{n \in [d]} i_n =1\}.\] 
The eigenvalues of a self-adjoint matrix $A$ shall be denoted by $\lambda_1(A)\le ... \le \lambda_N(A)$. We also employ the Kronecker delta where $\delta_{n \in I}=1$ if $n \in I$ and zero otherwise.

\section{Mathematical preliminaries}
\label{sec:COO}

For our purposes, it is sometimes favorable to consider also another form, which we obtain upon performing the following change of variables $$ \tilde{p}= p\ \sqrt{\textbf{m}_{[N]^d}^{-1}},\  \tilde{q} := q \sqrt{B_{[N]^d}} .$$ This is an isomorphic change of variables if and only if all masses and interaction strength are strictly positive. In the new coordinates, the generator becomes
\begin{equation}
\begin{split}
 \label{eq:generator} 
\mathcal{L}&=\tilde{p}\textbf{m}_{[N]^d}^{-1/2} B_{[N]^d}^{1/2} \nabla_{\tilde{q}} - \tilde{q} B_{[N]^d}^{1/2}\textbf{m}_{[N]^d}^{-1/2} \nabla_{\tilde{p}} -\tilde{p} \Gamma \cdot \nabla_{\tilde{p}} + \nabla_{\tilde{p}} \cdot \Gamma \Theta \nabla_{\tilde{p}}\\ 
&= -\tilde{z} \Omega_{[N]^d} \cdot \nabla_{\tilde{z}}  +\nabla_{\tilde{p}} \cdot \Gamma \Theta \nabla_{\tilde{p}} 
\end{split}
\end{equation}
 where 
 $$ \Omega_{[N]^d} := \left(\begin{matrix} \Gamma  & - \textbf{m}_{[N]^d}^{-1/2}B_{[N]^d}^{1/2} \\ B_{[N]^d}^{1/2} \textbf{m}_{[N]^d}^{-1/2} & 0  \end{matrix}\right).$$
\\

The following Proposition identifies the optimal exponential rate of convergence, and thus the spectral gap, to the NESS for Ornstein–Uhlenbeck operators. This result was first proved, to our knowledge by  \cite{MPP02}.  Here we state a version given in \cite{AE14, Mon15}: 

\begin{prop}[Proposition 13 in  \cite{Mon15}, Theorems 4.6 and 6.1 in \cite{AE14}, Theorem 2.16 in \cite{AAS15}] \label{prop about reduction of sg}
Let the generator of an Ornstein-Uhlenbeck process given by  \begin{align} \label{FP operator} Lf(z) = -(Bz) \cdot \nabla_z f(z) + \operatorname{div}(D \nabla_z f)(z),\quad z \in \mathbb{R}^d \end{align} under the assumptions that 
\begin{enumerate}
\item There is no non-trivial subspace of $\text{Ker} D$ that is invariant under $B^T$   
\item All eigenvalues of the matrix $B$ have positive real part ($B$ is positively stable).
\end{enumerate}
Let $\rho := \inf\{ \Re(\lambda): \lambda \in \Spec( B) \}>0$ and let $m$, that possibly depends on $N$,  be the maximal dimension of the Jordan block of $B$ that corresponds to an eigenvalue  $\lambda$ of $B$ such that $\Re(\lambda)= \rho$. \\
 Then there is a unique invariant measure $\mu$ and constant $c>0$ so that, regarding the long time behaviour of the process with generator \eqref{FP operator},  $$  c^{-1} (1+t^{2(m-1)}) e^{-2\rho t} \le \| P_t - \mu \|^2 \le c e^{\rho} (1+t^{2(m-1)})e^{-2 \rho t}$$ where $  \| P_t- \mu \| : = \sup_{\| f\|_{L^2(\mu)=1} } \{ \| (P_t - \mu)f \|_{L^2(\mu) } \}$. 
\end{prop}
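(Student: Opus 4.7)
The plan is to exploit the explicit Gaussian structure of the Ornstein–Uhlenbeck semigroup and reduce the spectral gap calculation to a matrix exponential estimate. First, I would establish existence and uniqueness of the invariant measure by solving the Lyapunov equation $BQ+QB^T=2D$: positive stability of $B$ gives the explicit solution $Q=\int_0^\infty e^{-sB}(2D)e^{-sB^T}\,ds$, and Assumption 1 (a Kalman-type nondegeneracy condition) forces $Q$ to be strictly positive definite. The invariant measure is then $\mu=\mathcal{N}(0,Q)$, and the Kolmogorov/Mehler representation gives
\[ P_t f(z) = \int_{\mathbb{R}^d} f(e^{-tB}z + y)\,\mu_t(dy), \]
with $\mu_t$ centered Gaussian of covariance $Q_t=\int_0^t e^{-sB}(2D)e^{-sB^T}\,ds$.

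Next, I would decompose $L^2(\mu)=\bigoplus_{k\ge 0}\mathcal{H}_k$ into orthogonal Hermite chaoses. Because $P_t$ is a Gaussian convolution-dilation, it preserves this grading; the projection onto $\mathcal{H}_0$ is exactly averaging against $\mu$, and on each $\mathcal{H}_k$ for $k\ge 1$ the action of $P_t$ is unitarily equivalent to the $k$-th symmetric tensor power of $e^{-tB}$ acting on the Hilbert space $(\mathbb{R}^d,\langle\cdot,\cdot\rangle_{Q^{-1}})$. Submultiplicativity under tensor powers together with $\|e^{-tB}\|_{Q^{-1}}\le 1$ for $t$ large (from positive stability combined with $Q$ solving the Lyapunov equation) then yields
\[ \|P_t-\mu\|_{L^2(\mu)\to L^2(\mu)} \;=\; \|e^{-tB}\|_{Q^{-1}} \]
in the large-$t$ regime, with short-time behaviour absorbed into the constants $c$ and $e^{\rho}$.

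Finally, I would estimate $\|e^{-tB}\|$ via Jordan decomposition. Writing $B=S(D_{\mathrm{diag}}+N)S^{-1}$ with $N$ nilpotent and commuting with $D_{\mathrm{diag}}$, the slowest-decaying piece is contributed by the eigenvalues $\lambda$ with $\Re\lambda=\rho$, and a Jordan block of maximal size $m$ produces $\|e^{-tB}v\|\sim t^{m-1}e^{-\rho t}$ along its top generalized eigenvector. This gives the stated upper bound $\|P_t-\mu\|^2 \le c\,e^{\rho}(1+t^{2(m-1)})e^{-2\rho t}$ directly, and the matching lower bound $c^{-1}(1+t^{2(m-1)})e^{-2\rho t}$ follows by testing against the linear functional in $\mathcal{H}_1$ corresponding to that same generalized eigenvector.

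The principal obstacle will be making the intertwining of $P_t$ on $\mathcal{H}_1$ with $e^{-tB}$ on $\mathbb{R}^d$ completely precise: one identifies $v \in \mathbb{R}^d$ with the linear functional $\ell_v : z \mapsto v^T Q^{-1} z$ and verifies both the isometry $\|\ell_v\|_{L^2(\mu)}=\|v\|_{Q^{-1}}$ and the covariance identity $P_t \ell_v = \ell_{e^{-tB^T}v}$, which requires a careful bookkeeping of transposes between $B$ and $B^T$ and uses the Lyapunov equation to convert $B^T$-orbits back into $B$-norms. A secondary subtlety is ruling out that higher chaoses $\mathcal{H}_k$ dominate the first-chaos contribution; this follows from the general estimate $\|(e^{-tB})^{\otimes k}|_{\mathrm{sym}}\|\le \|e^{-tB}\|^k$, so once the first chaos is contractive, the higher ones decay at least as fast and the prefactor $t^{2(m-1)}$ remains sharp only at $k=1$.
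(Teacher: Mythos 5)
The paper does not in fact supply a proof of this proposition; it states the result with citations to \cite{Mon15}, \cite{AE14}, and \cite{AAS15}. Your sketch is a correct and essentially faithful reconstruction of the argument in those sources, in particular \cite{AE14}: solve the Lyapunov equation to identify $\mu = \mathcal{N}(0,Q)$, use the Mehler formula to see that $P_t$ respects the Hermite--Wiener chaos grading of $L^2(\mu)$, reduce the operator norm of $P_t - \mu$ to that of $P_t|_{\mathcal{H}_1}$ by tensor-power submultiplicativity, identify $P_t|_{\mathcal{H}_1}$ with the matrix exponential of the drift conjugated by $Q^{\pm 1/2}$, and read off the polynomial prefactor $t^{m-1}$ from the Jordan structure of $B$.

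Two small points worth tightening. First, with your normalization $\ell_v(z) = v^T Q^{-1} z$ the intertwining relation is not $P_t \ell_v = \ell_{e^{-tB^T}v}$ but rather $P_t \ell_v = \ell_{Q e^{-tB^T} Q^{-1} v}$; equivalently, the clean statement is obtained by taking $\ell_v(z) = v\cdot z$, in which case $\|\ell_v\|_{L^2(\mu)} = \|v\|_Q$ and $P_t\ell_v = \ell_{e^{-tB^T}v}$, so that $\|P_t|_{\mathcal{H}_1}\| = \|Q^{1/2}e^{-tB^T}Q^{-1/2}\|$. Either bookkeeping works since conjugation by $Q^{1/2}$ preserves the Jordan data, but the formulas should be kept consistent. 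Second, the contractivity $\|P_t|_{\mathcal{H}_1}\|\le 1$ holds for \emph{all} $t\ge 0$, not merely for large $t$: with $A:=Q^{-1/2}BQ^{1/2}$ the Lyapunov equation gives $A+A^T = Q^{-1/2}(2D)Q^{-1/2}\ge 0$, so $e^{-tA}$ is a contraction semigroup; this uniform contractivity is exactly what you need to justify that the higher chaoses decay at least as fast as the first, so it is not merely a convenience.
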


\bigskip

Therefore, both the exponential rate given by $\rho$ is and the power $(1+t^{2(m-1)})$ are optimal.  Now if we define for every $\epsilon \in (0, \rho)$,  $$C_{\epsilon,N}:= \sup_{t>0} e^{-2\epsilon t } (1+ t^{2(m-1)})$$ we have $$  (1+t^{2(m-1)}) e^{-2\rho t} \le  C_{\epsilon,N} e^{-2(\rho-\epsilon)t}.$$ 

Note that since $m$ can depend on $N$, $C_{\epsilon,N}$  depends on $N$, too. The exponential rate and more generally the estimates of the relaxation time, is due to the drift part of the operator, whereas the hypoellipticity condition is used to ensure us for the existence of a unique invariant measure $\mu$ (in \cite[Lemma 3.2]{AE14} it is established that the invariant measure is in general a non-isotropic Gaussian measure. See also \cite{RLL67} where they find an explicit form of this stationary measure having as motivation to study properties of the NESS of the harmonic oscillators chains.)

We finally would like to mention that such a result holds in relative entropy as well \cite{Mon15}. \\

The above Proposition \ref{prop about reduction of sg} applies to the chain of oscillators as well, where $B$ is just $\Omega_{[N]^d}^T$ in \eqref{eq:generator}. Conditions (1) and (2) are satisfied, once we assume there is diffusion and friction, i.e.\@ $I \neq \emptyset$, since this condition is equivalent to the hypoellipticity of $\partial_t -L$  \cite[§1]{H69}. Also $\Omega_{[N]^d}$  satisfies condition (ii), see \cite[Lemma 5.1]{JPS17}.
Since we don't know if our matrix $\Omega_{[N]^d}$ is diagonalizable, $C_{\varepsilon, N}$ here depends possibly on $N$ and when considering the worst case we have a dependence of order $t^{2(N-1)}$ on the right hand side. Then applying Proposition \ref{prop about reduction of sg} in our case we get $$ 2c^{-1}  e^{-2\rho t} \le \| P_t - \mu \|^2 \le c e^{\rho} (1+t^{2(N^d-1)})e^{-2 \rho t} .$$ 

To summarize the discussion of this Section: The spectral gap $\lambda_S$ of the generator of the $N$-particle dynamics \eqref{eq:generator} is precisely given by 
\[\lambda_S:=\inf\{ \Re(\lambda): \lambda \in \Spec( \Omega_{[N]^d}) \} =\inf\{ \Re(\lambda): \lambda \in \Spec(M_{[N]^d}) \}  .\] 

We record some simple observations about the behaviour of the spectral gap in the following Proposition: 
\begin{prop}
\label{prop:generalproperties}
For the one-dimensional chain of oscillators the following properties hold:
\begin{enumerate}
\item The characteristic polynomial of $M_{[N]^d}$ satisfies $\operatorname{det}(M_{[N]^d}-\lambda \operatorname{id}) = \operatorname{det}(\lambda^2-\lambda \Gamma+\textbf{m}_{[N]^d}^{-1}B_{[N]^d})=0.$ In particular, the matrix $M_{[N]^d}$ is invertible if and only if $B_{[N]^d}$ is invertible. 
\item If there is the same non-zero friction at every oscillator, i.e.\@ $\Gamma=\gamma \operatorname{id}$, $I=[N^d]$, and $\inf_{N \in \mathbb N}\inf_{\lambda \in \Spec(\textbf{m}_{[N]^d}^{-1}B_{[N]^d})}\lambda>0$ then the chain of oscillators has a spectral gap that is uniform in the number of oscillators. In particular, if all masses and coupling strength $\eta,\xi$ coincide and are non-zero, then we have $\inf_{N \in \mathbb N}\inf_{\lambda \in \Spec(\textbf{m}_{[N]^d}^{-1}B_{[N]^d})}\lambda>0$.
\item The spectral gap of the generator \eqref{eq:L} closes at least with rate $\mathcal O(N^{-1})$ if the friction parameters at particles on the boundary $\partial [N]^d$ of the particle configuration $[N]^d$ is uniformly bounded, i.e. $\sup_{i \in I \subset \partial [N]^d} \gamma_i \le C$ where $C>0$ is independent of $N$. 
\item Let $1 \in I$ be the left terminal end of a one-dimensional chain with universal (independent of the size of the chain) friction parameter $\gamma_1 > 0$. If all oscillators have the same mass and there are constants $c_1,c_2>0$ such that $c_1<\xi_i,\eta_i < c_2$ for all $N$, then the spectral gap of \eqref{eq:L} does not close faster than $e^{-cN}$ for some $c>0.$
\end{enumerate}
\end{prop}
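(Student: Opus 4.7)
The plan is to dispatch items (1)--(3) as algebraic consequences of the block structure of $M_{[N]^d}$, and to devote the bulk of the argument to (4), where the rank-one structure of the friction must be combined with a transfer-matrix bound on the eigenvectors of $B_N$.

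For (1), since the lower-right block $-\lambda I$ commutes with $B_{[N]^d}$, the block determinant formula gives $\det(M_{[N]^d}-\lambda I) = \det((\Gamma-\lambda I)(-\lambda I)+\textbf{m}_{[N]^d}^{-1}B_{[N]^d}) = \det(\lambda^2 I - \lambda \Gamma + \textbf{m}_{[N]^d}^{-1}B_{[N]^d})$, and specialising to $\lambda=0$ yields the invertibility claim. For (2), with $\Gamma = \gamma\operatorname{id}$ this determinant factors over the spectrum $\{\mu_k\}=\Spec(\textbf{m}_{[N]^d}^{-1}B_{[N]^d})$ as $\prod_k(\lambda^2-\gamma\lambda+\mu_k)$; the roots $(\gamma\pm\sqrt{\gamma^2-4\mu_k})/2$ have real part at least $\min(\gamma/2,\mu_k/\gamma)$, which is uniformly positive provided $\inf_k\mu_k>0$. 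In the homogeneous case $B_{[N]^d}\ge \eta I$, since the negative weighted Neumann Laplacian is positive semidefinite. For (3), positive stability of $M_{[N]^d}$ (condition (ii) of Proposition \ref{prop about reduction of sg}) forces all $2N^d$ eigenvalues to have positive real part, so $2N^d\lambda_S \le \sum_k\Re(\lambda_k) = \tr(M_{[N]^d}) = \sum_{i\in I}\gamma_i \le C|\partial[N]^d| = \mathcal{O}(N^{d-1})$, whence $\lambda_S = \mathcal{O}(N^{-1})$.

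For (4), I apply the matrix-determinant lemma to the rank-one friction $\Gamma = \gamma_1 e_1 e_1^T$ to factor the characteristic polynomial as
\[ p(\lambda) = q(\lambda^2) - \lambda\gamma_1\, r(\lambda^2), \]
where $q(z) = \det(zI + m^{-1}B_N) = \prod_k(z+m^{-1}\mu_k)$ and $r(z) = \det(zI+m^{-1}B_N^{(1)})$, with $B_N^{(1)}$ the principal submatrix of $B_N$ obtained by deleting the first row and column. At the purely imaginary points $\lambda_0 = \pm i\sqrt{\mu_j/m}$ one has $q(\lambda_0^2) = 0$ while $r(\lambda_0^2) \neq 0$, by the strict Cauchy interlacing $\mu_j < \mu_j^{(1)} < \mu_{j+1}$ (valid since $\xi_i \ge c_1 > 0$ makes $B_N$ irreducible). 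A first-order expansion of $p(\lambda)=0$ around $\lambda_0$, combined with the classical residue identity $|\phi_j(1)|^2 = \prod_k(\mu_k^{(1)}-\mu_j)/\prod_{k\neq j}(\mu_k - \mu_j)$ read off from the spectral representation of $((\mu I - m^{-1}B_N)^{-1})_{1,1}$, yields to leading order $\Re(\lambda-\lambda_0) = \gamma_1|\phi_j(1)|^2/2$, so the spectral gap is controlled by $\min_j|\phi_j(1)|^2$. The required exponential lower bound $|\phi_j(1)| \ge e^{-cN}$ comes from the transfer-matrix form of the three-term recurrence $B_N\phi_j = \mu_j\phi_j$ with $\phi_j(0) = 0$: under the uniform bounds $c_1 \le \xi_i, \eta_i \le c_2$ and $\mu_j \le \|B_N\| = \mathcal{O}(1)$, each transfer matrix has norm at most some $K = K(c_1,c_2)$, so $\|\phi_j\|_\infty \le K^{N-1}|\phi_j(1)|$, and the normalization $\|\phi_j\|_2 = 1$ forces $|\phi_j(1)| \ge (\sqrt{N}\, K^{N-1})^{-1}$.

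The main obstacle is to make the perturbative step in (4) rigorous when two eigenvalues $\mu_j$ of $B_N$ become arbitrarily close, a scenario the hypotheses do not preclude. Rather than an implicit-function argument near isolated $\lambda_0$, I would bound $|p(\lambda)|$ from below directly in the half-strip $\{0 < \Re(\lambda) \le c\,e^{-c'N}\}$ via the factorization $p(\lambda) = q(\lambda^2)\bigl(1 - \lambda\gamma_1 F(\lambda^2)\bigr)$ with $F(z) = \sum_j |\phi_j(1)|^2/(z+m^{-1}\mu_j)$, and exploit the sign structure of $F$ on the real axis together with the uniform $|\phi_j(1)|$ lower bound just derived. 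This precludes eigenvalues with exponentially small positive real part and establishes (4).
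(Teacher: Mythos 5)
Items (1)--(3) are correct and follow essentially the same route as the paper: the commuting-block determinant formula for (1), the factorization $\prod_k(\lambda^2-\gamma\lambda+\mu_k)$ plus elementary root analysis for (2), and the trace identity $\sum_k\Re(\lambda_k)=\tr\Gamma=\mathcal O(N^{d-1})$ for (3). No issues there.

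For item (4), however, you take a genuinely different route, and it has a gap that you yourself flag but do not close. Your factorization $p(\lambda)=q(\lambda^2)\bigl(1-\lambda\gamma_1 F(\lambda^2)\bigr)$ via the matrix-determinant lemma and the first-order expansion $\Re(\lambda-\lambda_0)\approx\gamma_1|\phi_j(1)|^2/2$ are correct when $\mu_j$ is well separated from the rest of $\Spec(B_N)$, but the hypotheses of (4) impose no level-spacing condition, and under the stated bounds $c_1<\xi_i,\eta_i<c_2$ nothing forbids $\mu_{j+1}-\mu_j$ from being smaller than $|\phi_j(1)|^2$, in which case the ``regular part'' of $F$ at $\lambda_0^2=-m^{-1}\mu_j$ blows up and your perturbative step is not valid. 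Your proposed repair --- bounding $|p(\lambda)|$ from below on the half-strip $\{0<\Re\lambda\le ce^{-c'N}\}$ --- is only a sketch: the estimate one actually needs, an upper bound on $\sum_k|\phi_k(1)|^2\,(x^2+y^2+m^{-1}\mu_k)/|\lambda^2+m^{-1}\mu_k|^2$ that is uniform in $y$, is nontrivial precisely in the degenerate regime, and nothing in your proposal supplies it. A second, smaller issue: the rank-one factorization presumes $\Gamma=\gamma_1 e_1 e_1^T$, i.e.\ $I=\{1\}$; the statement only assumes $1\in I$, so if there is also friction at other particles your factorization has to be replaced by a higher-rank update, which you do not address.

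The paper's own proof of (4) avoids all of this by never touching the characteristic polynomial. Given an eigenvalue $\lambda$ with $\Re\lambda=\lambda_S$ and normalized eigenvector $u$ of $M_N$, one has $(\mathbf m^{-1}B_N+\lambda^2)u=\lambda\Gamma u$; pairing with $u$ and taking imaginary parts gives directly $\Im(\lambda^2)=\Im(\lambda)\sum_{i\in I}\gamma_i|u_i|^2$, hence $\lambda_S=\tfrac12\sum_{i\in I}\gamma_i|u_i|^2\ge\tfrac{\gamma_1}{2}|u_1|^2$, valid for any $I\ni 1$ and with no level-spacing hypothesis whatsoever. The transfer-matrix bound is then applied to $u$ itself (using the eigenvector equation as a three-term recurrence for the interior sites) to give $1=\|u\|^2\le C^{2N}|u_1|^2$, which combined with the previous display yields $\lambda_S\gtrsim C^{-2N}$. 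Your transfer-matrix input is the same ingredient, but the paper's identity $\lambda_S=\tfrac12\sum_{i\in I}\gamma_i|u_i|^2$ is the key that lets one bypass residues, Cauchy interlacing, and the delicate degenerate-eigenvalue analysis entirely. I would recommend replacing your perturbative machinery with this direct pairing argument.
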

\begin{proof}
(1): The determinant formula follows from general properties of block matrices. By setting $\lambda=0$ it follows that $M_{[N]^d}$ is invertible if and only if $\textbf{m}_{[N]^d}^{-1}B_{[N]^d}$. 
\medskip

\noindent
(2): If $I=[N^d]$ and $\Gamma=\gamma I$ then $\operatorname{det}(\lambda^2-\gamma \lambda + \textbf{m}_{[N]^d}^{-1}B_{[N]^d})=0$ is equivalent to solving $\lambda^2- \gamma \lambda +\mu=0$ where $\mu \in \Spec(\textbf{m}_{[N]^d}^{-1}B_{[N]^d}).$ Now as the product of two positive definite matrices, $\textbf{m}_{[N]^d}^{-1}B_{[N]^d}$ has again positive eigenvalues. Thus, all solution to this equation have their real part bounded away from zero.

\noindent
(3) is a consequence of the identity $$\sum_{\lambda \in \Spec(M_{[N]^d})} \lambda= \operatorname{tr}(M_{[N]^d})  = \operatorname{tr}(\Gamma).$$
Since we have $2N^d$ (counting multiplicity) positive terms that all satisfy $\Re(\lambda) \ge 0$ where $\lambda \in \Spec(M_{[N]^d})$, and by assumption $\operatorname{tr}(\Gamma) =\mathcal O(N^{d-1}),$ we conclude that $\lambda_S = \mathcal O(N^{-1})$: Indeed we write $$ \sum_{\lambda \in \Spec(M_{[N]^d})} \Re(\lambda) \ge (2N^d) \inf\{ \Re(\lambda): \lambda \in \Spec(M_{[N]^d}) \} = \mathcal O(N^{d-1})$$ and thus $$\lambda_S=  \inf\{ \Re(\lambda): \lambda \in \Spec(M_{[N]^d}) \} = \mathcal{O}(N^{-1}).$$

\noindent
(4): We introduce the transfer matrix \cite[(1.29)]{T}
\begin{equation}
\label{eq:cocycle}
A_i(\lambda) = \left(\begin{matrix} \frac{ \lambda^2-(\xi_{i}+\xi_{i+1}+\eta_{i+1})}{\xi_{i+1}} & -\frac{\xi_{i}}{\xi_{i+1}} \\ 1 & 0 \end{matrix} \right).
\end{equation}
Thanks to the tridiagonal and symmetric form of $B_N$, the transfer matrix \eqref{eq:cocycle} allows us to write the solution to $(B_N+\lambda^2)u=0$ inductively as
\[ \left( \begin{matrix} u_{i+1} \\ u_i \end{matrix} \right) = \prod_{j=i-1}^1 A_j(\lambda) \left(\begin{matrix} u_{2} \\ u_1 \end{matrix} \right).\] This way, $\left\lVert \left( u_{i+1}, u_i  \right)^T \right\rVert \le C^{i-1} \left\lVert \left(  u_{2},u_1 \right)^T \right\rVert $
with boundary conditions 
\begin{equation*}
\begin{split}
u_2 &= \frac{\left(\lambda^2-\lambda+\eta_{1}+\xi_{1}\right)}{\xi_1} u_1 \text{ and }
u_{N-1}= \frac{\left(\lambda^2-\lambda+\eta_{N}+\xi_{N-1}\right)}{\xi_{N-1}} u_N.
\end{split}
\end{equation*}
and where $ C= \sup_{j} \|A_j\|$ . \\ Let $\lambda \in \Spec(M_N)$ with $\Re(\lambda)= \lambda_S,$ then there is $u$ normalized such that $$(\textbf{m}_N^{-1}B_N+\lambda^2)u=\lambda \Gamma u.$$
In particular, this implies by taking the inner-product with $u$ again: $$\langle (\textbf{m}_N^{-1}B_N+\lambda^2)u,u \rangle = \lambda \langle \Gamma u,u \rangle,$$ and since the real and the imaginary parts in both sides should be the same, we write for the imaginary part $\Im(\lambda^2) = \Im(\lambda) \sum_{i \in I} \gamma_i \vert u_i \vert^2.$
Writing now $\lambda=\lambda_S+ i \Im(\lambda)$ yields 
\begin{equation}
\label{eq:sgap}
\lambda_S  = \sum_{i \in I} \gamma_i \vert u_i \vert^2/2 \ge \frac{\gamma_1 |u_1|^2}{2}.
\end{equation}

Since $u$ is normalized this implies, using also \eqref{eq:sgap}, that 
\[1 = \sum_{i=1}^N \vert u_i \vert^2 \le C_1^{2N}
 \vert u_1 \vert^2 \le 2 \frac{ C_1^{2N}}{\gamma_1} \lambda_S\] which implies the claim as we assumed that there is friction at the first particle.
 \end{proof}
 \begin{rem} The artificial case $(2)$, in which we assume friction at every particle, and the result in $(3)$ show that it is the sub-dimensionality of the particles experiencing friction that causes the spectral to close for almost all configurations of the chain of oscillators. 
 \end{rem} 
 
\section{Proofs of the main results}

\subsection{Reduction method from scattering theory}
In a preliminary step, we harness the low-rank character of the perturbation and reduce the study of the spectral gap to an auxiliary problem.

The following Lemma reduces the dimension of the spectral analysis of $\Omega_{[N]^d} \in \mathbb C^{N^d \times N^d}$, which determines the spectral gap of the generator \eqref{eq:generator}, to an equivalent problem for a low-dimensional \emph{Wigner matrix} $\widetilde{R}_{I} \in \mathbb C^{ \vert I \vert \times \vert I \vert}$ and connects the low-dimensional Wigner matrix to the eigenvectors of the off-diagonal blocks of $\Omega_N$. For more background on this method, that originates from scattering theory, we refer to \cite{SZ}. We apply it here to study the spectra of low-rank perturbations, due to friction at the boundary oscillators, of the Hamiltonian system.

\begin{lemm}[Low-rank perturbations]
\label{lemm:SGI}
Let $B$ be a self-adjoint matrix on $\CC^{N^d}$ with eigenvalues $\lambda_j$ and eigenvectors $v_j$ and consider the matrix $\mathcal A =i\Omega= \mathcal A_0 + i \Gamma \oplus 0_{\mathbb C^{N^d \times N^d}}$ where $\mathcal A_0 = \left(\begin{matrix} 0 & -iB \\ iB & 0 \end{matrix} \right).$ We then have that $\lambda \in \Spec(\mathcal A)$ if and only if $i \in \Spec(\widetilde{R}_{I}(\lambda))$ where for $V_j^{\pm}=\tfrac{1}{\sqrt{2}}( v_j,\pm i v_j)^T$ 
\begin{equation}
\label{eq:Wigner}
\widetilde{R}_{I}(\lambda)= \sum_{ j=1}^{N^d} (\lambda-\lambda_j)^{-1}  \sum_{\pm}\sum_{i_1,i_2 \in I} \sqrt{\gamma_{i_1}\gamma_{i_2}}\langle V_j^{\pm}, e_{i_1}^{2N^d} \rangle \langle e_{i_2}^{2N^d},V_j^{\pm}\rangle e_{i_1}^{\vert I\vert} \otimes e_{i_2}^{\vert I\vert} 
\end{equation}
and $e_j^{n}$ is the $j$-th unit vector in $\mathbb C^n.$
\end{lemm}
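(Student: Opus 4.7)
The key observation is that $\mathcal A$ differs from the self-adjoint block operator $\mathcal A_0$ by the rank-$\vert I\vert$ perturbation $i\Gamma\oplus 0=iGG^{*}$, with $G:\CC^{\vert I\vert}\to\CC^{2N^d}$ defined by $G=\sum_{k\in I}\sqrt{\gamma_k}\,e_k^{2N^d}(e_k^{\vert I\vert})^{*}$ (the vectors $e_k^{2N^d}$ being placed in the upper block). My plan therefore consists of three stages: a Schur factorisation of $\det(\mathcal A-\lambda)$, Sylvester's determinant identity to reduce the problem to dimension $\vert I\vert$, and a spectral decomposition of $\mathcal A_0$ to match the resulting matrix with the rank-one summands in \eqref{eq:Wigner}.

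First, for $\lambda\notin\Spec(\mathcal A_0)$, I would write
\[
\det(\mathcal A-\lambda)=\det(\mathcal A_0-\lambda)\,\det\!\bigl(I_{2N^d}+i(\mathcal A_0-\lambda)^{-1}GG^{*}\bigr),
\]
and apply Sylvester's identity to swap the factors of $GG^{*}$:
\[
\det\!\bigl(I_{2N^d}+i(\mathcal A_0-\lambda)^{-1}GG^{*}\bigr)=\det\!\bigl(I_{\vert I\vert}+iG^{*}(\mathcal A_0-\lambda)^{-1}G\bigr).
\]
This reduces $\lambda\in\Spec(\mathcal A)\setminus\Spec(\mathcal A_0)$ to the $\vert I\vert$-dimensional statement $i\in\Spec\!\bigl(G^{*}(\mathcal A_0-\lambda)^{-1}G\bigr)$, using the elementary equivalence $\det(I+iM)=0\iff-1\in\Spec(iM)\iff i\in\Spec(M)$.

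Second, to identify this resolvent-sandwich with $\widetilde R_I(\lambda)$, I would diagonalise $\mathcal A_0$ explicitly. A direct computation from the block form of $\mathcal A_0$ together with $Bv_j=\lambda_j v_j$ shows that $\mathcal A_0 V_j^{\pm}=\pm\lambda_j V_j^{\pm}$, and the $\{V_j^{\pm}\}_{j,\pm}$ form an orthonormal basis of $\CC^{2N^d}$. Inserting the resolvent expansion
\[
(\mathcal A_0-\lambda)^{-1}=\sum_{j=1}^{N^d}\sum_{\pm}\frac{V_j^{\pm}(V_j^{\pm})^{*}}{\pm\lambda_j-\lambda}
\]
into $G^{*}(\mathcal A_0-\lambda)^{-1}G$ and evaluating the pairings $\langle V_j^{\pm},e_{i_1}^{2N^d}\rangle$ and $\langle e_{i_2}^{2N^d},V_j^{\pm}\rangle$ produces exactly the rank-one summands in \eqref{eq:Wigner}, once one combines the two sign contributions via the partial-fraction identity $\tfrac{1}{\lambda_j-\lambda}+\tfrac{1}{-\lambda_j-\lambda}=\tfrac{-2\lambda}{\lambda^2-\lambda_j^2}$.

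Finally, I would handle the exceptional case $\lambda\in\Spec(\mathcal A_0)$ by analytic continuation, observing that the factorisation above extends as a polynomial identity in $\lambda$ once the prefactor $\det(\mathcal A_0-\lambda)$ absorbs the poles of $G^{*}(\mathcal A_0-\lambda)^{-1}G$. The main obstacle I expect is precisely this pole-bookkeeping: one must verify that an eigenvalue $\pm\lambda_j$ of $\mathcal A_0$ remains an eigenvalue of $\mathcal A$ exactly when the corresponding eigenvector $V_j^{\pm}$ is orthogonal to $\operatorname{Ran}(G)$, so that the statement ``$i\in\Spec(\widetilde R_I(\lambda))$'' correctly matches the spectral condition on $\mathcal A$ even at the singular points. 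This is the standard Grushin-/Feshbach-type reduction from scattering theory, cf.\ \cite{SZ}, so the only real algebraic content of the lemma is the explicit identification of $\widetilde R_I(\lambda)$ with the Sylvester complement $G^{*}(\mathcal A_0-\lambda)^{-1}G$.
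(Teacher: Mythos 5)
Your proposal is correct and takes essentially the same route as the paper: you factor $\Gamma\oplus 0 = GG^{*}$ with $G$ exactly the paper's $A_I$, use Sylvester's determinant identity to reduce $\det(\lambda-\mathcal A)$ to the $\lvert I\rvert\times\lvert I\rvert$ determinant $\det(\operatorname{id}_{\lvert I\rvert}-i\widetilde R_I(\lambda))$, and then expand the resolvent of $\mathcal A_0$ in the eigenbasis $\{V_j^{\pm}\}$ to obtain the explicit formula. The only difference is that you explicitly flag the bookkeeping needed at the exceptional points $\lambda\in\Spec(\mathcal A_0)$ (where the determinant identity becomes a polynomial identity after clearing poles), which the paper's proof leaves implicit; this is a reasonable extra caution but does not change the argument.
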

\begin{proof}
We define matrices $A_{I}= \left\{\sqrt{\gamma_a} e_a^{2N^d}(i) \right\}_{i \in [2N^d],a \in I} \in \mathbb C^{2N^d \times \vert I \vert}$ and then have that the friction matrix is given by $\Gamma  = A_I A_I^*.$

The Wigner $\widetilde R_{I}$-matrix is defined as $$\widetilde R_{I}(\lambda):=A_I^{*} (\lambda-\mathcal A_0)^{-1}A_{I} \in \mathbb C^{\vert I \vert \times \vert I \vert}.$$

We then obtain from properties of the determinant, and Sylvester's determinant identity in particular, that
\begin{align*}
\operatorname{det}\left( \operatorname{id}_{\vert I \vert} -i \widetilde R_{I}(\lambda) \right) &= \operatorname{det}\left( \operatorname{id}_{\vert I \vert} -i A_I^{*} (\lambda-\mathcal A_0)^{-1}A_{I}  \right) =  \operatorname{det}\left( \operatorname{id}_{\vert I \vert} -i  (\lambda-\mathcal A_0)^{-1} \Gamma \right) \\ &= \operatorname{det} \left( (\lambda-\mathcal A_0)^{-1} (\lambda - \mathcal A_0 - i \Gamma)  \right)\\
&=  \operatorname{det}  \left( (\lambda-\mathcal A_0)^{-1} \right) \operatorname{det} \left(\lambda - \mathcal A \right).
\end{align*}
Rearranging this identity shows that
\begin{equation}
\begin{split}
0=\operatorname{det}(\lambda-\mathcal A) = \operatorname{det}(\lambda-\mathcal A_0) \operatorname{det}\left(\operatorname{id}_{\vert I \vert} - i \widetilde R_{I}(\lambda)\right).
\end{split}
\end{equation}
Thus, all eigenvalues $\lambda$ of the high-dimensional matrix $\mathcal A$ coincide with values $\lambda$ for which $i \in \Spec(\widetilde R_{I}(\lambda)).$ 
The eigenvectors of $\mathcal A_0$ are given by $V_j^{\pm}=\tfrac{1}{\sqrt{2}}( v_j,\pm i v_j)^T$ where 
 $v_j$ are eigenvectors of $B$ to eigenvalues $\lambda_j.$ 
 We thus find the following expression for $\widetilde{R}_I$
\[\widetilde{R}_{I}(\lambda)= \sum_{ j=1}^{N^d} (\lambda-\lambda_j)^{-1}  \sum_{\pm}\sum_{i_1,i_2 \in I} \sqrt{\gamma_{i_1}\gamma_{i_2}}\langle V_j^{\pm}, e_{i_1}^{2N^d} \rangle \langle e_{i_2}^{2N^d},V_j^{\pm}\rangle e_{i_1}^{\vert I\vert} \otimes e_{i_2}^{\vert I\vert}.  \]
 \end{proof}

 \subsection{One-dimensional homogeneous chain}  We first study the behaviour of a one-dimensional chain of oscillators that consists of particles with the same physical properties. The limiting discrete Schr\"odinger operator $B_{\infty}$ possesses only absolutely continuous spectrum, by standard properties of the discrete Laplacian, and we find a polynomially fast rate for the closing of the spectral gap:
 \begin{prop}[Homogeneous chain]
 \label{prop:hom}
Let all pinning and interaction parameters $\eta_i>0$, $\xi_i>0$ of the potentials, and masses $m_i$ coincide, respectively and assume that there is at least one particle with non-zero friction and diffusion at one of the terminal ends of the chain. The spectral gap of the harmonic chain of oscillators satisfies $N^{-3}\lesssim \lambda_S \lesssim N^{-3}.$
 \end{prop}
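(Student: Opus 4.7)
The plan is to apply Lemma \ref{lemm:SGI} with $B=\mathbf m_N^{-1/2}B_N^{1/2}$ to trade the $2N\times 2N$ spectral problem for $\Omega_N$ for a scalar rational equation of the type $\widetilde R_I(\lambda)=\pm i$. Since the friction set $I\subseteq\{1,N\}$ has $|I|\le 2$, the matrix $\widetilde R_I(\lambda)$ is at most $2\times 2$, and its entries are computable from the explicit Neumann spectral data of the homogeneous Jacobi matrix: $\omega_j^{2}=(\eta+4\xi\sin^{2}(\pi j/(2N)))/m$ for $j=0,\dots,N-1$, with eigenvectors satisfying $v_j(1)^2=(2/N)\cos^2(\pi j/(2N))$ ($j\ge 1$), $v_0(1)^2=1/N$, and the reflection identity $v_j(N)=(-1)^j v_j(1)$. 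In the case $I=\{1,N\}$ the reflection symmetry forces $\widetilde R_I(\lambda)$ to commute with the flip, and the problem diagonalises into the two scalar equations
\[
F_{\pm}(\lambda)\;:=\;2\gamma\lambda\sum_{j:\,(-1)^{j}=\pm 1}\frac{v_j(1)^{2}}{\lambda^{2}-\omega_j^{2}}\;=\;i ,
\]
one per parity sector; the case $|I|=1$ produces the single equation $\gamma\lambda\, G_{11}(\lambda^{2})=i$ with $G_{11}(z)=\sum_j v_j(1)^2/(z-\omega_j^2)$.

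For the \emph{upper bound} $\lambda_S\lesssim N^{-3}$ I would track the topmost edge modes. Because $v_{j}(1)^{2}=(2/N)\sin^{2}(\pi(N-j)/(2N))$ near $j=N-1$, the edge mode $j^\star\in\{N-1,N-2\}$ of the appropriate parity has residue $\mathrm{res}_{j^\star}=\gamma v_{j^\star}(1)^2\asymp \gamma/N^{3}$. On the other hand the nearest neighbour in $\omega$-coordinates is $\omega_{j^\star}-\omega_{j^\star-2}\asymp N^{-2}$, which exceeds the size of the perturbation $\gamma \mathbf m^{-1/2}\Gamma\oplus 0$ projected onto this mode. Non-degenerate first-order perturbation theory applied to $\mathcal{A}=\mathcal{A}_{0}+i\Gamma\oplus 0$ then yields an eigenvalue
\[
\lambda^\star_{j^\star}\;=\;\omega_{j^\star}+i\gamma v_{j^\star}(1)^{2}+O(\gamma^{2}N^{-4}),
\]
giving $\Im\lambda^\star_{j^\star}\asymp N^{-3}$ and hence $\lambda_S\lesssim N^{-3}$; this even proceeds rigorously via the standard resolvent expansion because the denominators $\omega_{j^\star}^{2}-\omega_k^{2}$ are uniformly $\gtrsim N^{-2}$ for $k\neq j^\star$.

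For the \emph{lower bound} $\lambda_S\gtrsim N^{-3}$ I would argue that every root $\lambda^\star$ of $F_{\pm}(\lambda)=i$ satisfies $\Im\lambda^\star\gtrsim N^{-3}$. Near any pole $\omega_{j^\star}$, expanding $F_\pm(\lambda)=\mathrm{res}_{j^\star}/(\lambda-\omega_{j^\star})+F_{\mathrm{reg},j^\star}(\lambda)$ and solving gives
\[
\Im\lambda^\star\;\approx\;\frac{\mathrm{res}_{j^\star}}{1+F_{\mathrm{reg},j^\star}(\omega_{j^\star})^{2}},
\]
since $F_{\mathrm{reg},j^\star}(\omega_{j^\star})$ is real up to corrections of order $\Im\lambda^\star$. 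The key estimate is therefore a uniform bound $|F_{\mathrm{reg},j^\star}(\omega_{j^\star})|\le C\gamma N$. For bulk indices this reduces via $\omega_{j^\star}^{2}-\omega_k^{2}=(4\xi/m)(\sin^{2}(\pi j^\star/(2N))-\sin^{2}(\pi k/(2N)))$ to principal-value sums of the form $\sum_{k\neq j^\star}\cot$-type expressions, which can be controlled with the classical identity $\sum_{l=1}^{n-1}\cot^{2}(\pi l/n)=(n-1)(n-2)/3$ together with a Riemann-sum bound on the density of states. For edge indices the smallness of $\mathrm{res}_{j^\star}$ already dominates and the first-order argument above applies. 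Finally, solutions $\lambda^\star$ not close to any pole are excluded from the region $\Im\lambda<cN^{-3}$ by the monotonicity of $F_\pm$ on each interval between consecutive poles on the real axis together with the computation of $|F_\pm'(t_0)|\asymp \gamma N^{3}$ at the interior zeros $t_0$, which bounds $|\Im\lambda^\star|$ below by $1/|F_\pm'(t_0)|\gtrsim 1/(\gamma N^3)$.

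The \emph{main obstacle} is the uniform estimate $|F_{\mathrm{reg},j^\star}(\omega_{j^\star})|\le C\gamma N$ across \emph{all} regimes of $j^\star$. In the bulk the sum involves significant cancellations between neighbouring near-resonant terms; near the spectral edges the level spacing degenerates to order $N^{-2}$ so the density-of-states heuristic must be replaced by explicit Chebyshev/cotangent identities. Matching the two regimes in a single clean bound — and upgrading the heuristic residue-plus-regular-part expansion to a rigorous quantitative statement including the error terms — is the technical heart of the proof.
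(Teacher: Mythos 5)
Your reduction shares the paper's opening moves: Lemma~\ref{lemm:SGI}, the explicit Neumann spectral data for the homogeneous Jacobi matrix, and the reflection identity $v_j(N)=(-1)^{j-1}v_j(1)$ to diagonalise the $2\times2$ Wigner matrix into parity sectors. For the \emph{upper bound} your first-order perturbation expansion around the edge mode is essentially equivalent to what the paper does: both exploit that the level spacing $\omega_{N}-\omega_{N-2}\asymp N^{-2}$ exceeds the residue $\gamma\lvert v_N(1)\rvert^2\asymp N^{-3}$, so both routes are fine. For the \emph{lower bound}, however, you and the paper part ways. The paper first recentres the Wigner matrix at $\lambda_N(\sqrt{B_N})$ (so $\mu_N=0$), then uses the geometric-series split
\[
(\lambda-\mu)^{-1}=-\mu^{-1}-\mu^{-2}\lambda-\mu^{-2}\lambda^2(\mu-\lambda)^{-1}
\]
to write the scalar equation as $f(\lambda)+g(\lambda)=0$ with $f(\lambda)=\nu\lambda$ linear in $\lambda$ and $g$ dominated by the constant $(\gamma_1+\gamma_N)\lvert v_N(1)\rvert^2\sim N^{-3}$; Rouch\'e on $\partial B(0,r_N)$ then gives both the upper bound (with $r_N\sim N^{-3}$) and the lower bound (with $r_N=o(N^{-3})$), all localised to the single pole at $\mu_N$. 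Your lower bound instead tries to control roots near \emph{every} pole via the decomposition $F_\pm=\mathrm{res}_{j^\star}/(\lambda-\omega_{j^\star})+F_{\mathrm{reg},j^\star}$ and the uniform bound $\lvert F_{\mathrm{reg},j^\star}(\omega_{j^\star})\rvert\le C\gamma N$. You rightly flag this uniform estimate as the technical heart, but the cotangent/Riemann-sum sketch you offer does not establish it, especially near the spectral edges where the spacing degenerates — so as written your lower bound is incomplete precisely where it matters. The paper's decomposition into a linear-in-$\lambda$ piece plus a dominated remainder is what makes its local Rouch\'e argument close cleanly; without some analogue of that split, your residue-plus-regular-part expansion is not quantitatively controlled. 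You should either import the paper's $f+g$ splitting (and then the global sum over poles becomes unnecessary) or actually prove the uniform bound on $F_{\mathrm{reg},j^\star}$ across all indices, which is a nontrivial task you have only outlined.
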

 \begin{proof}
 
The eigenvectors to the root of the discrete Schr\"odinger operator $\sqrt{B_N}$, defined in \eqref{eq:Schroe}, coincide with the eigenvectors to the discrete Laplacian and are just given by
\begin{equation}
\label{eq:eigenvectors}
v_j(i)= \begin{cases}
N^{- \frac{1}{2}} & , \ \mbox{j = 1}\\
\sqrt{\frac{2}{N}} \cos\left(\frac{\pi (j - 1)\left(i - \tfrac{1}{2}\right)}{N}\right) & , \ \mbox{otherwise}
\end{cases}
\end{equation}
with eigenvalues $$ \lambda_j(\sqrt{B_N}) = \sqrt{4 \xi\sin^2\left(\frac{\pi (j - 1)}{2N}\right)+\eta}$$ of $\sqrt{B_N}$. 
We then define $\mu_{j}:= \lambda_j(\sqrt{B_N})-\lambda_N(\sqrt{B_N})$ and observe that by Taylor expansion we have
\begin{equation}
\label{eq:estmate}
\vert \sqrt{4\xi+\eta}-\lambda_{j}(\sqrt{B_N}) \vert \lesssim \vert \lambda_{N}(\sqrt{B_N})-\lambda_{j}(\sqrt{B_N})\vert
\end{equation}
for $j \le N-1,$ such that by using this estimate in the final step
\begin{equation}
\begin{split}
\label{eq:mui}
 \mu_{j}^{-1}  = \frac{\left\vert  \lambda_j(\sqrt{B_N}) + \lambda_N(\sqrt{B_N}) \right\vert }{\left\vert \lambda_j(\sqrt{B_N})^2 -\lambda_N(\sqrt{B_N})^2 \right\vert}&\lesssim \vert \lambda_j(\sqrt{B_N})^2 -\lambda_N(\sqrt{B_N})^2 \vert^{-1} \\
 &\lesssim \vert \lambda_j(\sqrt{B_N})^2 -(4\xi+\eta) \vert^{-1} \\
 \end{split}
 \end{equation}
This yields by combining \eqref{eq:mui} with the explicit expression of the eigenvalues \eqref{eq:eigenvectors}
\begin{equation}
\label{eq:bdmu}
\mu_{j}^{-1} \lesssim \left\vert 4\xi \sin^2\left( \frac{\pi(j-1)}{2N}\right)-4\xi \right\vert^{-1} \lesssim \left\lvert \cos\left( \frac{\pi(j-1)}{2N}\right)\right\rvert^{-2}= \mathcal{O}(N^2 ). 
\end{equation}
Note that the last equality comes from the leading order in Taylor expansion.
Using \eqref{eq:bdmu} for $\mu_j^{-2}$ and also the explicit form of the eigenvectors \eqref{eq:eigenvectors}, yields that
\begin{equation}
\label{eq:start1}
\sum_{j=1}^{N-1} \frac{ \vert v_j(1) \vert^2}{\mu_{j}} \lesssim \frac{2}{N} \sum_{j=1}^{N-1} \left\lvert \cos\left(\frac{\pi (j - 1)}{2N}\right)\right\rvert^2 \left\lvert \cos\left( \frac{\pi(j-1)}{2N}\right)\right\rvert^{-2}  =  \mathcal O(1).
\end{equation}

We also record that again by \eqref{eq:bdmu} and \eqref{eq:eigenvectors}
\begin{equation}
\begin{split}
\label{eq:start2}
\sum_{j=1}^{N-1} \frac{ \vert v_j(1) \vert^2}{\mu_{j}^2} &\lesssim \frac{1}{N} \sum_{j=1}^{N-1} \left\lvert \cos\left(\frac{\pi (j - 1)}{2N}\right)\right\rvert^2 \left\lvert \cos\left( \frac{\pi(j-1)}{2N}\right)\right\rvert^{-4} \\&  \lesssim  \left\vert \operatorname{cos} \left(\frac{\pi}{2} - \frac{\pi}{N} \right)  \right\vert^{-2} = \mathcal O(N^2)
\end{split}
\end{equation}
where the last estimate follows by Taylor expanding around $\pi/2$. 
Since $\cos(\pi k-x) =(-1)^k \cos(x),$ we observe that also $v_j(1) = (-1)^{j-1} v_j(N)$ as

\[\cos \left(\tfrac{\pi(j-1)}{2N} \right) = \cos\left(\pi (j-1)- \tfrac{\pi (j-1)(N-\tfrac{1}{2})}{N} \right) = (-1)^{j-1} \cos\left(\tfrac{\pi (j-1)(N-\tfrac{1}{2})}{N} \right).\]

Let the rescaled Wigner $R$-matrix be defined using \eqref{eq:Wigner} as $$R_I(\lambda):=\widetilde R_I(\lambda+\lambda_N).$$ 
To make the sums on the right of \eqref{eq:Wigner} more transparent, we define matrices
\begin{equation}
\Gamma_{[1]}(j-1)=1 \text{ and }\Gamma_{\left\{1,N \right\}}(j-1) = \left(\begin{matrix} \gamma_1 &(-1)^{j-1} \sqrt{\gamma_1\gamma_N} \\ (-1)^{j-1}\sqrt{\gamma_1\gamma_N} & \gamma_N  \end{matrix} \right).
\end{equation}
These matrices allow us to rewrite the rescaled Wigner $R$-matrix in the more compact form
\begin{equation}
R_I(\lambda) = \sum_{ j=1}^N (\lambda-\mu_j)^{-1}  \vert v_j(1) \vert^2 \Gamma_{I}(j-1).
\end{equation}
Let us now restrict to the case $I= \left\{1,N \right\}$ and reduce $R_I \in \mathbb C^{2 \times 2}$ to a scalar equation (if there is friction at one end only, the Wigner $R$-matrix $R_I$ is already scalar). We then find that vectors $u=\frac{(\sqrt{\gamma_1},\sqrt{\gamma_N})^T}{\sqrt{\gamma_1+\gamma_N}}$ are eigenvectors to matrices $\Gamma_I$ such that $$\Gamma_{\{1,N\}}(j-1) u = (\gamma_1+\gamma_N)\delta_{j,\text{odd}} u$$ where $\delta_{j,\text{odd}}=1$ if $j$ is odd and $0$ otherwise. Similarly, for $j$ even, we can use vectors $u=\frac{(\sqrt{\gamma_1},-\sqrt{\gamma_N})^T}{\sqrt{\gamma_1+\gamma_N}}$ instead.

\noindent
Without loss of generality, let $N$ be odd, and $u=\frac{(\sqrt{\gamma_1},\sqrt{\gamma_N})^T}{\sqrt{\gamma_1+\gamma_N}}.$ It follows from \eqref{eq:eigenvectors} and Taylor expansion that $2\vert v_N(1)\vert^2 = \mathcal O(N^{-3}).$
We now use the expansion 
\begin{equation}
\label{eq:geometricseries}
(\lambda-\mu)^{-1}  = -\mu^{-1} \sum_{n=0}^{\infty}  \left(\lambda \mu^{-1}\right)^{n} = -\mu^{-1}- \mu^{-2}\lambda  -\mu^{-2} \lambda^2(\mu-\lambda)^{-1} 
\end{equation}
to rewrite the equation $(R_I(\lambda)-i)u=0$ in terms of scalar functions 
\begin{equation}
\begin{split}
\label{eq:f,g}
f(\lambda) &= \nu \lambda\quad \text{with}\quad  \nu :=- i-(\gamma_1+\gamma_N) \sum_{j=1,\ j \text{ odd}}^{N-2} \frac{\vert v_j(1) \vert^2}{\mu_j} \text{ and }\\
g(\lambda) &=(\gamma_1+\gamma_N)\left( \vert v_{N}(1) \vert^2 -\sum_{j=1,\ j \text{ odd}}^{N-2} \frac{\vert v_j(1) \vert^2\lambda^2}{\mu_j^2}- \sum_{j=1,\ j \text{ odd}}^{N-2}\frac{\vert v_j(1)\vert^2}{\mu_j^2}  \frac{\lambda^3}{\mu_j-\lambda} \right).
\end{split}
\end{equation}
Indeed, since 
\[(R_I(\lambda)-i)u = \left( \sum_{ j=1}^N (\lambda-\mu_j)^{-1}  \vert v_j(1) \vert^2 \delta_{j,\text{odd}} (\gamma_1+\gamma_N) - i \right) u \]
it follows by expanding $ (\lambda-\mu_j)^{-1} $, as in \eqref{eq:geometricseries}, and multiplying by $\lambda$ that
\begin{align*} \lambda ( R_I(\lambda)-i)u = (f(\lambda) + g(\lambda) ) u
\end{align*}
and thus we reduce our problem to a scalar one, since
\begin{equation}
\label{eq:equiv}
R_I(\lambda)u=iu\  \text{ if and only if }\  f(\lambda)+g(\lambda)=0.
\end{equation}
Let us now fix a ball
\begin{equation}
\label{eq:ball}
K:=B(0,r_N)\text{ for some }r_N\text{ to be determined.}
\end{equation}

We then find that for $\lambda \in \partial K$ we have for $f$ 
\begin{equation}
\label{eq:f}
r_N = \vert i \lambda \vert \le \vert f(\lambda) \vert \lesssim r_N.
\end{equation}

Equation \eqref{eq:start2} implies that for $\lambda \in \partial K$ we have for the second term in $g(\lambda)$, as in \eqref{eq:f,g}, that  
\[\left\lvert \lambda \right\rvert^2 \left\lvert \sum_{j=1}^{N-1} \tfrac{\vert v_j(1) \vert^2}{\mu_j^2} \right\rvert =\mathcal O(N^2 r_N^2).\]

Moreover, if we choose $r_N=\mathcal O(N^{-3})$, then by observing that 
by \eqref{eq:bdmu} and Taylor expansion $\cos(x) = -(x-\pi/2) + \mathcal O\left((x-\pi/2)^3\right)$
\begin{equation}
\begin{split}
\mu_j^{-1} &\lesssim\left\lvert \cos\left( \frac{\pi(j-1)}{2N}\right)\right\rvert^{-2} \lesssim  \left\vert \frac{\pi(j-1)}{2N} - \frac{\pi}{2} \right\vert^{-2}\lesssim  \left\vert \frac{\pi(N-1)}{2N} - \frac{\pi}{2} \right\vert^{-2}\\
& \lesssim \left\vert \frac{\pi}{2N} \right\vert^{-2}= \mathcal O(N^2). 
\end{split}
\end{equation}
Regarding the third term of $g(\lambda)$, as in \eqref{eq:f,g}, this one can also be estimated by 
\begin{equation}
\begin{split}
\left\lvert \sum_{j=1,\ j \text{ odd}}^{N-2}\frac{\vert v_j(1)\vert^2}{\mu_j^2}  \frac{\lambda^3}{\mu_j-\lambda} \right\rvert &= \vert \lambda \vert^2   \left\lvert \sum_{j=1,\ j \text{ odd}}^{N-2}\frac{\vert v_j(1)\vert^2}{\mu_j^2} \frac{1}{\frac{\mu_j}{\lambda}-1} \right\rvert \\&  \lesssim N^2 r_N^2 N^{-1}  = \mathcal{O}(N^{-5}). 
\end{split}
\end{equation}
since $\frac{\mu_j}{\lambda} \gtrsim \frac{N^{-2}}{N^{-3}}= \mathcal{O}(N)$ and so $ \frac{\lambda}{\mu_j-\lambda} = \mathcal{O}(N^{-1})$ .

This implies that for $\lambda \in \partial K$ we have 
\begin{equation}
    \begin{split}
    \label{eq:g}
(\gamma_1+\gamma_N)\left\lvert v_N(1) \right\rvert^2-\mathcal{O}(Nr_N^2)  - \mathcal O(N^2r_N^2)   &\le
 \left\lvert g(\lambda) \right\rvert\quad \text{and} \\
  (\gamma_1+\gamma_N)\left\lvert v_N(1) \right\rvert^2+ \mathcal{O}(Nr_N^2)  + \mathcal O(N^2r_N^2) &\ge | g(\lambda)|.
\end{split}
\end{equation}
\smallsection{Upper bound:}
Thus, we choose in \eqref{eq:ball} $r_N:=\frac{\alpha}{2} \left\lvert v_N(1) \right\rvert^2$ with $\alpha$ large enough, (but independent of $N$) such that 
  together with \eqref{eq:f} and the upper bound in \eqref{eq:g}, they imply that on $\partial K$

 \begin{align*}  \vert g(\lambda)\vert\  &\lesssim (\gamma_1+\gamma_N)N^{-3}+  \frac{\alpha^2 N^{-5}}{4} + \frac{\alpha^2 N^{-4}}{4}  \lesssim \mathcal{O}(r_N) \sim \vert f(\lambda) \vert
 \end{align*}
 which is the case if \begin{align*}  (\gamma_1+\gamma_N)N^{-3}+  \frac{\alpha^2 N^{-5}}{4} + \frac{\alpha^2 N^{-4}}{4}   &\lesssim  \frac{\alpha}{2} N^{-3}\quad \text{or}\quad (\gamma_1+\gamma_N) + \frac{\alpha^2}{4}( N^{-2 }+N^{-1} )  \lesssim  \frac{\alpha}{2} . 
 \end{align*}
 For large $N$ and large $\alpha$, the last inequality holds true.

 Therefore asymptotically with $N$, 
$$\vert f(\lambda)\vert > \vert g(\lambda)\vert\  \text{on}\ \partial K.$$
By Rouch\'e's Theorem, $f$ and $f+g$ have the same amount of zeros inside $K$. Since $f$ has precisely one root in $K$ at $\lambda=0$ so does $f+g$. 

This implies by the equivalence \eqref{eq:equiv} that $R(\lambda)u=iu$ has one solution $\lambda$ with $\lambda = \mathcal{O}(N^{-3})$ and so $\lambda_S = \mathcal O(N^{-3})$ which yields the upper bound on the spectral gap.

\smallsection{Lower bound:} The lower bound follows analogously. Assuming the spectral gap would decay faster than $\mathcal O(N^{-3})$, i.e. $\lambda_S/\vert v_N(1) \vert^2 = \smallO(1)$ then we can select $r_N=\vert v_N(1) \vert^2\smallO(1) $ in \eqref{eq:ball}. This way, $g(\lambda)$ does not have a root in $K:=B(0,r_N)$ by the lower bound in \eqref{eq:g}, as $\vert g(\lambda)\vert$ stays away from zero lower-bounded by a leading-order term $(\gamma_1+\gamma_N)\left\lvert v_N(1) \right\rvert^2$. Moreover, by the same lower bound in \eqref{eq:g} and upper bound in \eqref{eq:f} we find that on $\partial K$ we have for this choice of $r_N$
$$\vert g(\lambda)\vert > \vert f(\lambda)\vert\  \text{on}\ \partial K.$$
Thus, since $g$ does not have a root inside $K$, there is also no root to $f+g$ inside $K$ and thus by $\eqref{eq:equiv}$ we necessarily have that $N^{-3}\lesssim \lambda_S.$
\end{proof}
\begin{rem}[Dependence of $\lambda_S$ on the friction] We stress that our proof shows that the spectral gap depends on the friction constants $\gamma_1, \gamma_N,$ of the two terminal particles. In particular, by carefully analyzing this dependence in the proof, we see that there are constants $c_1,c_2>0$ so that $$c_1 \left(\frac{\gamma_1+\gamma_N}{1+\gamma_1+\gamma_N}\right) N^{-3} \le \lambda_S \le c_2(\gamma_1+\gamma_N) N^{-3}. $$
\end{rem}

\subsection{Higher-dimensional homogeneous networks}

We now turn to the $d$-dimensional homogeneous network of oscillators, on a square network for $d \ge 1$. We will show how we can extend ideas from the one-dimensional setting to the multi-dimensional case, in order to compute the spectral gap, by exploiting the separability of the Neumann Laplacian. 

Assuming $\eta$ and $\xi$ to be constant allows us to perform an analogous reduction of the high-dimensional spectral problem to a scalar problem, as in the one-dimensional case. We have a Schr\"odinger operator on $\mathbb{C}^{N^d}$ associated to the dynamics, as the first order part of the generator is expressed through the $2N^d\times 2N^d$ -dimensional matrix $\Omega_{[N]^d}$. 
The multi-dimensional Schr\"odinger operator has then the following spectral decomposition
\begin{equation}
\sqrt{B_{[N]^d}} = \sum_{i_1=1}^N \cdots \sum_{i_d=1}^N \lambda_{i_1 \cdots i_d}(\sqrt{B_{[N]^d}}) v_{i_1 \cdots i_d}^{\otimes d} 
\end{equation}
where $\lambda_{i_1 \cdots i_d}(\sqrt{B_{[N]^d}}) =\left( \sum_{k=1}^d \lambda_{i_k} \right)^{1/2}$ with $ \lambda_k= 4 \xi\sin^2\left(\frac{\pi ( k- 1)}{2N}\right)+\eta.$
The eigenvectors $v_{i_1 \cdots i_d}$ are the product states 
\begin{equation} \label{multi dim evectors}
v_{i_1 \cdots i_d}(j_1, j_2, \dots, j_d)= v_{i_1}(j_1)\cdots v_{i_d}(j_d)
\end{equation} 
such that 
\begin{equation}
v_j(i)= \begin{cases}
N^{- \frac{1}{2}} & , \ \mbox{j = 1}\\
\sqrt{\frac{2}{N}} \cos\left(\frac{\pi (j - 1)\left(i - \tfrac{1}{2}\right)}{N}\right) & , \ \mbox{otherwise}. 
\end{cases}
\end{equation}

\smallsection{$2$-particle friction on the $d$-dimensional network} As a first step we consider friction at two distinguished boundary particles out of the $N^d$. We now show how the method presented above applies if we consider friction at the two corners of the network, Fig. \ref{pic:networks1}, or at the centres of the two edges above and below, Fig. \ref{pic:networks3}. 

\begin{figure}
\centering
\begin{minipage}{0.40\textwidth}
\includegraphics[width=5cm]{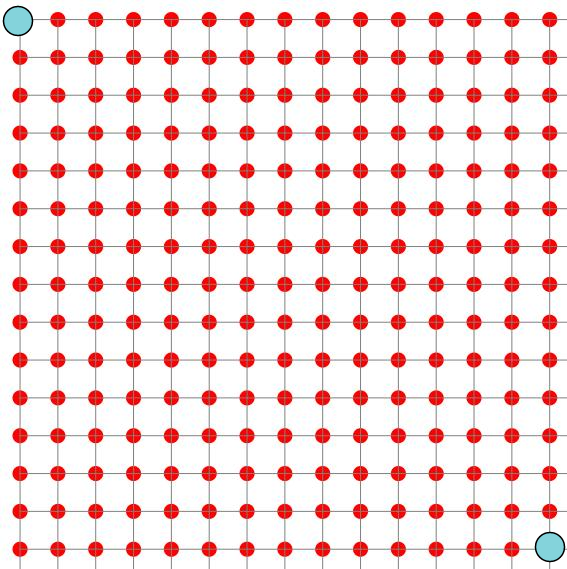}
\caption{Spectral gap $\sim N^{-6}$.}
 \label{pic:networks1}
\end{minipage}
\begin{minipage}{0.40 \textwidth}
\includegraphics[width=5cm]{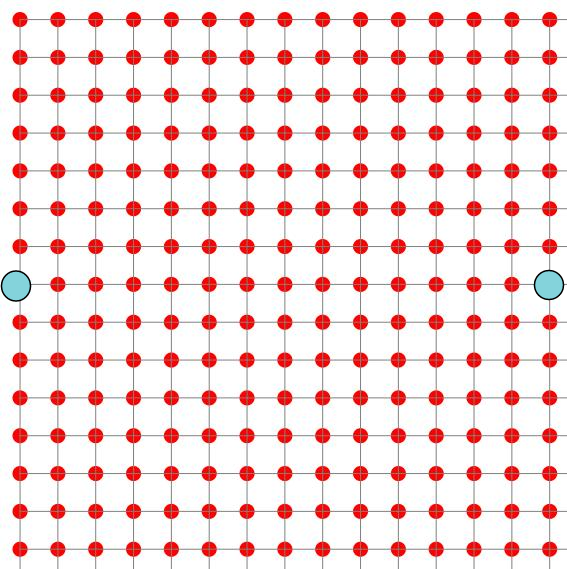} 
\caption{Spectral gap $\sim N^{-4}$.}
 \label{pic:networks3}
\end{minipage}
 \caption{The $\ZZ^2$-subnetwork with friction at the blue particles}
\end{figure}

We remind the high dimensional version of Lemma \ref{lemm:SGI}. We consider the matrix \begin{align} 
i \Omega_{[N]^d} = \left(  \begin{matrix} 0 & -iB_{[N]^d} \\ iB_{[N]^d} &0
\end{matrix} \right)   +   i \Gamma \oplus 0_{\mathbb C^{N^d \times N^d}}
\end{align}  
and reduce the high dimensional spectral problem for $\Omega_{[N]^d}$ to a lower dimensional spectral problem for the Wigner $\widetilde{R}_I$-matrix in $\mathbb{C}^{2 \times 2}$. From this lemma 
we get  the following representation of the $\widetilde{R}_I$-matrix: 
 \begin{align}
 \widetilde{R}_{I}(\lambda)= \sum_{ j=1}^{N^d} (\lambda-\lambda_j)^{-1}  \sum_{\pm}\sum_{i_1,i_2 \in I} \sqrt{\gamma_{i_1}\gamma_{i_2}}\langle V_j^{\pm}, e_{i_1}^{2N^d} \rangle \langle e_{i_2}^{2N^d},V_j^{\pm}\rangle e_{i_1}^{\vert I\vert} \otimes e_{i_2}^{\vert I\vert}
 \end{align}
where $V_{j}^{\pm}=\tfrac{1}{\sqrt{2}}( v_{j},\pm i v_{j})^T$ are of the product form \eqref{multi dim evectors}
 with $v_{j}$ being the eigenvectors of $B_{[N]^d}$.


\begin{prop}[2-particle friction in homogeneous networks] \label{prop: hom d=2,3}
Let the dimension of the network be $d \ge 1$ and all $\eta_i>0$, $\xi_i>0$, and masses $m_i$ coincide, respectively. We consider two different scenarios: \begin{itemize}
\item  First, we assume that the two particles located at $(1,\dots,1),(N,\dots,N)$ are subject to non-zero friction and diffusion. The spectral gap of the harmonic network of oscillators satisfies $$N^{-3d}\lesssim \lambda_S \lesssim N^{-3d}.$$
\item Second, we assume the friction and diffusion acts on the particles located in the centre of the two edges of the network at $(1,  \lceil N/2 \rceil, \dots,  \lceil N/2 \rceil), (N,  \lceil N/2 \rceil, \dots,  \lceil N/2 \rceil)$. Then the spectral gap of the network of oscillators satisfies $$N^{-3-(d-1)}\lesssim \lambda_S \lesssim N^{-3-(d-1)}.$$
\end{itemize}
\end{prop}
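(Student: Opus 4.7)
The plan is to transplant the one-dimensional argument of Proposition \ref{prop:hom} into the $d$-dimensional product setting, taking advantage of the separability of $B_{[N]^d}$. I begin by applying Lemma \ref{lemm:SGI} to $i\Omega_{[N]^d}$ with $B = \sqrt{B_{[N]^d}}$, obtaining a $2\times 2$ Wigner matrix $\widetilde R_I(\lambda)$ in both scenarios (since $|I|=2$), indexed by the pair of friction sites. Using the product structure $v_{\vec j}(\vec i) = \prod_k v_{j_k}(i_k)$ together with the one-dimensional symmetry $v_j(N) = (-1)^{j-1} v_j(1)$, one deduces $v_{\vec j}(N,\dots,N) = (-1)^{\sum_k(j_k-1)} v_{\vec j}(1,\dots,1)$ in the corner case, and $v_{\vec j}(N,\lceil N/2\rceil,\dots,\lceil N/2\rceil) = (-1)^{j_1-1} v_{\vec j}(1,\lceil N/2\rceil,\dots,\lceil N/2\rceil)$ in the center-edge case. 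Either sign pattern endows the matrix $\Gamma_I(\vec j)$ with the same $\pm 1$ off-diagonal structure as in the one-dimensional proof, so selecting the appropriate eigenvector $u_\pm \in \mathbb C^2$ constructed from the two friction strengths reduces the equation $(\widetilde R_I(\lambda+\lambda_{\vec N})-i)u=0$ to a scalar equation in $\lambda$, summed over those $\vec j$ whose sign matches $\vec j=\vec N$.

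From here I isolate the term $\vec j=\vec N$ with $\mu_{\vec N}=0$, multiply by $\lambda$, and Taylor expand the remaining $(\lambda-\mu_{\vec j})^{-1}$ as in \eqref{eq:geometricseries} to write the scalar equation as $f(\lambda)+g(\lambda)=0$ in the manner of \eqref{eq:f,g}. The dominant term in $g$ is proportional to $|v_{\vec N}(\vec p)|^2$, with $\vec p$ a friction site; this size fixes the Rouch\'e radius $r_N$ and the spectral gap. A direct computation yields
\begin{align*}
|v_{\vec N}(1,\dots,1)|^2 &= \prod_{k=1}^d |v_N(1)|^2 \asymp N^{-3d}, \\
|v_{\vec N}(1,\lceil N/2\rceil,\dots,\lceil N/2\rceil)|^2 &= |v_N(1)|^2 \prod_{k=2}^d |v_N(\lceil N/2\rceil)|^2 \asymp N^{-(d+2)},
\end{align*}
using $|v_N(1)|^2 \asymp N^{-3}$ (from \eqref{eq:eigenvectors} and Taylor expansion of $\cos^2$ near $\pi/2$) and $|v_N(\lceil N/2\rceil)|^2 \asymp N^{-1}$, since the relevant cosine stays bounded away from zero for either parity of $N$. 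These sizes give the two claimed rates $N^{-3d}$ and $N^{-(d+2)}$.

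The principal technical obstacle is controlling the multidimensional remainder sums
\begin{equation*}
S_1 := \sum_{\vec j \neq \vec N} \frac{|v_{\vec j}(\vec p)|^2}{|\mu_{\vec j}|}, \qquad S_2 := \sum_{\vec j \neq \vec N} \frac{|v_{\vec j}(\vec p)|^2}{|\mu_{\vec j}|^2},
\end{equation*}
because $|\mu_{\vec j}| \asymp \sum_k (\lambda_N(B_N)-\lambda_{j_k}(B_N))$ does not factor across coordinates. My strategy is to stratify $\vec j$ by the subset $S \subsetneq \{1,\dots,d\}$ of coordinates on which $j_k=N$: these contribute an extra factor $|v_N(\vec p_k)|^2$ per index ($\asymp N^{-3}$ at corner coordinates, $\asymp N^{-1}$ at midpoint coordinates), while on $S^c$ I apply the crude bound $|\mu_{\vec j}|^{-\alpha} \leq \min_{k \notin S}(\lambda_N-\lambda_{j_k})^{-\alpha}$ for $\alpha \in \{1,2\}$, designating one index to absorb the singular factor and applying the one-dimensional sum estimates \eqref{eq:start1}--\eqref{eq:start2} to the remaining $|S^c|-1$ coordinates. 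A short bookkeeping then yields $S_1 = \mathcal O(1)$ and $S_2 = \mathcal O(N^C)$ with $C$ moderate enough that $r_N^2 S_2 = \smallO(r_N)$; the cubic remainder $\sum |v_{\vec j}(\vec p)|^2 \lambda^3/[\mu_{\vec j}^2(\mu_{\vec j}-\lambda)]$ is controlled by $\min_{\vec j \neq \vec N} |\mu_{\vec j}| \asymp N^{-2} \gg r_N$.

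With these bounds, Rouch\'e's theorem on $\partial B(0,r_N)$ with $r_N=\tfrac{\alpha}{2}|v_{\vec N}(\vec p)|^2$ for $\alpha$ large (upper bound), and with $r_N = \smallO(|v_{\vec N}(\vec p)|^2)$ (lower bound), runs exactly as in Proposition \ref{prop:hom}, locating a unique root of $f+g$ of modulus $\asymp |v_{\vec N}(\vec p)|^2$ and hence identifying $\lambda_S$ with the claimed scale in each case.
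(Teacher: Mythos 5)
Your proposal is correct and follows essentially the same route as the paper: apply Lemma~\ref{lemm:SGI} to get a $2\times 2$ Wigner matrix, exploit the separability $v_{\vec j}(\vec i)=\prod_k v_{j_k}(i_k)$ and the one-dimensional sign symmetry to reduce to a scalar equation, pull out the $\vec j=\vec N$ mode, expand the rest via \eqref{eq:geometricseries}, and apply Rouch\'e with $r_N\asymp |v_{\vec N}(\vec p)|^2$ (which you correctly evaluate as $N^{-3d}$ and $N^{-(d+2)}$ in the two scenarios). The one place you genuinely diverge is the control of the multidimensional remainder sums: the paper's $d=2$ computation simply drops one coordinate when bounding $\mu_{ij}^{-1}$ and writes the remainder sums over $1\le i,j\le N-1$, which silently omits the cross terms with $i=N$ or $j=N$ (they turn out to be harmless, contributing $\mathcal O(N^{-3})$ to $S_1$), whereas your stratification by the set $S$ of coordinates with $j_k=N$ treats these strata explicitly and works for general $d$ rather than only $d=2$. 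Your approach buys a cleaner bookkeeping and a genuinely $d$-uniform argument; in exchange it requires the small additional observation — which you subsume under "short bookkeeping" but should be made explicit in the center-edge case — that the index designated to absorb the singular factor $\mu_{\vec j}^{-\alpha}$ must be a corner-like coordinate ($p_{k_0}=1$) whenever one is available in $S^c$, since $\sum_j|v_j(\lceil N/2\rceil)|^2/\mu_j^2 = \mathcal O(N^3)$ is an order of $N$ worse than the $\mathcal O(N^2)$ of \eqref{eq:start2}; with this choice the strata are dominated by $S=\emptyset$ and the estimates $S_1=\mathcal O(1)$, $r_N^2 S_2 = \smallO(r_N)$ go through exactly as you claim.
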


\begin{proof} To keep the notation simple, we restrict us to stating the proof for $d=2$, only and we write for the eigenvalues $\lambda_{ij}(\sqrt{B_{N^2}}) =\left( \lambda_{i} +\lambda_j \right)^{1/2}$ and for 
the eigenvectors $v_{ij}$ which are the product states $
v_{ij}(k,l)= v_{i}(k)v_{j}(l)$. 
As in the one-dimensional case, we compute 
\begin{align}
| \lambda_{NN}(\sqrt{B_{N^2}})- \lambda_{ij}(\sqrt{B_{N^2}})| \gtrsim |(8\xi+ 2\eta)^{1/2}- \lambda_{ij}(\sqrt{B_{N^2}})|,
\end{align}
we define $\mu_{ij}:= \lambda_{ij}(\sqrt{B_{N^2}})- \lambda_{NN}(\sqrt{B_{N^2}}) $ so that  
\begin{align} \mu_{ij}^{-1}  &\lesssim | \lambda_{ij}^2-\lambda_{NN}^2|^{-1} \lesssim | (8\xi+2\eta)- \lambda_{ij}^2|^{-1} \\ &=  \left\vert 8\xi -4\xi \left( \left(\operatorname{sin} \left(\frac{\pi(j-1)}{2N}\right)\right)^2 + \left(\operatorname{sin}\left(\frac{\pi(i-1)}{2N}\right)   \right)^2  \right)\right\vert^{-1} \\ &\lesssim \left\vert \left(\operatorname{cos} \left(\frac{\pi(j-1)}{2N}\right)\right)^2 + \left(\operatorname{cos} \left(\frac{\pi(i-1)}{2N}\right)\right)^2 \right\vert^{-1} \\ &\lesssim \left\vert  \left(\operatorname{cos} \left(\frac{\pi(N-2)}{2N}\right)\right)^2   \right\vert^{-1} = \mathcal{O}(N^2)
\end{align}
where in the last line we Taylor expanded around $\pi/2$. We rescale the Wigner matrix $ R_I(\lambda)= \widetilde{R}_I (\lambda+ \lambda_{NN}) $ and write \begin{align}
R_I(\lambda)= \sum_{i,j}^N (\lambda-\mu_{ij})^{-1} |v_{ij}(1,1)|^2 \Gamma_I \in \mathbb{C}^{2\times 2}
\end{align}
where $$\Gamma_I:= \left( \begin{matrix} \gamma_{11}& (-1)^{i+j-2}\sqrt{\gamma_{11}\gamma_{NN}} \\ (-1)^{i+j-2}\sqrt{\gamma_{11}\gamma_{NN}} &  \gamma_{NN}  \end{matrix} \right) $$ since $ v_{ij}(1,1) =(-1)^{i+j-2} v_{ij}(N,N)$. 
Note that for $i+j=\text{even}$, the vector $u=(\gamma_{11}+\gamma_{NN})^{-1/2}(\sqrt{\gamma_{11}}, \sqrt{\gamma_{NN}})^T$ is an eigenvector to $\Gamma_I$: $$ \Gamma_I u = (\gamma_{11} + \gamma_{NN}) \delta_{i+j: even} u,$$ where we use the same notation as in the proof for the one dimension. We focus without loss of generality on this case only, since the remaining scenarios can be treated similarly. With the above formula and by expanding the term $ (\lambda- \mu_{ij})^{-1}$ we are able to rewrite the equation $(R_I(\lambda) - i)u=0$ in terms of two scalar functions $f,g$.  In particular $ \lambda(R_I(\lambda) - i)u = f(\lambda) + g(\lambda) $ with 
\begin{equation}
\begin{split}
f(\lambda) &=- i\lambda -(\gamma_{11}+\gamma_{NN}) \sum_{\substack{i,j=1,\\ i+j \text{ even}}}^{N-1} \lambda \frac{\vert v_{ij}(1,1) \vert^2}{\mu_{ij}}\quad \text{ and }\\
g(\lambda) &=(\gamma_{11}+\gamma_{NN})\left( \vert v_{NN}(1,1) \vert^2 -\sum_{\substack{i,j=1,\\ i+j \text{ even}}}^{N-1} \frac{\vert v_{ij}(1,1) \vert^2\lambda^2}{\mu_{ij}^2}- \sum_{\substack{i,j=1,\\ i+j \text{ even}}}^{N-1}\frac{\vert v_{ij}(1,1)\vert^2}{\mu_{ij}^2}  \frac{\lambda^3}{\mu_{ij}-\lambda} \right).
\end{split}
\end{equation}

We fix a ball $K:= B(0,r_N)$ and we estimate the following terms on the boundary $\partial K$: 
\begin{align}
\left\vert \sum_{\substack{i,j=1,\\ i+j \text{ even}}}^{N-1} \frac{\lambda \vert v_{ij}(1,1) \vert^2}{\mu_{ij}} \right\vert &= |\lambda| \left\vert \sum_{\substack{i,j=1,\\ i+j \text{ even}}}^{N-1}  \frac{ |v_i(1)|^2 |v_j(1)|^2}{\mu_{ij}} \right\vert \\ &\lesssim    \sum_{i,j=1}^{N-1}  N^{-2} \frac{r_N \left\vert \operatorname{cos}\left( \frac{\pi(i-1)}{2N} \right) \right\vert^2  \left\vert \operatorname{cos}\left( \frac{\pi(j-1)}{2N} \right) \right\vert^2   }{\left\vert \operatorname{cos}\left( \frac{\pi(i-1)}{2N} \right) \right\vert^2} = \mathcal{O}(r_N)
\end{align} after Taylor expansions to estimate the norm of the eigenvectors. 
 Also \begin{align}
|\lambda|^2 \left\vert \sum_{\substack{i,j=1,\\ i+j \text{ even}}}^{N-1} \frac{ \vert v_{ij}(1,1) \vert^2}{\mu_{ij}^2} \right\vert &\lesssim |\lambda|^2 \sum_{\substack{i,j=1,\\ i+j \text{ even}}}^{N-1} N^{-2} \frac{ \left\vert \operatorname{cos}\left( \frac{\pi(j-1)}{2N} \right) \right\vert^2}{ \left\vert \operatorname{cos}\left( \frac{\pi(i-1)}{2N} \right) \right\vert^2} \lesssim \mathcal{O}(N^2r_N^2)
 \end{align}
 and 
 \begin{align} 
 \left\lvert \sum_{\substack{i,j=1,\\ i+j \text{ even}}}^{N-1}\frac{\vert v_{ij}(1,1)\vert^2}{\mu_{ij}^2}  \frac{\lambda^3}{\mu_{ij}-\lambda} \right\rvert &= \vert \lambda \vert^2   \left\lvert \sum_{\substack{i,j=1,\\ i+j \text{ even}}}^{N-1}\frac{\vert v_{ij}(1,1)\vert^2}{\mu_{ij}^2} \frac{1}{\frac{\mu_{ij}}{\lambda}-1} \right\rvert = \mathcal{O}(N^4 r_N^3)
 \end{align}
since $\frac{\mu_{ij}}{\lambda}\gtrsim N^{-2}r_N^{-1}$. Therefore we collect the following bounds for $f,g$: \begin{align} 
|f(\lambda)| =\mathcal{O}(r_N), &\quad |g(\lambda)| \gtrsim (\gamma_{11}+\gamma_{NN}) |v_{NN}(1,1)|^2 -  \mathcal{O}(N^2 r_N^2) -  \mathcal{O}(N^4r_N^3) \\ |g(\lambda)| &\lesssim (\gamma_{11}+\gamma_{NN}) |v_{NN}(1,1)|^2 +  \mathcal{O}(N^2 r_N^2)  +  \mathcal{O}(N^4r_N^3) 
\end{align} 
Choosing $r_N= \frac{\alpha}{2} |v_{NN}(1,1)|^2 = \mathcal{O}(N^{-6})$ gives the upper bound for the spectral gap, as in the end of the previous proof as well and the lower bound follows with the same contradiction argument.  

As regards the second part of the statement, \textit{i.e.} when the particles subject to friction are located in the centre of the bordered edges, \textit{i.e.} $ I=\{(1, \lceil N/2 \rceil ), (N, \lceil N/2 \rceil ) \}$, of the network rather than at the corners.  The proof follows exactly in the same way as in the first scenario, apart from the last part of it when we fix the radius $r_N$ of the ball $K$ in order to apply Rouch\'e's Theorem. In this case, taking $$r_N = \frac{\alpha}{2} \vert v_{NN}(1, \lceil N/2 \rceil) \vert^2 = \mathcal{O}(N^{-3}N^{-1}) $$ then immediately implies the result. 
\end{proof} 

\smallsection{$2N$-particles exposed to friction on two opposite edges}
As a second step we consider the most physically relevant case in higher dimensions, \textit{i.e.} we assume the friction to be imposed to all the particles located on the top edge of the network and on the bottom edge as well, cf. Fig. \ref{fig:relevant}.
We use the same techniques and notation as above and we will show how the same method applies to give an upper bound on the spectral gap. Thus here 
\[I=\{(1,1),\dots,(1,N),(N,1), \dots, (N,N)\}\text{ and }|I|=2N\] and the rescaled Wigner $R_I$-matrix in $\mathbb{C}^{2N\times 2N}$ will be 
 \begin{align}
 \widetilde{R}_{I}(\lambda)= \sum_{ j=1}^{N^d}  \sum_{\pm}\sum_{i_1,i_2 \in I} \frac{\sqrt{\gamma_{i_1}\gamma_{i_2}}}{\mu_{j} } \langle V_j^{\pm}, e_{i_1}^{2N^d} \rangle \langle e_{i_2}^{2N^d},V_j^{\pm}\rangle e_{i_1}^{\vert I\vert} \otimes e_{i_2}^{\vert I\vert}
 \end{align}
 where $\mu_{j}= \lambda_j-\lambda_{N^d}$.

Note that since in this case the Wigner matrix is still high-dimensional $2N \times 2N$ we shall support our analytical findings by some numerics too. In particular we have the following analytical result:
\begin{figure}
\centering
\includegraphics[width=5cm, angle=90]{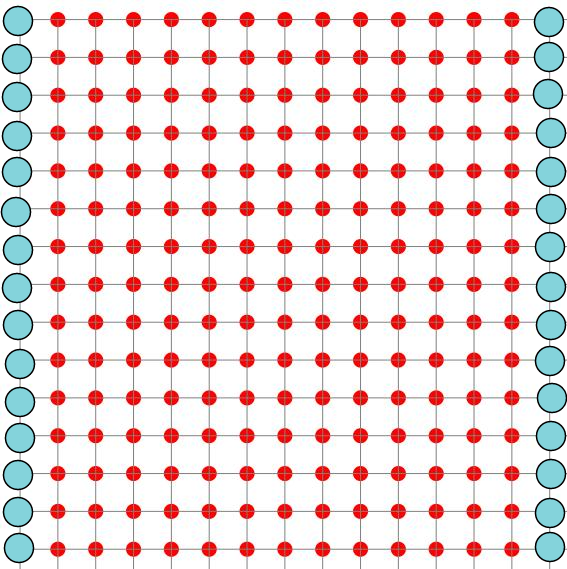}
 \caption{The $\ZZ^2$-subnetwork with friction at the blue particles on opposite edges.}
 \label{fig:relevant}
\end{figure}
\begin{prop}[$2N$-particle friction in homogeneous networks] \label{prop2: hom d=2,3}
\label{prop:hom2}
Let the two-dimensional square network graph with particles on the $N^2$ vertices, and all $\eta_i>0$, $\xi_i>0$, and masses $m_i$ coincide, respectively.
We assume that the $2N$ particles located at $$ \{(1,1),\dots,(1,N),(N,1), \dots, (N,N)\}$$ are subject to non-zero friction and diffusion. The spectral gap of the harmonic network of oscillators then satisfies $$ \lambda_S \lesssim N^{-5/2}.$$
\end{prop}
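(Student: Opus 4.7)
The plan is to extend the Rouch\'e argument from Proposition \ref{prop:hom} (the one-dimensional homogeneous chain) to the present higher-dimensional setting, where the friction index set $I$ has cardinality $2N$ so the Wigner $R$-matrix $\widetilde R_I(\lambda) \in \mathbb{C}^{2N \times 2N}$ is already non-scalar. First I apply Lemma \ref{lemm:SGI} to recast the problem as finding $\lambda$ with $i \in \Spec(\widetilde R_I(\lambda))$. The key observation is that two symmetries simultaneously diagonalise $\widetilde R_I(\lambda)$ for every $\lambda$: the edge-swap $(1,j')\leftrightarrow(N,j')$ block-decouples the matrix into two $N \times N$ blocks (indexed by the parity of $i$, using $v_{ij}(N,j') = (-1)^{i-1} v_{ij}(1,j')$), and within each block the action on the edge coordinate is diagonalised in the one-dimensional Laplacian basis $\{v_{j_0}\}_{j_0=1}^N$. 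This yields the $2N$ scalar equations
\[ E_{j_0}^{\pm}(\lambda) := 2\gamma \sum_{\substack{i=1 \\ i\text{ odd/even}}}^N \frac{v_i(1)^2}{\lambda - \lambda_{i, j_0}} = i, \qquad j_0 \in [N]. \]

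I then focus on $E_N^+(\lambda) = i$ (assuming $N$ odd for concreteness, so $i=N$ is among the odd indices) and shift $\tilde\lambda = \lambda - \lambda_{N,N}$. Isolating the pole at $\tilde\lambda = 0$ (with residue $2\gamma v_N(1)^2$) and expanding each remaining fraction via the identity in \eqref{eq:geometricseries}, I multiply through by $\tilde\lambda$ to obtain a scalar equation $f(\tilde\lambda) + g(\tilde\lambda) = 0$ with
\[ f(\tilde\lambda) = \tilde\lambda \cdot \nu, \qquad \nu := -i - 2\gamma \sum_{\substack{i<N \\ i\text{ odd}}} \frac{v_i(1)^2}{\tilde\mu_i}, \]
where $\tilde\mu_i := \lambda_{i,N} - \lambda_{N,N}$, and $g$ collects the leading constant $2\gamma v_N(1)^2$ together with the quadratic and cubic remainder terms in $\tilde\lambda$. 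Since $\Im\nu = -1$, we have $|\nu| \geq 1$ and hence $|f(\tilde\lambda)| \geq |\tilde\lambda|$.

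The decisive quantitative step is to control the sums $\sum_{i<N,\text{odd}} v_i(1)^2/|\tilde\mu_i|^k$ for $k = 1, 2, 3$ entering $\nu$ and $g$. Using the same trigonometric identities as in \eqref{eq:bdmu}, Taylor expansions near $i = N$ give $v_i(1)^2 \sim (N-i+1)^2/N^3$ and $|\tilde\mu_i| \sim (N-i+1)^2/N^2$, whence the three sums scale as $\mathcal{O}(1)$, $\mathcal{O}(N)$ and $\mathcal{O}(N^3)$, respectively, with the dominant contributions coming from $i$ close to $N$. On the boundary of the ball $K := B(0, r_N)$ with $r_N := \alpha N^{-5/2}$, these estimates yield
\[ |g(\tilde\lambda)| \lesssim v_N(1)^2 + N r_N^2 + N^3 r_N^3 \lesssim N^{-3} + N^{-4} + N^{-9/2} \lesssim N^{-3}, \]
while $|f(\tilde\lambda)| \geq r_N = \alpha N^{-5/2}$. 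Since $N^{-5/2} \gg N^{-3}$ for large $N$, for $\alpha$ fixed sufficiently large the inequality $|f| > |g|$ holds on $\partial K$ for all sufficiently large $N$. Rouch\'e's theorem then produces exactly one zero of $f + g$ in $K$ (matching the single zero of $f$ at the origin), which corresponds via Lemma \ref{lemm:SGI} to an eigenvalue $\lambda \in \Spec(\mathcal{A})$ with $|\lambda - \lambda_{N,N}| \leq r_N$ and therefore $\lambda_S \leq |\Im\lambda| \leq r_N \lesssim N^{-5/2}$.

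The main obstacle lies in the bookkeeping for the three sums: the cubic remainder $N^3 r_N^3$ forces $r_N < c N^{-3/2}$, while dominating the leading constant $v_N(1)^2$ demands $r_N > c' N^{-3}$, so the Rouch\'e radius must sit inside the admissible window $(N^{-3}, N^{-3/2})$. The choice $r_N = \alpha N^{-5/2}$ lies comfortably within this window and keeps the estimates transparent; a finer exponent closer to $-3$ would require refining these estimates further but is not needed for the stated proposition.
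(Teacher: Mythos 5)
Your proof is correct, and it takes a genuinely different — and in fact sharper — route than the paper's. The paper does not diagonalize the $2N\times 2N$ Wigner matrix: it works with $\widetilde R_I(\lambda)$ directly, applies the matrix-valued Rouch\'e theorem (Lemma \ref{lemm:Rouch}), and bounds the residue $\Theta_N$ via a crude entrywise norm estimate, $\|\Theta_N\|\lesssim N^{1/2}\cdot N\cdot N^{-4}=\mathcal O(N^{-5/2})$. Your observation that the edge-swap $(1,j')\leftrightarrow(N,j')$ and the transverse Laplacian eigenbasis $\{v_{j_0}\}$ \emph{simultaneously} diagonalize $\widetilde R_I(\lambda)$ for every $\lambda$ — because each summand in the spectral expansion factorizes as a $2\times 2$ sign matrix in the edge index tensored with $v_{j_2}v_{j_2}^T$, both diagonal in the fixed basis $\{e_\pm\otimes v_{j_0}\}$ — collapses the matrix problem back to the scalar Rouch\'e argument of Proposition \ref{prop:hom} and is a genuine structural improvement. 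What this buys is the sharp residue scaling: on the invariant line $e_+\otimes v_N$ the residue at $\lambda_{NN}$ is $2\gamma v_N(1)^2\sim N^{-3}$; equivalently, $\Theta_N=\gamma\,ww^T$ is rank one with $w=(v_N(s)v_N(a))_{(s,a)\in I}$, so $\|\Theta_N\|=\gamma\|w\|^2=2\gamma v_N(1)^2\sim N^{-3}$, which is exactly the optimal scaling the paper only identifies numerically after Proposition \ref{prop:hom2} (see Figure \ref{fig:opnorm}). Your own window analysis already admits $r_N=\alpha N^{-3}$ for $\alpha$ large (note also $r_N$ must satisfy $r_N\ll N^{-2}$ so that $\partial K$ stays away from the nearest pole $\tilde\mu_{N-2}$, which is a tighter ceiling than the $N^{-3/2}$ you quote but does not affect your chosen radius), so running the identical Rouch\'e argument with that radius would yield the stronger bound $\lambda_S\lesssim N^{-3}$. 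A mild caveat: the simultaneous diagonalization requires $\gamma_i$ to be constant along each edge (so the $2\times 2$ factor is diagonalized by a $j_1$-independent vector), which is natural for the homogeneous model but slightly narrower than the proposition's literal hypothesis; for edge-varying friction one would revert to the paper's matrix-norm route.
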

\begin{proof} We write again $\lambda_{ij}=\lambda_{ij}(\sqrt{B_{N^2}}) =\left( \lambda_{i}+\lambda_j  \right)^{1/2}$ and $v_{ij}$ to be the product states.  Using the equivalence of Lemma \ref{lemm:SGI}, we study the equation \begin{align} \label{eq: reduced prob} \operatorname{det}(F(\lambda) + G(\lambda) )=0\end{align}  in terms of the vectors $V_{ij}^{\pm}=\frac{1}{\sqrt{2}}(v_{ij}, \pm i v_{ij})^T$ and $\mu_{ij}= \lambda_{ij}- \lambda_{NN}$, where $\mu_{ij}^{-1} \lesssim \mathcal{O}(N^2)$ as before.  Note that we do not reduce our problem to a scalar one as in the $2$-particle friction cases above and thus we work with the matrix valued version of Rouch\'e's Theorem stated in Lemma \ref{lemm:Rouch}. Following the same method, the matrices $F(\lambda), G(\lambda)$ are defined as follows 
\begin{equation} 
\label{eq:F}
F(\lambda):= -i \lambda - \lambda \sum_{\substack{i,j=1\\ \text{or}\ i=N, j=N\\ \text{when}\ j\ne i}}^{N-1} \sum_{\pm}\sum_{\substack{i_1,i_2\in I}} \frac{ \sqrt{\gamma_{i_1}\gamma_{i_2}} }{\mu_{ij}} \langle V_{ij}^{\pm}, e_{i_1}^{2N^2}\rangle \langle e_{i_2}^{2N^2} ,V_{ij}^{\pm}\rangle e_{i_1}^{2N} \otimes e_{i_2}^{2N}
\end{equation} and 
\begin{equation}
\begin{split}
G(\lambda)&:=\underbrace{\sum_{\pm}\sum_{\substack{i_1,i_2 \in I}}  \sqrt{\gamma_{i_1}\gamma_{i_2}}  \langle V_{NN}^{\pm}, e_{i_1}^{2N^2}\rangle \langle e_{i_2}^{2N^2} ,V_{NN}^{\pm}\rangle e_{i_1}^{2N} \otimes e_{i_2}^{2N}}_{=:(\text{I})} \\
 &\underbrace{ \ - \lambda^2 \sum_{i,j=1}^{N-1} \sum_{\pm}\sum_{\substack{i_1,i_2 \in I}} \frac{\sqrt{\gamma_{i_1}\gamma_{i_2}} }{\mu_{ij}^2} \langle V_{ij}^{\pm}, e_{i_1}^{2N^2}\rangle \langle e_{i_2}^{2N^2} ,V_{ij}^{\pm}\rangle e_{i_1}^{2N} \otimes e_{i_2}^{2N}}_{=:(\text{II})}
\\
  & \underbrace{- \lambda^3 \sum_{i,j=1}^{N-1} \sum_{\pm}\sum_{\substack{i_1,i_2 \in I}} \frac{\sqrt{\gamma_{i_1}\gamma_{i_2}}}{\mu_{ij}^2(\mu_{ij}-\lambda)} \langle V_{ij}^{\pm}, e_{i_1}^{2N^2}\rangle \langle e_{i_2}^{2N^2} ,V_{ij}^{\pm}\rangle e_{i_1}^{2N} \otimes e_{i_2}^{2N}}_{=:(\text{III})}
\end{split}
\end{equation}
so that a solution to \eqref{eq: reduced prob} corresponds to the desired eigenvalue. Before we fix a ball $K=B(0,r_N)$, we want to find an upper bound for the $\| \Theta_N\|$, where 
\begin{equation}
\label{eq:ThetaN}
\Theta_N:= \sum_{\pm}\sum_{\substack{i_1,i_2 \in I}}  \sqrt{\gamma_{i_1}\gamma_{i_2}}    \langle V_{NN}^{\pm}, e_{i_1}^{2N^2}\rangle \langle e_{i_2}^{2N^2} ,V_{NN}^{\pm}\rangle e_{i_1}^{2N} \otimes e_{i_2}^{2N} 
\end{equation} i.e. is the first term, $(\text{I})$, of $G$. 
 We will define then, this bound to be the radius $r_N$ of the ball $K$ and we will proceed as in the previous proofs.
To understand the dependence of $\| \Theta_N\| $ on $N$ we make the following observation: \\
Due to the symmetries of the eigenvectors, e.g. that  $ v_{ij}(1,1) =(-1)^{i+j-2} v_{ij}(N,N)$,  it suffices to check the scaling of the entries at the columns $1, \dots, \lceil N/2 \rceil$ and only above the main diagonal of the matrix. We estimate them  by Taylor expanding and in a similar manner as in the previous proofs.  For example for the $3$ entries in the corners of the territory that we examine we  have \begin{align*}
&|v_{NN}(1,1)|^2 \lesssim N^{-6},\quad |v_{NN}(1,\lceil N/2 \rceil )|^2 \lesssim N^{-4},\\  &v_{NN}(1, \lceil N/2 \rceil ) v_{NN} (1,1) \lesssim N^{-6}+N^{-4}=\mathcal{O}(N^{-4}) 
\end{align*} by Young's inequality. 
 So all the entries scale at least like $N^{-4}$ which implies that $$ \|\Theta_N\| \le N^{1/2} \| \Theta_N\|_{\infty} \lesssim N^{1/2}NN^{-4} = \mathcal{O}(N^{-5/2}).$$ 

We now fix a ball $K=B(0,r_N)$ and choose the radius $r_N:= \frac{\alpha}{2} N^{-5/2}$.  
 Therefore it suffices to find a root of \eqref{eq: reduced prob} inside the ball $K$ and conclude the existence of an eigenvalue by Rouch\'e's theorem. We easily see that for all $v \neq 0$, on $\partial K$,   $\|F(\lambda) v \| \ge |\lambda| \|v \| = r_N \|v\|= \tfrac{\alpha}{2}N^{-5/2}\|v\| $ since the second term of the right hand side of \eqref{eq:F} is symmetric. We also collect the estimates for $(\text{II})$
\begin{align*}
&|\lambda|^2 \left\vert \sum_{\substack{i,j=1}}^{N-1} \frac{ \vert v_{ij}(1,1) \vert^2}{\mu_{ij}^2} \right\vert  \lesssim r_N^2 = \mathcal{O}(N^{-5}),\quad  |\lambda|^2 \left\vert \sum_{\substack{i,j=1}}^{N-1} \frac{ \vert v_{ij}\left(1,\lceil N/2 \rceil \right) \vert^2}{\mu_{ij}^2} \right\vert  \lesssim r_N^2= \mathcal{O}(N^{-5})\\  
&|\lambda|^2 \left\vert \sum_{\substack{i,j=1}}^{N-1} \frac{  v_{ij}(1,1) v_{ij}(1, \lceil N/2 \rceil )}{\mu_{ij}^2} \right\vert  \lesssim N^{-5}.
\end{align*}
Moreover since $ \frac{\lambda}{\mu_{ij}-\lambda} = \mathcal{O}(N^{-1})$, for $(\text{III})$: 
\begin{align*}
 &\left\lvert \sum_{\substack{i,j=1}}^{N-1}\frac{\vert v_{ij}(1,1)\vert^2}{\mu_{ij}^2}  \frac{\lambda^3}{\mu_{ij}-\lambda} \right\rvert \lesssim N^{-5-1},\quad      \left\lvert \sum_{\substack{i,j=1}}^{N-1}\frac{\vert v_{ij}\left(1,\lceil N/2 \rceil \right) \vert^2}{\mu_{ij}^2}  \frac{\lambda^3}{\mu_{ij}-\lambda} \right\rvert \lesssim N^{-5-1},\\ & 
\left\lvert \sum_{\substack{i,j=1}}^{N-1}\frac{v_{ij}(1,1) v_{ij}(1, \lceil N/2 \rceil )}{\mu_{ij}^2}  \frac{\lambda^3}{\mu_{ij}-\lambda} \right\rvert \lesssim N^{-5-1}.
\end{align*} So we can see that all the entries in $(\text{II})$ and $(\text{III})$ of $G$ scale like $\mathcal{O}(N^{-5})$ and $ \mathcal{O}(N^{-6})$ respectively. Thus, we find the following estimate on the operator norm of terms  $(\text{II})$ and $(\text{III})$
 \begin{align} 
 \|(\text{II})\| \le N^{1/2} \|(\text{II})\|_{\infty} \lesssim  N^{1/2} N N^{-5} = N^{1/2}N^{-4} = \mathcal{O}(N^{-7/2})
 \end{align}
 and 
 \begin{align} \|(\text{III})\| \le N^{1/2} \|(\text{III})\|_{\infty} \le N^{1/2} N N^{-6} = \mathcal{O}(N^{-9/2}).
\end{align} 
 We conclude that \begin{align}  \|G\|\lesssim N^{-5/2} + N^{-7/2} + N^{-9/2} = N^{-5/2}(1+ \smallO(1)) = \mathcal{O}(N^{-5/2}). 
 \end{align} 
 We choose $\alpha$ large enough so that we have $\| F(\lambda)v \| > \| G(\lambda) v\|$ on $\partial K$, for  all $v \neq 0$. Since $F(\lambda)$ is not invertible exactly at $0$ inside $K$, we have that there is one point inside $K$ so that $F(\lambda)+ G(\lambda) $ is not invertible or in other words there is one root of $R_I(\lambda)-iu=0$ with $\lambda \lesssim N^{-5/2}$. 
\end{proof} 

Proposition \ref{prop:hom2} provides only an upper bound on the spectral gap. The main obstruction to find sharp estimates on the spectral gap is to obtain precise asymptotics on the scaling of the operator norm, $\Vert \Theta_{N} \Vert$, in \eqref{eq:ThetaN}. By numerically calculating the operator norm of $\Vert \Theta_{N} \Vert$, we see that the optimal scaling is $\sim N^{-3}$ instead of $\mathcal O(N^{-5/2})$ as used in the proof of Proposition \ref{prop:hom2}. 
\begin{figure}
\centering
\includegraphics[width=9cm]{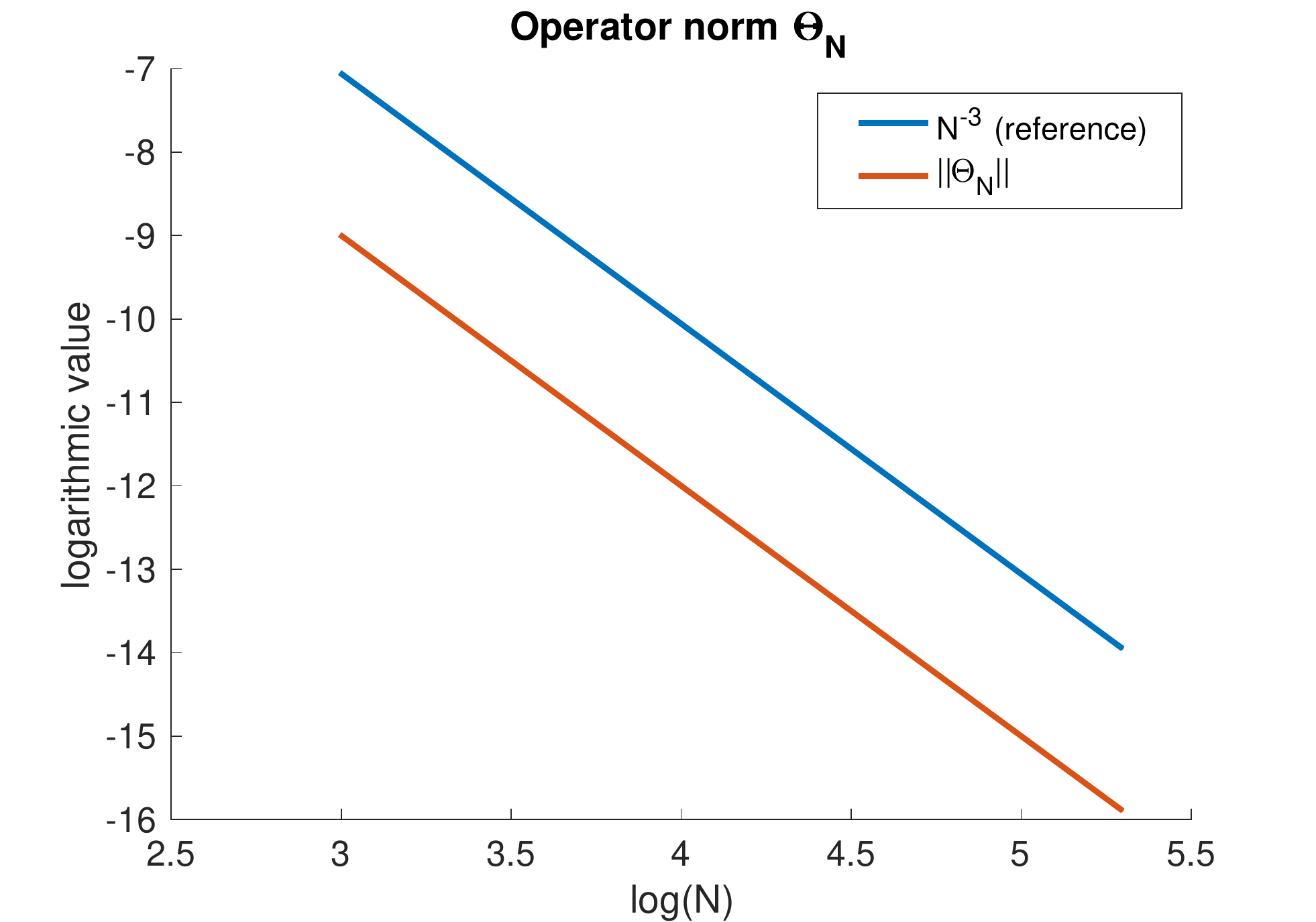}
\caption{Log-log plot of operator norm $\Vert \Theta_N \Vert$ and reference curve $N^{-3}.$}
\label{fig:opnorm}
\end{figure}

\subsection{Single impurities in the chain}
An impurity in the chain of oscillators refers to a particle with different physical properties from all the remaining particles. 
Since certain local impurities such as perturbations of the potential strength for a single particle, are finite-rank perturbations of the discrete Schr\"odinger operator, they do not effect the essential spectrum, but can lead to additional discrete spectrum in the limiting operator $B_{[\infty]^d}$.\\

To understand the eigenstates associated to certain points in the discrete spectrum better, we recall a classical result due to Combes and Thomas:
\begin{theo}
\label{theo:CombesThomas}
Let $V \in \ell^{\infty}(\ZZ^d)$ and suppose that $(-\Delta_{[\infty]^d}+ V)u = \lambda u$ with $\lambda \notin [0,4d]=:\Spec(-\Delta_{[\infty]^d}).$ 
If $\limsup_{\vert n \vert \rightarrow \infty} \vert V(n) \vert < \inf_{\mu \in \Spec(-\Delta_{[\infty]^d})} \vert \mu- \lambda \vert$,  then there is $\nu>0$ such that 
\[ u \in \left\{ \varphi \in \ell^2(\ZZ^d): \sum_{n \in \ZZ^d} \operatorname{exp}\left(2\nu (1+\vert n \vert^2)^{1/2} \right) \vert \varphi(n) \vert^2 < \infty \right\}. \]
\end{theo}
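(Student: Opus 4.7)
The plan is a discrete version of the classical Combes--Thomas argument: conjugate by an exponential weight with a small slope, so that the conjugation error is a small perturbation of the Laplacian, and then exploit the far-field invertibility of $-\Delta_{[\infty]^d}+V-\lambda$ implied by the spectral separation of $\lambda$ from $\Spec(-\Delta_{[\infty]^d})$.

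Write $H:=-\Delta_{[\infty]^d}+V$ and $\langle n\rangle:=(1+|n|^2)^{1/2}$. Set $\delta:=\inf_{\mu\in\Spec(-\Delta_{[\infty]^d})}|\mu-\lambda|>0$ and $V_\infty:=\limsup_{|n|\to\infty}|V(n)|$, so $V_\infty<\delta$ by hypothesis. Fix $\epsilon>0$ with $V_\infty+\epsilon<\delta$, and $R_0$ with $|V(n)|\le V_\infty+\epsilon$ for $|n|\ge R_0$. For $\nu>0$ (small, to be chosen) and a truncation parameter $M>0$, introduce the bounded invertible multiplier $W_{\nu,M}$ on $\ell^2(\ZZ^d)$ by $(W_{\nu,M}\phi)(n):=\operatorname{exp}(\min\{\nu\langle n\rangle,M\})\phi(n)$. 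Conjugation gives $W_{\nu,M}HW_{\nu,M}^{-1}=-\Delta_{[\infty]^d}+V+K_{\nu,M}$, where $K_{\nu,M}$ arises only from the hopping part of $-\Delta_{[\infty]^d}$ since $V$ commutes with $W_{\nu,M}$. Because $\langle\cdot\rangle$ is $1$-Lipschitz and $-\Delta_{[\infty]^d}$ is nearest-neighbour, the per-edge multiplicative errors $\operatorname{exp}(\min(\nu\langle n\rangle,M)-\min(\nu\langle n'\rangle,M))-1$ are $O(\nu)$ uniformly in $M$, which yields $\|K_{\nu,M}\|\le C\nu$ with $C$ independent of $M$.

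Fourier analysis on $\T^d$ gives $\|(-\Delta_{[\infty]^d}-\lambda)^{-1}\|=1/\delta$, so combining this with the pointwise bound on $V$ and the norm bound on $K_{\nu,M}$ shows that, for any $\phi\in\ell^2(\ZZ^d)$ supported in $\{|n|\ge R_0\}$,
\begin{equation*}
\|(-\Delta_{[\infty]^d}+V+K_{\nu,M}-\lambda)\phi\|\ge (\delta-V_\infty-\epsilon-C\nu)\|\phi\|.
\end{equation*}
Fix $\nu$ so small that $\kappa:=\delta-V_\infty-\epsilon-C\nu>0$. Let $\chi$ be a cutoff vanishing on $\{|n|<R_0\}$ and equal to $1$ on $\{|n|\ge R_0+2\}$. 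Since $V$ commutes with $\chi$, the conjugated equation $(W_{\nu,M}(H-\lambda)W_{\nu,M}^{-1})(W_{\nu,M}u)=0$ gives
\begin{equation*}
(W_{\nu,M}HW_{\nu,M}^{-1}-\lambda)(\chi W_{\nu,M}u)=[-\Delta_{[\infty]^d}+K_{\nu,M},\chi](W_{\nu,M}u).
\end{equation*}
The right-hand side is supported in the finite shell $\{R_0-1\le|n|\le R_0+3\}$, on which $W_{\nu,M}u$ is pointwise bounded by $e^{\nu\langle R_0+3\rangle}\|u\|_{\ell^\infty}$, uniformly in $M$. Applying the far-field bound above to $\chi W_{\nu,M}u$ then yields $\|\chi W_{\nu,M}u\|\le\kappa^{-1}C'e^{\nu\langle R_0+3\rangle}\|u\|$, uniformly in $M$. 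Sending $M\to\infty$ and invoking monotone convergence, together with the trivial boundedness of the weight on the complementary bounded region, proves $\operatorname{exp}(\nu\langle\cdot\rangle)\,u\in\ell^2(\ZZ^d)$.

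The main technical obstacle is the uniform estimate $\|K_{\nu,M}\|=O(\nu)$: the truncation by $M$ is needed so that $W_{\nu,M}$ is genuinely bounded on $\ell^2$, yet one must verify that this truncation does not spoil the Lipschitz estimate that produces the prefactor $\nu$. This step relies essentially on the nearest-neighbour, finite-range nature of $-\Delta_{[\infty]^d}$; everything else is routine bookkeeping.
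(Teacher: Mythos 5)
The paper does not actually prove Theorem \ref{theo:CombesThomas}; it states it as ``a classical result due to Combes and Thomas'' (later referencing \cite{T} for the localization application) and gives no argument. Your proof is the standard Combes--Thomas conjugation argument with truncated exponential weights $W_{\nu,M}$, a cutoff $\chi$ to exploit the far-field spectral estimate $\|({-\Delta_{[\infty]^d}}+V-\lambda)\phi\|\ge(\delta-V_\infty-\epsilon)\|\phi\|$ on $\{|n|\ge R_0\}$, and a uniform-in-$M$ $O(\nu)$ bound on the conjugation error $K_{\nu,M}$ coming from the $1$-Lipschitz property of $n\mapsto\min(\nu\langle n\rangle,M)$ on nearest-neighbour bonds; the commutator term is correctly localized to a finite shell where $W_{\nu,M}u$ is bounded uniformly in $M$, and the monotone-convergence passage $M\to\infty$ is legitimate. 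The argument is correct and complete.
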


The above theorem implies that these eigenstates are exponentially localized in space and -as we show- will cause an exponentially fast closing of the spectral gap. This is in particular what happens if the pinning strength $\eta$ of a single particle is \textit{significantly weaker} than the pinning strength of all the other particles (the ''flying away'' particle). Note that in contrast to a weak pinning potential, a locally vanishing interaction potential would just decouple the chain into two independent pieces.  

Let $I_{m_0}^{[N]^d}:=\times_{i=1}^d \left\{m_0 - \lfloor{\tfrac{N-1}{2} \rfloor}e_i,..., m_0 + \lceil{\tfrac{N-1}{2} \rceil}e_i \right\}$ be a set of size $N^d$ around $m_0$. 
To switch from $\mathbb R^{[N]^d}$ to $\ZZ^d,$ we define the inclusion map $\iota: \mathbb R^{[N]^d} \to \ell^2(\mathbb Z^d)$ by 
\[ \left(\iota   x \right)\left(i\right) :=  \Bigg\{ \begin{array}{ c l } x(i),\ &\text{for }i \in I_{m_0}^{[N]^d}
\\ 0,\   &\text{otherwise} \end{array} \]
and define the restriction of the Schr\"odinger operator by
\[ B_{I_{m_0}^{[N]^d}}x:=B_{[\infty]^d} (\iota x).\]

\begin{lemm}
\label{lemm:jess}
For some $m_0 \in \ZZ^d$, let $B_{I_{m_0}^{[N]^d}}$ be a finite $[N]^d$-size truncation of a bounded discrete Schr\"odinger operator $B_{[\infty]^d}$ on $\ell^2(\mathbb Z^d).$ Let $\varphi$ be an eigenfunction to $B_{[\infty]^d}$ with eigenvalue $\lambda_{\infty}$ and assume that $\varphi$ is exponentially localized to a point $m_{0} \in \mathbb Z^d$ such that 
\begin{equation}
\label{eq:decayjessica}
\vert \varphi(n) \vert =\mathcal O(e^{-D \vert n-m_0 \vert}) \text{ for all } n \in \ZZ^d.
\end{equation}
We then define the finite $N^d$-size restriction 
\[\widetilde{\varphi}_{I_{m_0}^{[N]^d}}:=\frac{\varphi \vert_{I_{m_0}^{[N]^d}}}{\Vert  \varphi \vert_{I_{m_0}^{[N]^d}} \Vert}.\]
Furthermore, assume that the operator $B_{[N]^d}$ has a unique eigenvalue $\widetilde \lambda_N$, with associated eigenvector $\widetilde \psi_N$, such that $\inf_{\lambda \in \Spec(B_{[N]^d})} \vert \lambda_{\infty}-\lambda \vert=d(\lambda_{\infty}, \widetilde \lambda_N)$ and a spectral gap $\alpha_N>0$ such that 
\[\Spec(B_{[N]^d})\cap (\widetilde \lambda_N-\alpha_N, \widetilde \lambda_N+\alpha_N) = \{ \widetilde \lambda_N \},\] then
\[\Vert \widetilde{\psi}_N- \widetilde \varphi_N  \Vert = \mathcal O(e^{-DN}\alpha_N^{-1}).\]
\end{lemm}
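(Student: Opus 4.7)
The plan is to show that the restricted and renormalized eigenfunction $\widetilde{\varphi}_N$ is an approximate eigenvector of $B_{[N]^d}$ at energy $\lambda_\infty$ with exponentially small residual, and then to invoke a standard spectral gap argument to conclude that it must be close to the true eigenvector $\widetilde{\psi}_N$.

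First I would estimate the residual $\Vert (B_{[N]^d}-\lambda_\infty)\widetilde{\varphi}_N\Vert$. Since $B_{[\infty]^d}\varphi=\lambda_\infty\varphi$ on all of $\mathbb{Z}^d$, the operators $B_{[N]^d}$ and $B_{[\infty]^d}$ act identically on $\widetilde{\varphi}_N$ at every lattice site of $I_{m_0}^{[N]^d}$ whose nearest neighbors also lie in the box; hence the residual is supported on the boundary of $I_{m_0}^{[N]^d}$. At each boundary site the discrepancy reduces to a finite linear combination of values $\varphi(n')$ with $n'$ lying at or just outside the boundary. Since every such $n'$ sits at distance at least $\lfloor(N-1)/2\rfloor$ from the localization center $m_0$, assumption \eqref{eq:decayjessica} yields $|\varphi(n')|=\mathcal{O}(e^{-DN/2})$. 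Summing over the $\mathcal{O}(N^{d-1})$ boundary sites and absorbing the polynomial prefactor into a smaller decay constant gives
\begin{equation*}
\Vert (B_{[N]^d}-\lambda_\infty)\widetilde{\varphi}_N\Vert = \mathcal{O}(e^{-DN}),
\end{equation*}
where the effect of the normalization $\Vert\varphi|_{I_{m_0}^{[N]^d}}\Vert\to\Vert\varphi\Vert$, which is achieved at exponential speed, is absorbed into the constants.

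Next, I would deploy the spectral gap. Decompose $\widetilde{\varphi}_N = c\widetilde{\psi}_N + w$ with $w\perp\widetilde{\psi}_N$, and choose the sign of $\widetilde{\psi}_N$ so that $c=\langle\widetilde{\varphi}_N,\widetilde{\psi}_N\rangle\geq 0$. Applying $B_{[N]^d}-\lambda_\infty$ and using orthogonality of the two resulting summands gives
\begin{equation*}
\Vert(B_{[N]^d}-\lambda_\infty)\widetilde{\varphi}_N\Vert^2 = c^2|\widetilde{\lambda}_N-\lambda_\infty|^2 + \Vert(B_{[N]^d}-\lambda_\infty)w\Vert^2.
\end{equation*}
From the first summand I read off the Bauer--Fike-type bound $|\widetilde{\lambda}_N-\lambda_\infty|\leq\Vert(B_{[N]^d}-\lambda_\infty)\widetilde{\varphi}_N\Vert=\mathcal{O}(e^{-DN})$. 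On the spectral subspace $\widetilde{\psi}_N^\perp$ the self-adjoint operator $B_{[N]^d}-\lambda_\infty$ is bounded below by $\alpha_N-|\widetilde{\lambda}_N-\lambda_\infty|\geq\alpha_N/2$ for large $N$, so the second summand yields $\Vert w\Vert=\mathcal{O}(e^{-DN}\alpha_N^{-1})$. With $c=\sqrt{1-\Vert w\Vert^2}$, the identity $\Vert\widetilde{\varphi}_N-\widetilde{\psi}_N\Vert^2 = 2(1-c)\leq 2\Vert w\Vert^2$ produces the claimed bound.

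The main obstacle is the residual estimate: one has to keep track of the fact that $B_{[N]^d}$ carries Neumann-type boundary conditions (as recorded in the explicit formula for $(B_N f)_n$ in the introduction), so the discrepancy with $B_{[\infty]^d}$ at each boundary site is not simply a missing hopping term but a slightly modified linear combination of values of $\varphi$ at or adjacent to the boundary. Once one checks that this combination still involves only values at bounded distance from the boundary, exponential decay of $\varphi$ closes the argument. An implicit regime of validity is that $\alpha_N$ does not itself collapse exponentially faster than $e^{-DN}$; otherwise the step $\alpha_N-|\widetilde{\lambda}_N-\lambda_\infty|\geq\alpha_N/2$ would fail, but in the intended applications $\alpha_N$ decays at most polynomially and the estimate is meaningful.
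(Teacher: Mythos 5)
Your proof is correct and follows essentially the same two-step strategy as the paper's: first bound the residual $\Vert (B_{[N]^d}-\lambda_\infty)\widetilde{\varphi}_N\Vert = \mathcal{O}(e^{-DN})$ using the exponential decay of $\varphi$ near the boundary of the box, then convert this into closeness to $\widetilde{\psi}_N$ via the spectral gap $\alpha_N$. The paper carries out the second step through a Chebyshev-type bound on the spectral decomposition, equation \eqref{eq:pressurecond}, while you use the equivalent orthogonal decomposition $\widetilde{\varphi}_N = c\,\widetilde{\psi}_N + w$; your explicit note that the argument presumes $e^{-DN} = o(\alpha_N)$ is a reasonable caveat that the paper leaves implicit.
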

\begin{proof}
We first record that \eqref{eq:decayjessica} implies the following exponential tail bound
\begin{equation}
\begin{split}
\label{eq:tail}
\sqrt{\sum_{\vert m - m_0 \vert \ge N/2} \vert \varphi(m)\vert^2}&= \mathcal O\left(e^{-D N}\right).
\end{split}
\end{equation}
We also define the infinite matrix $\widehat{B}_{I_{m_0}^{[N]^d}}$ given as the direct sum of operators
$$\widehat{B}_{I_{m_0}^{[N]^d}} :=B_{I_{m_0}^{[N]^d}}\oplus 0 $$
with respect to the direct sum decomposition $\ell^2(\mathbb Z^d) \simeq \ \ell^2(I_{m_0}^{[N]^d}) \oplus \ell^2(\ZZ^d \backslash I_{m_0}^{[N]^d}).$
Thus, we have
\begin{align*}
\frac{\Vert (\widehat{B}_{I_{m_0}^{[N]^d}}-\lambda_{\infty} ) \iota(\varphi \vert_{I_{m_0}^{[N]^d}} ) \Vert}{\Vert \iota (  \varphi \vert_{I_{m_0}^{[N]^d}})  \Vert} &\le \frac{\Vert ( \widehat{B}_{I_{m_0}^{[N]^d}}  -B_{[\infty]^d} )  \iota (  \varphi \vert_{I_{m_0}^{[N]^d}}) \Vert}{\Vert \iota (  \varphi \vert_{[N]})  \Vert} + \frac{\Vert ( B_{[\infty]^d}-\lambda_{\infty} I)   \iota (  \varphi \vert_{I_{m_0}^{[N]^d}}) \Vert  }{\Vert \iota (  \varphi \vert_{I_{m_0}^{[N]^d}})  \Vert}  \\ &\le     \frac{\Vert ( B_{[\infty]^d}-\lambda_{\infty} I ) ( \iota (  \varphi \vert_{I_{m_0}^{[N]^d}}) - \varphi ) \Vert}{\Vert \iota (  \varphi \vert_{I_{m_0}^{[N]^d}})  \Vert} +  \frac{\Vert ( B_{[\infty]^d}-\lambda_{\infty} I )   \varphi  \Vert}{\Vert \iota (  \varphi \vert_{I_{m_0}^{[N]^d}})  \Vert}    \\ & = \mathcal{O}(e^{-DN})
\end{align*}
where the first term on the right-hand side of the first line vanishes, up to exponentially small boundary terms, and in the last line we used the estimate \eqref{eq:tail} that holds for the eigenfunctions of $B_{\infty}$. 
Thus, the above bounds show that 
\begin{align}
\label{eq:spectraldecomp}
\frac{\Vert (B_{{I_{m_0}^{[N]^d}}}-\lambda_{\infty} ) \varphi \vert_{I_{m_0}^{[N]^d}} \Vert}{\Vert   \varphi \vert_{I_{m_0}^{[N]^d}}\Vert} = \mathcal{O}(e^{-DN})
\end{align}
and this implies by self-adjointness that also 
\begin{equation}
\label{eq:expclose}
\inf_{\lambda \in \Spec (B_{I_{m_0}^{[N]^d}})}\vert \lambda- \lambda_{\infty}\vert= \mathcal O(e^{-DN}).
\end{equation} 
That $\widetilde{\varphi}_{I_{m_0}^{[N]^d}}:=\frac{\varphi \vert_{I_{m_0}^{[N]^d}}}{\Vert  \varphi \vert_{I_{m_0}^{[N]^d}} \Vert} \in \mathbb R^{N^d}$ is exponentially close to an eigenvector $\widetilde\psi_N$ with eigenvalue $\widetilde\lambda_N$ of $B_{I_{m_0}^{[N]^d}}$ follows then by the spectral decomposition of $B_{I_{m_0}^{[N]^d}}$: In particular, let $(\psi_i)$ be an ONB of $B_{I_{m_0}^{[N]^d}}$ with eigenvalues $\lambda_i$ then we find by \eqref{eq:spectraldecomp} that 
\[\Vert (B_{I_{m_0}^{[N]^d}}-\lambda_{\infty} ) \widetilde{\varphi}_{I_{m_0}^{[N]^d}}\Vert= \sqrt{\sum_{i=1}^{N^d} \vert \langle \psi_i, \widetilde{\varphi}_{I_{m_0}^{[N]^d}} \rangle \vert^2 \vert \lambda_i - \lambda_{\infty} \vert^2} = \mathcal O(e^{-DN})=:\varepsilon. \]
This implies that for any $\nu>0$
\begin{equation}
\label{eq:pressurecond}
\sqrt{\sum_{i \in [N^d]: \vert \lambda_i - \lambda_{\infty} \vert \ge \nu \varepsilon} \vert \langle \psi_i, \widetilde{\varphi}_N \rangle \vert^2} \le \nu^{-1}. 
\end{equation}
Now, using that $\widetilde \lambda_N$ is a distance $\alpha_N$ apart from the rest of the spectrum of $B_{I_{m_0}^{[N]^d}}$ and $\lambda_{\infty}$ is exponentially close to $\widetilde \lambda_N$ by \eqref{eq:expclose} with some eigenvector $\widetilde \psi_N$ of $B_{I_{m_0}^{[N]^d}}$, we have from \eqref{eq:pressurecond} by setting $\nu:=\varepsilon^{-1}c \alpha_N$ that the coefficients of $\widetilde{\varphi}_{I_{m_0}^{[N]^d}}$ in the ONB with respect to all other eigenvectors of $B_{I_{m_0}^{[N]^d}}$ are exponentially small. Thus, we find that
\[\Vert \widetilde{\psi}_N- \widetilde \varphi_{I_{m_0}^{[N]^d}}  \Vert = \mathcal O(\nu^{-1}) = \mathcal O(e^{-DN}\alpha_N^{-1})\] such that the two vectors are exponentially close to each other.
\end{proof}\

\begin{prop}[Impurity]
\label{prop:impure}
Without loss of generality, let $N$ be an even number and consider a chain of oscillators with equal masses and unit coupling strength $\xi_i=1$. In addition, we assume that there is always at least one particle experiencing friction at the boundary and that the friction of particles is uniformly bounded in $N$. We define the centre point $c_d(N)=(N/2,..,N/2)$ and assume that $$\eta_{c_d(N)} +2d+\varepsilon \le \eta_i\quad  \text{uniformly in}\ [N]^d $$ for some $\varepsilon>0, i \neq c_d(N) .$ Then, the spectral gap of the harmonic chain of oscillators described by the operator \eqref{eq:generator} with the impurity described by the assumptions on the potentials given above, decays exponentially fast.
\end{prop}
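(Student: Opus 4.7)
The plan is to combine the Combes--Thomas localization of an impurity-induced bound state (Theorem~\ref{theo:CombesThomas}) with the Wigner $R$-matrix reduction of Lemma~\ref{lemm:SGI}. The localized bound state of $B_{[N]^d}$ contributes a pole of $\widetilde R_I(\lambda)$ whose residue is exponentially small, since its eigenfunction is concentrated near $c_d(N)$ and hence almost vanishes at the friction sites $I\subset\partial[N]^d$. A Rouch\'e argument near this pole then yields an eigenvalue $\lambda$ of $\mathcal A = i\Omega_{[N]^d}$ at distance $O(e^{-cN})$ from the real axis. Since $\Spec(\Omega_{[N]^d}) = -i\,\Spec(\mathcal A)$, the corresponding eigenvalue of $\Omega_{[N]^d}$ has real part $O(e^{-cN})$, giving $\lambda_S\lesssim e^{-cN}$.

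First, I extend the (constant-in-bulk) potential of $B_{[N]^d}$ to $\mathbb Z^d$, producing a limiting operator $B_{[\infty]^d}$ whose essential spectrum is contained in $[\eta_{\mathrm{bulk}},\eta_{\mathrm{bulk}}+4d]$. The single weak-pinning site at $c_d(N)$ is a rank-one attractive perturbation of depth at least $2d+\varepsilon$, which produces an isolated eigenvalue $\lambda_\infty$ below this band with gap $\gtrsim\varepsilon$. Theorem~\ref{theo:CombesThomas} then supplies a unit eigenvector $\varphi$ with $|\varphi(n)|=O(e^{-\nu|n-c_d(N)|})$ for some $\nu>0$. Applying Lemma~\ref{lemm:jess} with $m_0=c_d(N)$, and observing that a min--max comparison with the bulk Laplacian forces the spectral gap around the associated truncated eigenvalue $\widetilde\lambda_N$ of $B_{[N]^d}$ to remain $\gtrsim\varepsilon$, I obtain a normalized eigenvector $\widetilde\psi_N$ of $B_{[N]^d}$ that is exponentially close to $\varphi$ restricted to $[N]^d$. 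In particular, for every $i\in I$,
\[
|\widetilde\psi_N(i)|=O(e^{-\nu N/2}).
\]

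With $\mu_N=\sqrt{\widetilde\lambda_N}$, a simple real eigenvalue of $\mathcal A_0$, the resulting pole of $\widetilde R_I$ at $\lambda=\mu_N$ has a rank-one residue
\[
\Theta_N = u_N u_N^{*},\qquad u_N=\bigl(\sqrt{\gamma_i}\,\widetilde\psi_N(i)\bigr)_{i\in I},
\]
so $\Vert\Theta_N\Vert=\Vert u_N\Vert^2=O(e^{-\nu N})$, using that the $\gamma_i$ are uniformly bounded. Splitting $\widetilde R_I(\lambda)=(\lambda-\mu_N)^{-1}\Theta_N + R_0(\lambda)$, the regular part $R_0$ stays uniformly bounded on a ball $K=B(\mu_N,r_N)$ with $r_N=\alpha e^{-\nu N/2}$, because $\mu_N$ is at distance $\gtrsim 1$ from all other eigenvalues of $\sqrt{B_{[N]^d}}$. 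Mimicking the scalar/matrix Rouch\'e argument of Proposition~\ref{prop:hom} (and Lemma~\ref{lemm:Rouch}) applied to $\det(\mathrm{id}-i\widetilde R_I(\lambda))=0$, one finds a zero $\lambda\in K$ at distance $O(e^{-\nu N})$ from $\mu_N$. Translating via $\omega=-i\lambda$ gives $\Re\omega=\Im\lambda=O(e^{-\nu N})$, whence $\lambda_S=O(e^{-\nu N})$.

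The main obstacle is the uniform-in-$N$ spectral-gap input required by Lemma~\ref{lemm:jess} and by the bound on $R_0$, namely that $\widetilde\lambda_N$ remains separated from the rest of $\Spec(B_{[N]^d})$ by a constant as $N\to\infty$. This is intuitive -- an isolated bound state of the limiting Schr\"odinger operator should persist under finite-box truncation -- but in higher dimensions it must be argued quantitatively, by comparing $B_{[N]^d}$ with its bulk part via min--max and controlling boundary contributions from the truncated Laplacian, so that the near-bulk spectrum of $B_{[N]^d}$ cannot drift downward across $\widetilde\lambda_N$.
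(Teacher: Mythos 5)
Your proposal takes essentially the same route as the paper: Combes--Thomas localization of the impurity-induced bound state, transfer from the infinite-volume eigenfunction to a finite-box eigenvector of $B_{[N]^d}$ via Lemma~\ref{lemm:jess}, smallness of the Wigner residue at the friction sites because the eigenvector is concentrated at $c_d(N)$, and a (matrix-valued) Rouch\'e argument to find an eigenvalue of $\Omega_{[N]^d}$ with exponentially small real part. The only ingredient you explicitly flag as not fully carried out---the uniform-in-$N$ spectral gap of $B_{[N]^d}$ above its ground state, needed both for Lemma~\ref{lemm:jess} and for bounding the regular part $R_0$---is handled in the paper exactly as you suggest: min--max with trial vector $e_{c_d(N)}$ gives $\lambda_1(B_{[N]^d})\le\lambda_1(V_{[N]^d})+2d$, while Weyl's inequality gives $\lambda_2(B_{[N]^d})\ge\lambda_2(V_{[N]^d})$, and the assumption $\eta_{c_d(N)}+2d+\varepsilon\le\eta_i$ closes the chain, yielding a gap $\ge\varepsilon$ uniformly. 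Apart from cosmetic differences (you center the Rouch\'e ball at $\mu_N$ instead of rescaling the Wigner matrix so the pole sits at the origin, and your $\Vert\Theta_N\Vert=\mathcal O(e^{-\nu N})$ omits a harmless polynomial prefactor in $\vert I\vert$), this is the paper's proof.
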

\begin{proof}
First we show that the above assumptions imply the existence of an exponentially localized groundstate of $B_{[N]^d}$: 
\medskip

Let $V_{[N]^d}:= (\xi_i)_{i \in [N]^d}$, the min-max principle implies for the discrete Schr\"odinger operator \eqref{eq:Schroe} that
\[\lambda_1(B_{[N]^d})\le \lambda_1(V_{[N]^d}) + \langle e_{c_d(N)}, (-\Delta_{[N]^d})e_{c_d(N)} \rangle = \lambda_1(V_{[N]^d}) + 2d\]
where $e_{c_d(N)}$ is the unit vector that vanishes at every point different from $c_d(N).$
On the other hand, Weyl's inequalities and the assumptions on the coefficients of the pinning potential, imply that 

$$ \lambda_1(B_{[N]^d}) \le \| \Delta_{[N]^d}\| + \lambda_1(V_{[N]^d}) < \eta_{i \ne c_d(N)} - \epsilon= \lambda_2(V_{[N]^d}) - \epsilon \le \lambda_2(B_{[N]^d})- \epsilon $$
where $ \| \Delta_{[N]^d}\| \le 2d$ is the operator norm of the discrete Laplacian. 
Hence, $B_{[N]^d}$, and thus $\sqrt{B_{[N]^d}}$ has a spectral gap uniformly in $N$ since $$\lambda_1(B_{[N]^d})+\varepsilon \le \lambda_2(B_{[N]^d})\quad \text{uniformly in}\ N.$$ Now this implies that for some universal $c>0$ we have $\vert v_1(1) \vert^2,\vert v_1(N) \vert^2 \lesssim e^{-c N}$: from Theorem \ref{theo:CombesThomas}, cf. also \cite[Lemma $2.5$]{T},  we have that the ground state eigenfunction $u$ of the limiting operator $B_{[\infty]^d}$ is exponentially localized since the operators $B_{[N]^d}$ possess a uniform spectral gap of size at least $\alpha_N:=\varepsilon$ and $\lambda_1(B_{[N]^d}) \notin \Spec_{\operatorname{ess}}(B_{[\infty]^d}).$

The previous Lemma \ref{lemm:jess} then implies with $m_0=c_d(N)$ that there is an eigenstate $v_1$ to $B_{[N]^d}$
\[\Vert v_1- u\vert_{\varphi_{[N]^d}} \Vert = \mathcal O(e^{-DN/2}\varepsilon^{-1}).\]

To conclude the existence of an eigenvalue converging exponentially fast to zero, we shall restrict us again to the case $d= 2$ to keep the notation simple while at the same time dealing with all technicalities of the multi-dimensional setting.

Using the equivalence of Lemma \ref{lemm:SGI}, we study the equation \begin{align} \label{eq: reduced prob3} \operatorname{det}(F(\lambda) + G(\lambda) )=0\end{align} in terms of the vectors $V_{j}^{\pm}=\frac{1}{\sqrt{2}}(v_{j}, \pm i v_{j})^T$ and $\mu_{j}= \lambda_{j}- \lambda_{1}$, where $v_j$ are the eigenvectors of the Schr\"odinger operator $-\Delta_{[N]^2}+V_{[N]^2}$ with eigenvalue $\lambda_j$ and $\lambda_{1}:=\lambda_1(\sqrt{B_{[N]^d}})$.  The matrices $F(\lambda), G(\lambda)$ are then defined as follows 
\begin{equation} 
\label{eq:F3}
F(\lambda):= -i \lambda - \lambda \sum_{j=2}^{N^2} \sum_{\pm}\sum_{\substack{i_1,i_2\in I}} \frac{ \sqrt{\gamma_{i_1}\gamma_{i_2}} }{\mu_{j}} \langle V_{j}^{\pm}, e_{i_1}^{2N^2}\rangle \langle e_{i_2}^{2N^2} ,V_{j}^{\pm}\rangle e_{i_1}^{\vert I \vert} \otimes e_{i_2}^{\vert I \vert}
\end{equation} and 
\begin{equation}
\begin{split}
G(\lambda)&:=\underbrace{\sum_{\pm}\sum_{\substack{i_1,i_2 \in I}}  \sqrt{\gamma_{i_1}\gamma_{i_2}}  \langle V_{1}^{\pm}, e_{i_1}^{2N^2}\rangle \langle e_{i_2}^{2N^2} ,V_{1}^{\pm}\rangle e_{i_1}^{\vert I \vert} \otimes e_{i_2}^{\vert I \vert}}_{=:(\text{I})} \\
 &\underbrace{ \ - \lambda^2 \sum_{j=2}^{N^2} \sum_{\pm}\sum_{\substack{i_1,i_2 \in I}} \frac{\sqrt{\gamma_{i_1}\gamma_{i_2}} }{\mu_{j}^2} \langle V_{j}^{\pm}, e_{i_1}^{2N^2}\rangle \langle e_{i_2}^{2N^2} ,V_{j}^{\pm}\rangle e_{i_1}^{\vert I \vert} \otimes e_{i_2}^{\vert I \vert}}_{=:(\text{II})}
\\
  & \underbrace{- \lambda^3 \sum_{j=2}^{N^2} \sum_{\pm}\sum_{\substack{i_1,i_2 \in I}} \frac{\sqrt{\gamma_{i_1}\gamma_{i_2}}}{\mu_{j}^2(\mu_{j}-\lambda)} \langle V_{j}^{\pm}, e_{i_1}^{2N^2}\rangle \langle e_{i_2}^{2N^2} ,V_{j}^{\pm}\rangle e_{i_1}^{\vert I \vert} \otimes e_{i_2}^{\vert I \vert}}_{=:(\text{III})}
\end{split}
\end{equation}
so that a solution to \eqref{eq: reduced prob3} corresponds to the desired eigenvalue. Before we fix a ball $K=B(0,r_N)$, we want to find an upper bound for the $\| \Theta_N\|$, where 
\begin{equation}
\label{eq:ThetaN3}
\Theta_N:= \sum_{\pm}\sum_{\substack{i_1,i_2 \in I}}  \sqrt{\gamma_{i_1}\gamma_{i_2}}    \langle V_{1}^{\pm}, e_{i_1}^{2N^2}\rangle \langle e_{i_2}^{2N^2} ,V_{1}^{\pm}\rangle e_{i_1}^{2N} \otimes e_{i_2}^{2N} 
\end{equation} i.e. is the first term, $(\text{I})$, of $G$. 
From the exponential decay of the eigenstate $V_1^{\pm}$ it follows that for some $c>0$ we have 
\[ \Vert \Theta_N \Vert =\mathcal O(N e^{-cN}).\]

We now fix a ball $K=B(0,r_N)$ and choose the radius $r_N:= \mathcal O(Ne^{-cN})$.  
Therefore it suffices to find a root of \eqref{eq: reduced prob} inside the ball $K$ and conclude the existence of an eigenvalue by Rouch\'e's theorem. We easily see that for all $v \neq 0$, and  $\lambda \in \partial K$,   $\|F(\lambda) v \| \ge |\lambda| \|v \| = r_N \|v\|$ since the second term of the right hand side of \eqref{eq:F2} is symmetric. On the other hand, 
\begin{align*}
\Vert (\text{II}) \Vert = \mathcal O(N^2e^{-2cN}) \text{ and }\Vert (\text{III}) \Vert = \mathcal O(N^2e^{-3cN}).
\end{align*}

Thus, we have $\| F(\lambda)v \| > \| G(\lambda) v\|$ on $\partial K$, for all $v \neq 0$. Since $F(\lambda)$ is not invertible exactly at $0$ inside $K$, we have from Lemma \ref{lemm:Rouch} that there is one point inside $K$ so that $F(\lambda)+ G(\lambda) $ is not invertible or in other words there is one root of $R_I(\lambda)-iu=0$ with $\lambda \lesssim N e^{-cN}$. 
\end{proof} 
\subsection{Disordered chains}
We now study the case of a disordered pinning potential, i.e. we assume that $\eta_i>0$ are independent identically distributed (i.i.d.) random variables drawn drawn from some bounded density distribution $$\eta_i \sim \rho \in C_c(0,\infty).$$ Note that additional disorder in the interaction strengths leads to the-somewhat analogous study of random Jacobi operators which is for example discussed in \cite[Ch. $5$]{T}. In particular, localization for off-diagonal disorder in discrete Schr\"odinger operators, corresponding to random interactions in the chain of oscillators, is studied in \cite{DKS, DSS}. \\
Note that disordered harmonic chains have been studied before \cite{CL74, CL71}, even though in these works the randomness is posed in the masses of the particles, rather than the coefficients of the pinning potentials. However, the effect of localization does extend to that setting as well and can be studied- up to some technicalities- along the lines of the proof presented here.  We illustrate in Fig. \ref{fig:clement3} that all types of disorder yield an exponentially fast closing of the spectral gap.

\begin{figure}
\includegraphics[width=10cm]{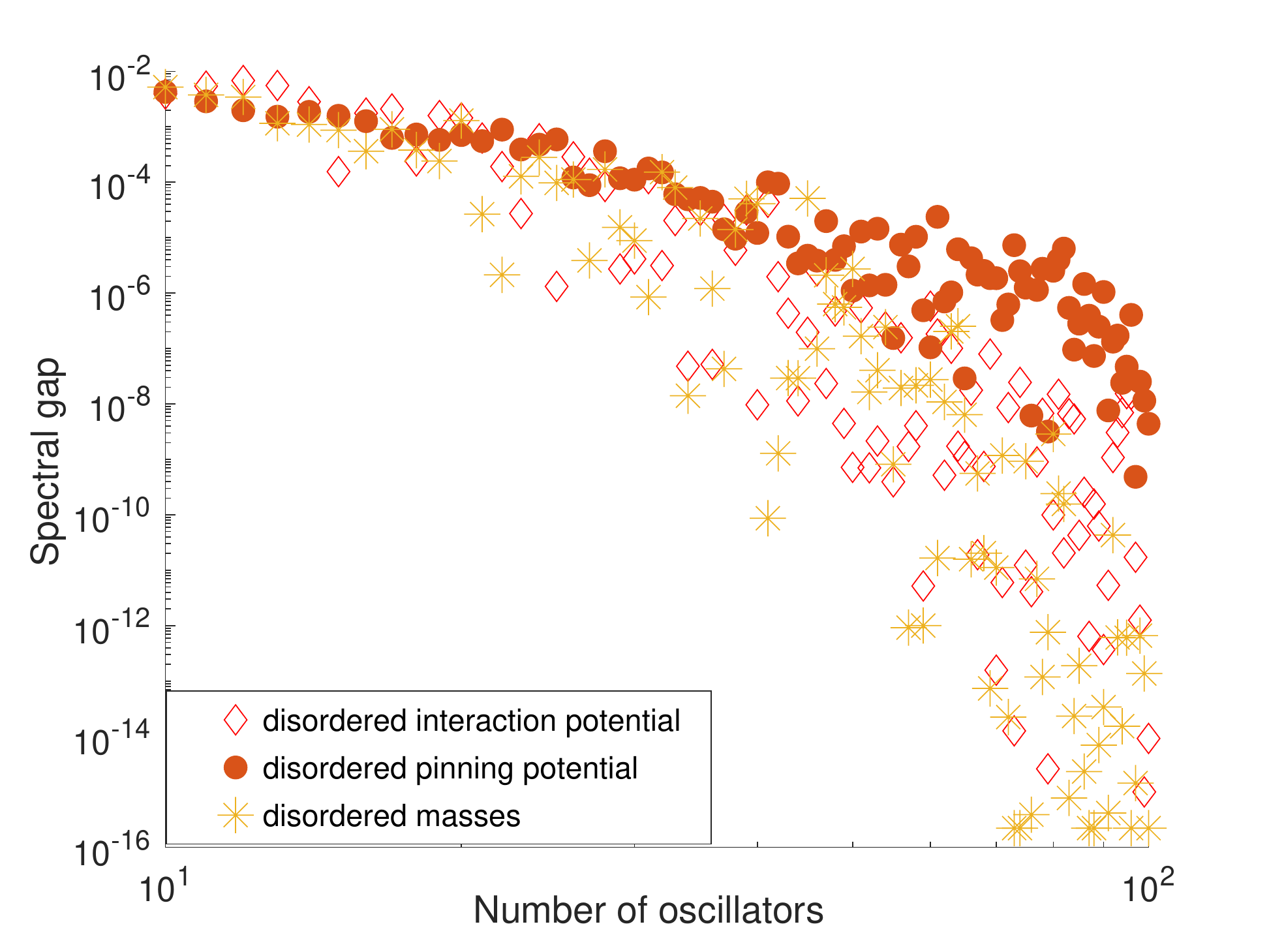}
\caption{Log-log plot of the spectral gap for the one-dimensional chain of oscillators and different types of disorder: Random masses $m_i=\frac{1}{1+X_i}$, random interaction $\xi_i = 1 +X_i$ and random pinning potential $\eta_i=1+X_i$ where $X_i \sim U[0,1]$ are uniform iid.}
\label{fig:clement3}
\end{figure}

The generator of the dynamics is the operator $\mathcal{L}$ given by \eqref{eq:generator}. Considering friction and diffusion at at least one end of the chain, cf. Proposition \ref{prop about reduction of sg}, the spectral gap is still given as \[\lambda_S:=\inf\{ \text{Re}(\lambda): \lambda \in \Spec( \Omega_{[N]^d}) \} .\]
From general results stated in Lemma \ref{lemm:SGI}, studying the spectrum of the matrix $\Omega_{[N]^d}$ is equivalent to studying the points at which the lower dimensional Wigner $\widetilde R_{I}$-matrix is not invertible. The matrix $B_{[N]^d}$, appearing in the matrix entries of $\Omega_{[N]^d}$ \eqref{eq:generator}, is the restriction to a finite domain of size $N^d$ of the one-dimensional \emph{discrete Anderson model}. This is explained below.\\

In the analysis of the disordered case it makes the analysis slightly simpler by labelling particles instead of $[N]^d$ rather by a set 
\[[\pm N]^d:=\{-N, -N+1,...,N-1,N\}^d,\]
i.e. we study the scaling of the spectral gap for $(2N+1)^d$ particles as a function of $N$ where we assume the chain to grow in all directions.

For disorder in the pinning potential, the limiting discrete Schr\"odinger operator $B_{[\infty]^d}$ is the multi-dimensional \emph{discrete Anderson model}:  the discrete Anderson model is a discrete Schr\"odinger operator with random single-site potential introduced by Anderson \cite{AN58} to describe the absence of diffusion in disordered quantum systems. It is the random discrete Schr\"odinger operator on $\ell^2(\mathbb Z^d)$
 $$ H^{[\infty]^d}_{\omega,\lambda}= -\Delta_{[\infty]^d} + \lambda V_{\omega}$$ acting on $\ell^2(\mathbb{Z}^d)$  where $\Delta_{[\infty]^d}$ is the discrete Laplacian on $\ZZ^d$, $\lambda>0$ the coupling constant, and $V_{\omega}$ a random potential $V_{\omega} = \{V_{\omega}(n): n  \in \mathbb{Z}^d\}$ consisting of i.i.d. variables with common probability distribution with, for our purposes, bounded density $\mu$ on $(0,\infty)$. Here, $\omega$ is an element of the product probability space $\Omega = (\text{supp}(\mu) )^{\mathbb{Z}^d} $ endowed with the $\sigma$- algebra generated by the cylinder sets and the product measure $ \mu^{\mathbb{Z}^d}$ consisting of the common probability distribution with compact support. The random potential $V_{\omega}: \mathbb{Z}^d \to \mathbb{R} $ is defined as projections $ \Omega \ni \omega \mapsto V_{\omega}(n)=\omega_n $ for $n \in \mathbb{Z}^d$.  
\\
We also consider $H_{\omega,\lambda}^{[N]^d}$ the restriction to finite domains of size $[N]^d$, of the operator $H^{[\infty]^d}_{\omega,\lambda}$, with Neumann boundary conditions.\\

\begin{figure}
\includegraphics[width=10cm]{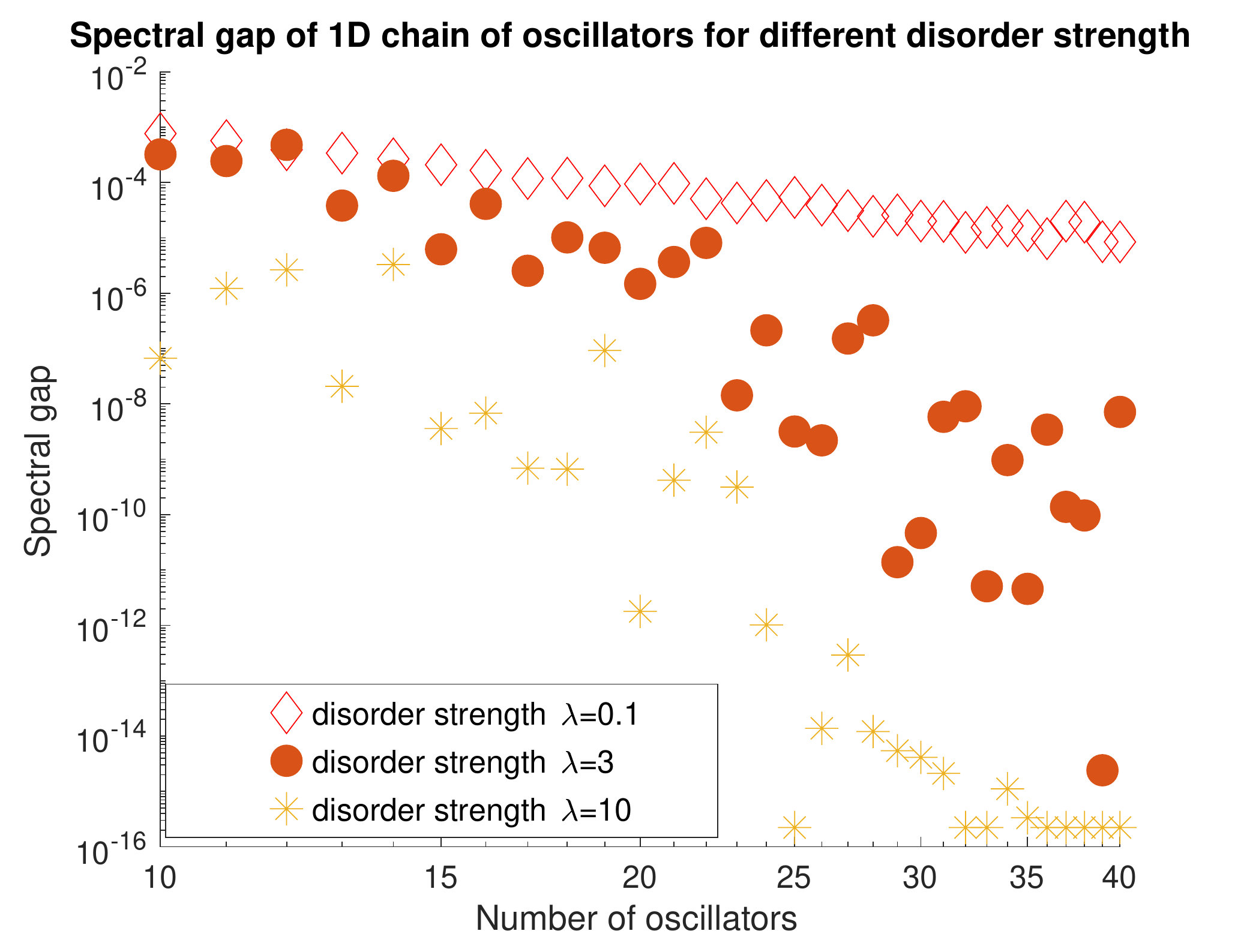}
\caption{Log-log plot of the spectral gap for the one-dimensional chain of oscillators for different disorder strengths when $\eta_i \sim 1+ \lambda U[0,1]$.}
\end{figure}

So the spectral gap of the N-dimensional disordered chain of $[N]^d$ oscillators coupled at two heat baths at different temperatures, as described above, is given by one of the points where the Wigner $\widetilde R_{I}$-matrix is not invertible. Since this lower-dimensional matrix is defined in terms of the eigenvalues and eigenvectors of the block matrix $B_{[N]^d}$, see Lemma \ref{lemm:SGI}, we are interested in the spectrum of $B_{[N]^d}$ which can be identified with $H_{\omega,\lambda}^{[N]^d}$.  More specifically, 
the deterministic discrete Laplacian, restricted to a domain of size $[N]^d$, describes the deterministic two-body interactions, while the random potential represents a disordered in the pinning potential. In the $N \to \infty$-limit, this model reduces to the Anderson model. \\

In one dimension, the Anderson model has a.s\@ dense pure point spectrum with exponentially localized eigenstates \cite{FS83, DK89}. In higher dimensions, $d \ge 2$ this is only known to be true for sufficiently large disorder or low energies and was already shown in \cite{FS83}. From the case of a single impurity we know already that exponentially localized eigenstates should lead to an exponentially fast closing of the spectral gap. However, we have to deal with three additional obstructions in the disordered case: 
\begin{itemize}
\item The eigenvalues of the Anderson model are not uniformly (in $N$) bounded away from each other. 
\item The eigenfunctions of the Anderson model do not obey a rich symmetry as before and can (in general) not be chosen to be even or odd. 
\item We are studying finite approximations $B_{[N]^d}$ rather than the Anderson model $B_{[\infty]^d}=H^{[\infty]^d}_{\omega,\lambda}$ itself.
\end{itemize}
The next Lemma shows that in general eigenvalues will not get any closer than a distance $N^{-2d-2}.$
\begin{lemm} \label{Minami} Let $A_N(s([N]^d))$ be the event that for the $N^d$-size Anderson model $H^{[\pm N]^d}_{\omega}$, there exists an interval of size $s([N]^d)$ that contains (at least) two eigenvalues.
For the choice $s([N]^d) = N^{-2d-2}$ we have $\mathbb P(A_N(s([N]^d)))=0$ for all but finitely many $N.$
 \end{lemm}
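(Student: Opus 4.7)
The statement is a standard consequence of the \emph{Minami estimate} combined with the Borel--Cantelli lemma, so my plan is to collect the three ingredients and stitch them together. The Minami estimate for the discrete Anderson model with bounded single-site density $\rho$ asserts that for every bounded interval $I \subset \mathbb{R}$ and every finite box $\Lambda \subset \mathbb{Z}^d$,
\begin{equation*}
\mathbb{E}\bigl[\operatorname{Tr}(E_{H_\omega^{\Lambda}}(I))(\operatorname{Tr}(E_{H_\omega^{\Lambda}}(I))-1)\bigr] \le C \Vert \rho \Vert_{\infty}^{2}\, |I|^{2}\, |\Lambda|^{2},
\end{equation*}
where $C$ depends only on $d$. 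In particular, by Markov's inequality, the probability that $I$ contains at least two eigenvalues of $H_\omega^{\Lambda}$ is bounded by the same quantity.

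Next, since the spectrum of $H_{\omega,\lambda}^{[\pm N]^d}$ is almost surely contained in a fixed compact interval $J = [0, 4d + \Vert V_\omega \Vert_\infty]$ of length $L$ independent of $N$ (thanks to the compact support of $\rho$), I would cover $J$ by $\lceil 2L/s \rceil$ overlapping intervals of length $s = s([N]^d) = N^{-2d-2}$ with centers equally spaced at distance $s/2$. The key observation is that whenever two eigenvalues of $H_{\omega,\lambda}^{[\pm N]^d}$ lie within distance $s$ of one another, at least one interval of this cover contains both. Applying the Minami estimate to each such interval and taking a union bound therefore gives
\begin{equation*}
\mathbb{P}\bigl(A_N(s)\bigr) \;\le\; \Bigl\lceil \frac{2L}{s}\Bigr\rceil \cdot C \Vert \rho \Vert_\infty^{2}\, s^{2}\, |[\pm N]^d|^{2} \;\lesssim\; s \cdot N^{2d} \;=\; N^{-2d-2} \cdot N^{2d} \;=\; N^{-2}.
\end{equation*}

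Since $\sum_N N^{-2} < \infty$, the Borel--Cantelli lemma yields $\mathbb{P}(A_N(s([N]^d)) \text{ occurs i.o.}) = 0$, which is precisely the statement that only finitely many $N$ realize the bad event, almost surely. The only non-routine step is citing (rather than reproving) the Minami estimate for the multi-dimensional Anderson model with bounded density; this is classical and goes back to Minami's original work for $d=1$ with extensions to arbitrary $d$ under the bounded-density hypothesis on $\rho$. The rest of the argument is the deterministic covering and a summable bound, so I do not expect any further obstacle.
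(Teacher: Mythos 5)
Your proposal follows the same route as the paper: cover the (a.s.\ bounded) spectrum by a family of small overlapping intervals, apply Minami's estimate to each, take a union bound to get a probability of order $N^{2d}s$, choose $s = N^{-2d-2}$ so the bound is $N^{-2}$ and summable, then invoke Borel--Cantelli. The citation of the Minami estimate in its second-factorial-moment form and the observation $\mathbb{P}(X\ge2)\le\mathbb{E}[X(X-1)]$ match the paper's use of the probabilistic form $\mathbb{P}(\#\{\text{eigenvalues in }J\}\ge2)\le\pi^2\Vert\rho\Vert_\infty^2 N^{2d}|J|^2$.

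There is, however, one small but genuine slip in the covering step. You cover $J$ by intervals of length $s$ whose centers are spaced $s/2$ apart, and assert that any two eigenvalues within distance $s$ must lie in a common interval of the cover. That is false: with intervals $I_k=[ks/2-s/2,\,ks/2+s/2]$ of length $s$, a generic subinterval $[a,a+s]$ of length $s$ is contained in \emph{no} $I_k$ (it can only equal one of them), so two eigenvalues at, say, $0.4s$ and $1.3s$ are not captured. To make the covering argument airtight you must use intervals of length $2s$ spaced $s$ apart (so consecutive intervals overlap on a subinterval of length $s$), which is exactly what the paper does. This costs only a constant factor: there are still $\mathcal{O}(1/s)$ intervals and the Minami bound scales as $|I|^2=(2s)^2$, so the final estimate $\mathbb{P}(A_N)\lesssim N^{2d}s=N^{-2}$ is unchanged and the Borel--Cantelli conclusion goes through.
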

 
  \begin{proof}
The spectrum of $H_{\omega,\lambda}^{[N]^d}$ is contained in an interval of order one. Thus, we can cover $\Spec(H_{\omega,\lambda}^{[N]^d})$ by $\mathcal O(1/s([N]^d))$ many intervals $(I_{n}^{[N]^d})_{n \in [ \mathcal O(1/s([N]^d))]}$ of size $2s([N]^d)$ such that the overlap of each interval $I_n^{[N]^d}$ with its nearest neighbors is another interval of size $s([N]^d).$ This construction implies that if there exists an interval of size $s([N]^d)$ that contains two eigenvalues, these two eigenvalues are also contained in one of the $I_n^{[N]^d}.$ \\ 
We will now use Minami's estimate which bounds from above the probability of two eigenvalues of the finite volume operator being close, see \cite[(7), App.\@ 2]{KM06}. More specifically that is $$ \mathbb P \big( \text{number of eigenvalues in a set}\ J\ \text{is } \geq  2 \big) \leq \pi^2 \|\rho\|_{\infty}^2  N^{2d}|J|^2, $$ we write 
\begin{equation}
\begin{split}
\label{eq:prob}
\mathbb P\left(A_N(s([N]^d)) \right) 
&\le \sum_{n \in [ \mathcal O(1/s([N]^d))]} \mathbb P\left( \vert I_n^N \cap \Spec(H^{[N]^d}_{\omega}) \vert \ge 2\right) \\
&\le  \sum_{n \in [ \mathcal O(1/s([N]^d))]}\pi^2 \| \rho \|_{\infty}^2  N^{2d} 4 s([N]^d)^2 \\ &= \mathcal O(N^{2d} s([N]^d))< \infty
\end{split}
\end{equation}
We choose now $s([N]^d) = N^{-2d-2}$, such that by the Borel-Cantelli lemma $A_N(s([N]^d))$ happens at most finitely many times a.s.\@ and otherwise eigenvalues of $H^{[N]^d}_{\lambda,\omega}$ are a.s.\@ at least $N^{-2d-2}$ apart.
\end{proof}

With this Lemma at hand, we can now give the proof of the exponential decay of the spectral gap.

\begin{prop}
\label{prop:Anders}
Consider the chain of oscillators with equal masses, unit interaction strength, and non-zero friction at at least one end of the chain. In addition, we assume that there is always at least one particle experiencing friction at the boundary and that the friction of particles is uniformly bounded in $N$. Let the pinning constants be iid $\eta \sim \rho \in C_c(0,\infty).$ Then the spectral gap of the chain of oscillators decays, for almost every realization of the disorder in the pinning potential, exponentially fast\footnote{The decay of the spectral gap will in general depend on the disorder but is a.s. exponentially fast.}.
\end{prop}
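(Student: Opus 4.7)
The plan is to combine Anderson-type localization of the discrete Schr\"odinger operator $B_{[\pm N]^d}=-\Delta_{[\pm N]^d}+V_\omega$ with the matrix-valued Rouch\'e argument already carried out in the proof of Proposition~\ref{prop:impure}. The Wigner reduction of Lemma~\ref{lemm:SGI} is identical to that used there; the only disorder-specific work is to exhibit, almost surely, a finite-volume eigenstate of $B_{[\pm N]^d}$ that is exponentially small on the friction set $I\subseteq\partial[\pm N]^d$ and whose eigenvalue is sufficiently isolated.

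First, I would invoke the Anderson localization result for the infinite-volume operator $H^{[\pm\infty]^d}_\omega$: in $d=1$ unconditionally \cite{FS83,DK89}, and at band edges or in the large-disorder regime in higher dimensions, $H^{[\pm\infty]^d}_\omega$ has pure point spectrum with exponentially localized eigenfunctions. Pick such an eigenpair $(\lambda_\infty,u_\omega)$ with localization center $m_0=m_0(\omega)\in\ZZ^d$, so that
\[
|u_\omega(n)|=\mathcal O(e^{-D|n-m_0|})\quad\text{for all }n\in\ZZ^d.
\]
By translation invariance of the disorder one may, for all sufficiently large $N$, shift coordinates so that $m_0$ lies deep in the interior of $[\pm N]^d$, in particular at distance $\ge N/2$ from $\partial[\pm N]^d$.

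Second, I would apply Lemma~\ref{lemm:jess} with this $m_0$ and the window $I_{m_0}^{[\pm N]^d}$ to produce an eigenvector $\psi_N$ of $B_{[\pm N]^d}$ whose eigenvalue $\widetilde\lambda_N$ is closest to $\lambda_\infty$ and satisfies
\[
\Vert \psi_N-\widetilde\varphi_{I_{m_0}^{[\pm N]^d}}\Vert=\mathcal O(e^{-DN/2}\alpha_N^{-1}),
\]
where $\alpha_N$ is the distance from $\widetilde\lambda_N$ to the rest of $\Spec(B_{[\pm N]^d})$. The Minami bound of Lemma~\ref{Minami} ensures that, almost surely and for all but finitely many $N$, one may take $\alpha_N\ge N^{-2d-2}$. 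Since $N^{2d+2}e^{-DN/2}=\mathcal O(e^{-cN})$, the finite-volume eigenvector $\psi_N$ inherits the exponential decay of $u_\omega$ on the friction set:
\[
\max_{i\in I\subset\partial[\pm N]^d}|\psi_N(i)|^2=\mathcal O(e^{-cN}).
\]

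Third, I would run the matrix-valued Rouch\'e argument of Proposition~\ref{prop:impure} verbatim, with the ground state there replaced by $(\widetilde\lambda_N,\psi_N)$ and the spectral shift $\mu_j=\lambda_j-\widetilde\lambda_N$. The rank-$|I|$ term $\Theta_N$ built from $V_1^\pm=\tfrac{1}{\sqrt 2}(\psi_N,\pm i\psi_N)^T$ has operator norm $\Vert\Theta_N\Vert\lesssim |I|\max_{i\in I}|\psi_N(i)|^2=\mathcal O(N^{d-1}e^{-cN})$. The sums $(\text{II})$ and $(\text{III})$ in $G(\lambda)$ are controlled by $\mu_j^{-1}=\mathcal O(N^{2d+2})$ from Minami multiplied by $|\lambda|^2$ or $|\lambda|^3$, so on $\partial B(0,r_N)$ with $r_N=Ce^{-c'N}$ (and $c'<c$) they are polynomially-amplified but still exponentially small, hence dominated by $\Vert F(\lambda)v\Vert\ge r_N\Vert v\Vert$. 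Rouch\'e (Lemma~\ref{lemm:Rouch}) then yields a zero of $\det(F+G)$ inside $B(0,r_N)$, i.e.\ an eigenvalue of $\Omega_{[\pm N]^d}$ at distance $\mathcal O(e^{-c'N})$ from $\widetilde\lambda_N$, giving the claimed exponential closing of the spectral gap.

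The main obstacle is Step~1 in dimensions $d\ge 2$ outside the large-disorder or band-edge regime: complete localization is not known, so one must select the eigenpair $(\lambda_\infty,u_\omega)$ in a localized spectral window and then verify, via Wegner/Minami-type counting, that a positive density of finite-volume eigenstates in this window has localization centers in the bulk of $[\pm N]^d$ rather than pinned to the boundary. Once this localized interior eigenstate is produced, Steps~2--3 are routine applications of Lemmas~\ref{lemm:jess} and~\ref{Minami} together with the Rouch\'e scheme already developed for the impurity case.
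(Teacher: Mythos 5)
Your proposal follows essentially the same route as the paper's proof: invoke Anderson localization for $B_{[\infty]^d}$ to produce an exponentially localized infinite-volume eigenstate, transfer it to a finite-volume eigenvector of $B_{[\pm N]^d}$ via Lemma~\ref{lemm:jess} using the Minami-type level-spacing bound of Lemma~\ref{Minami} to control $\alpha_N$, and then run the matrix-valued Rouch\'e argument from Proposition~\ref{prop:impure} with this eigenstate replacing the impurity groundstate. The only minor cosmetic differences from the paper are your explicit handling of the localization center $m_0(\omega)$ by translation (the paper simply takes $m_0=0$ after centering the box at the origin) and a slightly different exponential choice of $r_N$, neither of which changes the argument; your remark about $d\ge 2$ outside the large-disorder/band-edge regime is a fair caveat that applies equally to the paper's invocation of \cite{FS83,DK89}.
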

\begin{proof}
For almost every realization of disorder we can find by general results on the Anderson model  \cite{FS83, DK89} an eigenfunction $ \varphi$ of the operator $B_{[\infty]^d}$, corresponding to an eigenvalue $\lambda_{\infty}$ such that 
$$ \sup_{i; \in \Vert i \Vert_{\infty}=N} \vert \varphi( i ) \vert = \mathcal O(e^{-DN})\ \text{and}\ \sum_{m \notin [\pm N]^d} \vert \varphi (m) \vert^2 = \mathcal{O}(e^{-DN}).$$

\medskip

By using Lemma \ref{lemm:jess} with $m_0=0$ and Lemma \ref{Minami}  it follows that for all but finitely many $N$ the distance between any two eigenvalues is at least $\alpha_N:=N^{-{2(d+1)}}$ and we find an eigenvector $\widetilde{\psi}_{[N]^d}$ to $B_{[N]^d}$ with eigenvalue $\widetilde{\lambda}_N$ that approximates $\varphi$ with eigenvalue $\lambda_{\infty}$. Thus, for all but finitely many $N$ 
\[ \Vert \widetilde{\psi}_{[N]^d}- \widetilde \varphi \vert _{[N]^d}  \Vert = \mathcal O(e^{-DN}N^{2(d+1)}). \]

\medskip

As before, we shall restrict us again to the case $d= 2$ for simplicity and study solutions of the equivalent problem equation \begin{align} \label{eq: reduced prob2} \operatorname{det}(F(\lambda) + G(\lambda) )=0\end{align} in terms of the vectors $V_{j}^{\pm}=\frac{1}{\sqrt{2}}(v_{j}, \pm i v_{j})^T$ and $\mu_{j}= \lambda_{j}- \lambda_{1}$, where $v_j$ are the eigenvectors of the Schr\"odinger operator $B_{[N]^d}$ with eigenvalue $\lambda_j$ and $\lambda_1$ being the eigenvalue associated with $\widetilde{\psi}_{[N]^d}$.  The matrices $F(\lambda), G(\lambda)$ are then defined as follows 
\begin{equation} 
\label{eq:F2}
F(\lambda):= -i \lambda - \lambda \sum_{j=2}^{N^2} \sum_{\pm}\sum_{\substack{i_1,i_2\in I}} \frac{ \sqrt{\gamma_{i_1}\gamma_{i_2}} }{\mu_{j}} \langle V_{j}^{\pm}, e_{i_1}^{2N^2}\rangle \langle e_{i_2}^{2N^2} ,V_{j}^{\pm}\rangle e_{i_1}^{\vert I \vert} \otimes e_{i_2}^{\vert I \vert}
\end{equation} and 
\begin{equation}
\begin{split}
G(\lambda)&:=\underbrace{\sum_{\pm}\sum_{\substack{i_1,i_2 \in I}}  \sqrt{\gamma_{i_1}\gamma_{i_2}}  \langle V_{1}^{\pm}, e_{i_1}^{2N^2}\rangle \langle e_{i_2}^{2N^2} ,V_{1}^{\pm}\rangle e_{i_1}^{\vert I \vert} \otimes e_{i_2}^{\vert I \vert}}_{=:(\text{I})} \\
 &\underbrace{ \ - \lambda^2 \sum_{j=2}^{N^2} \sum_{\pm}\sum_{\substack{i_1,i_2 \in I}} \frac{\sqrt{\gamma_{i_1}\gamma_{i_2}} }{\mu_{j}^2} \langle V_{j}^{\pm}, e_{i_1}^{2N^2}\rangle \langle e_{i_2}^{2N^2} ,V_{j}^{\pm}\rangle e_{i_1}^{\vert I \vert} \otimes e_{i_2}^{\vert I \vert}}_{=:(\text{II})}
\\
  & \underbrace{- \lambda^3 \sum_{j=2}^{N^2} \sum_{\pm}\sum_{\substack{i_1,i_2 \in I}} \frac{\sqrt{\gamma_{i_1}\gamma_{i_2}}}{\mu_{j}^2(\mu_{j}-\lambda)} \langle V_{j}^{\pm}, e_{i_1}^{2N^2}\rangle \langle e_{i_2}^{2N^2} ,V_{j}^{\pm}\rangle e_{i_1}^{\vert I \vert} \otimes e_{i_2}^{\vert I \vert}}_{=:(\text{III})}
\end{split}
\end{equation}
so that a solution to \eqref{eq: reduced prob2} corresponds to the desired eigenvalue. Before we fix a ball $K=B(0,r_N)$, we again want to find an upper bound for the $\| \Theta_N\|$, where 
\begin{equation}
\label{eq:ThetaN2}
\Theta_N:= \sum_{\pm}\sum_{\substack{i_1,i_2 \in I}}  \sqrt{\gamma_{i_1}\gamma_{i_2}}    \langle V_{1}^{\pm}, e_{i_1}^{2N^2}\rangle \langle e_{i_2}^{2N^2} ,V_{1}^{\pm}\rangle e_{i_1}^{2N} \otimes e_{i_2}^{2N} 
\end{equation}
 i.e. is the first term, $(\text{I})$, of $G$. 
Using exponential decay of the eigenstate $V_1^{\pm}$ it follows that for $c>0$ we have 
\[ \Vert \Theta_N \Vert =\mathcal O(N e^{-cN}).\]

We now fix a ball $K=B(0,r_N)$ and choose the radius $r_N:= \mathcal O(Ne^{-cN})$.  
Therefore it suffices to find a root of \eqref{eq: reduced prob} inside the ball $K$ and conclude the existence of an eigenvalue by Rouch\'e's theorem. We easily see that for all $v \neq 0$, and  $\lambda \in \partial K$,   $\|F(\lambda) v \| \ge |\lambda| \|v \| = r_N \|v\|$ since the second term of the right hand side of \eqref{eq:F2} is symmetric. On the other hand, by Lemma \ref{Minami} we can estimate $\mu_j \ge cN^{-2(d+1)}$ for all but finitely many $N$ and some $c>0$. Using this lower bound, we have for $\lambda \in \partial K$
\begin{align*}
\Vert (\text{II}) \Vert = \mathcal O(N^{2(d+2)}e^{-2cN}) \text{ and }\Vert (\text{III}) \Vert = \mathcal O(N^{2(d+2)}e^{-3cN}).
\end{align*}

Thus, we have $\| F(\lambda)v \| > \| G(\lambda) v\|$ on $\partial K$, for all $v \neq 0$ for almost all sufficiently large $N$. Since $F(\lambda)$ is not invertible exactly at $0$ inside $K$, we have from Lemma \ref{lemm:Rouch} that there is one point inside $K$ so that $F(\lambda)+ G(\lambda) $ is not invertible or in other words there is one root of $R_I(\lambda)-iu=0$ with $\lambda \lesssim N e^{-cN}$. 
\end{proof}

\begin{appendix}
\section{Matrix-valued Rouch\'e's theorem}

\begin{lemm}[Matrix-valued Rouch\'e's theorem]
\label{lemm:Rouch}
Let $A,B: K \rightarrow \mathbb C^{n \times n}$ be two holomorphic functions inside some region $K$ with $\Vert B(z)v \Vert <\Vert A(z)v \Vert$ for all $v \neq 0$ and $z \in \partial K.$ Then, both $A$ and $A+B$ are invertible at an equal number of points inside $K.$
\end{lemm}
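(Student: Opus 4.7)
The plan is to reduce the matrix-valued statement to the classical scalar Rouch\'e theorem applied to the determinants $\det A(z)$ and $\det(A+B)(z)$, which are holomorphic scalar functions on $K$. The zeros of $\det A$, counted with multiplicity, are precisely the points where $A$ fails to be invertible (with the natural algebraic multiplicity), and likewise for $A+B$; so proving the determinants have the same number of zeros inside $K$ proves the claim.

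The first observation is that the hypothesis forces $A(z)$, $B(z)$, and more generally $A(z)+tB(z)$ for every $t\in[0,1]$, to be invertible on $\partial K$. Indeed, if $(A(z)+tB(z))v=0$ for some $v\neq 0$ and $z\in\partial K$, then $\|A(z)v\|=t\|B(z)v\|\le \|B(z)v\|<\|A(z)v\|$, a contradiction. In particular $\det(A(z)+tB(z))\neq 0$ for all $z\in\partial K$ and all $t\in[0,1]$.

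I would then carry out a homotopy/argument-principle step: define
\[ N(t):=\frac{1}{2\pi i}\oint_{\partial K}\frac{\partial_z\det(A(z)+tB(z))}{\det(A(z)+tB(z))}\,dz,\qquad t\in[0,1], \]
which counts, by the argument principle, the zeros of $z\mapsto\det(A(z)+tB(z))$ inside $K$ with multiplicity. The integrand is jointly continuous in $(z,t)\in\partial K\times[0,1]$ because the denominator stays bounded away from zero by the previous paragraph, so $N(t)$ is continuous in $t$. Since $N(t)\in\mathbb Z$ for every $t$, it is constant on $[0,1]$, and in particular $N(0)=N(1)$, which is exactly the conclusion.

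The only potentially delicate step is the uniform non-vanishing of $\det(A+tB)$ on $\partial K$ with $t$ varying, but this follows directly from the strict pointwise inequality $\|B(z)v\|<\|A(z)v\|$ together with compactness of $\partial K$ and of the unit sphere in $\mathbb C^n$, which upgrades the strict inequality to a uniform gap $\|A(z)v\|-\|B(z)v\|\ge c\|v\|$ for some $c>0$. An equivalent route, if one prefers to avoid the homotopy, is to factor $\det(A+B)=\det(A)\det(I+A^{-1}B)$ on $\partial K$ and apply scalar Rouch\'e to $f=\det A$ and $f+g=\det(A+B)$ after checking $|\det(I+A^{-1}B)-1|<1$ there; but the homotopy approach is cleaner since it avoids quantitative estimates on $\det(I+A^{-1}B)$.
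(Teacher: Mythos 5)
Your proof is correct and follows essentially the same route as the paper: both hinge on the homotopy integral $N(t)=\tfrac{1}{2\pi i}\oint_{\partial K}\partial_z\log\det\bigl(A(z)+tB(z)\bigr)\,dz$, which is integer-valued and continuous in $t$, hence constant, so $N(0)=N(1)$. You additionally spell out the step the paper leaves implicit — that $\det(A(z)+tB(z))\neq 0$ on $\partial K$ for every $t\in[0,1]$ (with a uniform lower bound by compactness) — which is precisely what makes $N(t)$ well-defined and continuous.
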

\begin{proof}
By the argument principle the number of singular points of $A(z)+tB(z)$ in $K$ with $t \in [0,1]$ is given by  
\[N(t):=\frac{1}{2\pi i}  \int_{\partial K }  \partial_z \log( \operatorname{det}(A(z)+tB(z)) ) \ dz\]
and independent of $t$ by continuity of $t \mapsto N(t).$
\end{proof}
\end{appendix}
\smallsection{Acknowledgements} 
This work was supported by the EPSRC grant EP/L016516/1 for the University of Cambridge CDT, the CCA. The authors are grateful to Cl\'ement Mouhot for several discussions. We also thank Pierre Monmarch\'e for sharing some of his insights with us. 

\bibliographystyle{alpha}
\bibliography{bibliography}

\end{document}